\pgfplotsset{compat=newest}
\tikzset{
	declare function={
		normcdf(\x,\m,\s)=1/(1 + exp(-0.07056*((\x-\m)/\s)^3 - 1.5976*(\x-\m)/\s));
	}
}
\tikzstyle{dashed}=[dash pattern=on 8pt off 3pt]
\newtheorem{corollary}{Corollary}
\newtheorem{lemma}{Lemma}
\newtheorem{proposition}{Proposition}
\theoremstyle{definition}
\DeclareMathOperator\supp{supp}
\begin{document}
\setlength{\footnotesep}{0.75\baselineskip}
\setlength{\footskip}{1.25\normalbaselineskip}
	
\begin{titlepage}
	\title{Persuasion with Ambiguous Receiver Preferences}
	\author{\larger \textsc{Eitan Sapiro-Gheiler\\MIT\\ \\ \today}}
	\thanks{Email: eitans@mit.edu
		\\ \indent I thank Drew Fudenberg, Stephen Morris, Frank Schillbach, Dmitry Taubinsky, participants in MIT Theory Lunch, participants in MIT 14.192, and especially Alexander Wolitzky for helpful discussions and comments. I also thank two anonymous referees for their feedback and suggestions. This material is based upon work supported by the National Science Foundation Graduate Research Fellowship under Grant No. 1745302.}
	
	\begin{abstract}
		
		I describe a Bayesian persuasion problem where Receiver has a private type representing a cutoff for choosing Sender's preferred action, and Sender has maxmin preferences over all Receiver type distributions with known mean and bounds. This problem can be represented as a zero-sum game where Sender chooses a distribution of posterior mean beliefs that is a mean-preserving contraction of the prior over states, and an adversarial Nature chooses a Receiver type distribution with the known mean; the player with the higher realization from their chosen distribution wins. I formalize the connection between maxmin persuasion and similar games used to model political spending, all-pay auctions, and competitive persuasion. In both a standard binary-state setting and a new continuous-state setting, Sender optimally linearizes the prior distribution over states to create a distribution of posterior means that is uniform on a known interval with an atom at the lower bound of its support.
		
		\ \\
		
		\noindent\emph{JEL\ Classification:}\ D81, D82, D83 \newline
		
		\noindent\emph{Keywords:} Bayesian persuasion, maxmin utility, mean-preserving contraction
	\end{abstract}
	
	\clearpage
	
	\begingroup
	\let\MakeUppercase\relax
	\maketitle
	\thispagestyle{empty}
	\endgroup
\end{titlepage}

\newpage

\section{Introduction}
\label{sec:intro}
Consider a politician who is deciding how to disclose information about the cost-effectiveness of a new welfare program, but does not know how much spending voters will support. All voters have the same prior beliefs, but some will only approve if they expect the program to provide a high level of benefits per dollar spent, while others are willing to support even a moderately inefficient government outlay. Rather than imposing a prior distribution over preferences, the politician wishes to be robust to the worst-case distribution she may face given a known threshold for the average voter. In this setting, what disclosure rule maximizes the share of voters who approve of the welfare program after taking into account the politician's message? How do the optimal rule and the politician's utility differ from the case where the politician faces a known distribution of citizen preferences?

I address and generalize those questions through a model of Bayesian persuasion \citep{KG2011}, where a Sender commits to a message distribution in each state of the world and a Receiver uses Bayesian updating to form a posterior belief about the state based on the message structure. To represent Receiver's preferences, I use private types denoting the cutoff above which Receiver chooses Sender's preferred action. Sender knows the mean and support of Receiver types, and has maxmin preferences \citep{GS1989} over all Receiver type distributions satisfying those constraints. Regardless of the true state of the world, Sender maximizes the probability of inducing the favorable action. This model captures situations where all Receiver types process information in the same way, but may have different preferences over outcomes. In addition to the political spending example described above, a model of this style also applies to a variety of other situations, such as disclosing information about product quality (if potential customers share a prior belief about quality, but may be more or less picky about when they buy) or screening job candidates (if all firms have a common prior about candidate quality and see the same resum\'e, but have different thresholds for hiring).

This persuasion model can be reinterpreted as a zero-sum game between Sender and an adversarial Nature. Following the Bayesian persuasion literature, I can allow Sender to directly choose any distribution of posterior mean beliefs about the state that is a mean-preserving contraction of the prior. Then, Nature chooses a Receiver type distribution with the appropriate mean and domain; this choice is equivalent to choosing a mean-preserving contraction of a Receiver type distribution with support $\left\{0, 1\right\}$. The player with the higher realization from their chosen distribution wins the game. Such \textit{mean-preserving contraction games} (henceforth MPC games), albeit with simultaneous moves, have been studied in prior literature outside of the persuasion context (for example by \citealt{M1993}), as well as being used to represent competition between many Senders persuading a single Receiver (as in \citealt{BC2015}). Many of those works emphasize the role of uniform distributions, which induce indifference among many possible strategies for the opposing players. Adapting these results to my setting, I show that in a binary-state setting where the probability of the high state is weakly less than $1/2$, Sender's unique optimal posterior distribution places an atom at 0 and is uniform on an interval $[0, c]$ for $c \leq 1$. In doing so, I formalize the connection between maxmin persuasion and MPC games and show that the sequential timing of the maxmin persuasion game does not affect Sender's optimal distribution but the tie-breaking rule sometimes does. I also use a geometric approach based on the concavification argument of \citet{KG2011} to show that for any finite number of states of the world, or when the state is continuous and unimodal, a similar distribution\textemdash uniform on $[a, b] \subset (0, 1)$ with an atom at $a$\textemdash is one of many optimal distributions for Sender. The continuous-state setting is a novel specification of both the MPC game and the maxmin persuasion problem.

\section{Related Literature}
\label{sec:rl}
This work builds on the Bayesian persuasion problem of \citet{KG2011}, and adopts a similar approach to existing work in robust mechanism design. In addition, my model resembles a class of games I call MPC games, which include a continuous version of the Colonel Blotto game as well as competitive Bayesian persuasion by multiple Senders. I discuss the first two topics here and postpone discussion of the third to Section \ref{subsec:mpc}, after presenting the formal model.

In the baseline Bayesian persuasion model of \citet{KG2011}, Receiver has no private information. Subsequent literature in this area is surveyed in detail by \citet{K2019} and \citet{BM2019}, so I focus on the two works most directly related to the model I propose, \citet{KMZL2017} and \citet{HW2020}.\footnote{Other works use maxmin preferences in Bayesian persuasion settings, but are much more distinct. In \citet{K2020}, possible Receiver type distributions are distortions of a ``reference distribution;" in \citet{DP2020}, there is full ambiguity about Receiver's posterior belief; and in \citet{LR2017} and \citet{DLL2019}, Receiver has maxmin preferences.} The former has an interval state space, Receiver types that enter payoffs linearly, and a binary action, as in my model; however, it endows Sender with a prior distribution over Receiver types. If that prior distribution is log-concave, then the optimal distribution for Sender can be generated by upper censorship; the resulting distribution of posterior means is essentially a truncated version of the prior where states in some interval $[\alpha, 1]$ are replaced with an atom at $\beta \in (\alpha, 1)$. In the continuous-state version of my model, linearizing the prior rather than censoring high states helps Sender avoid facing a tailored Receiver type distribution in response. To make sure this strategy respects Bayes-plausibility, Sender may use a truncated uniform distribution with interior support.

The model of \citet{HW2020} is most similar to the one considered here: it is a binary-action model where Sender has maxmin preferences over Receiver types and maximizes the probability of inducing the favorable action. However, Receiver types represent an ambiguous posterior about a binary state of the world rather than a payoff-relevant characteristic which does not directly interact with beliefs about the state. This model captures substantively different applications\textemdash e.g., voters with common ideology who privately read outside news sources before listening to a politician's speech, rather than the equally-informed voters with different ideological positions in my model. Working with belief-independent Receiver types also means that I am able to characterize Receiver's posterior distribution and thus provide a sharp testable prediction\textemdash all posteriors in a known interior interval are equally likely. Methodologically, because my formulation features a simpler interaction between Receiver's type and Sender's signal, I am able to extend my approach to a continuous-state case.

A literature in robust mechanism design has also used moment conditions alongside maxmin preferences. \citet{W2016} considers a bilateral trade model where each agent has a valuation in $[0, 1]$ and knows only the mean of the other agent's type distribution. In that model, agents' worst-case beliefs have binary support. Here, it is similarly possible to define a binary-support worst-case Receiver type distribution, but Sender's desire to induce indifference between many such distributions means the optimal posterior distribution has interval support. In \citet{CLMH2019}, a principal with maxmin preferences offers a surplus-maximizing contract to a privately informed agent. Similar to my model, the agent's type distribution has known mean and support $[0, 1]$. As in \citet{HW2020} and my work, the optimal mechanism for the principal induces a payoff that is piecewise linear in the agent's type. Finally, \citet{C2018} considers a setting where a seller with maxmin preferences faces an unknown distribution of buyer valuations. The seller knows the first $N - 1$ moments of the valuation distribution and an upper bound on the $N$th moment. Similar to the concavification argument of \citet{KG2011}, optimal transfers are given by the non-negative monotonic hull of a degree-$N$ polynomial.

\section{Model}
\label{sec:model}
\subsection{Setup and Preferences}
\label{subsec:pref}
There is one Sender (she) and one Receiver (he).\footnote{The presence of one Receiver with an unknown type may also be interpreted as a population of Receivers, each with a known type, with which Sender communicates publicly.} Both players share a common prior $F \in \Delta([0, 1])$ about the state of the world $\omega \in [0, 1]$, with $\mathbb{E}_F[\omega] = \pi \in (0, 1)$. Only Receiver knows his private type $r \in [0, 1]$, but the mean Receiver type $r^* \in (0, 1)$ is common knowledge. In Section \ref{sec:ext}, I describe potential relaxations of these assumptions which endow Sender with less precise information about states or Receiver types. 

Sender considers potential Receiver type distributions $T$ in the set 
\begin{equation*}
	\mathcal{T} = \left\{\text{cdf T over } [0, 1] \, \bigg| \, \int r \, dT(r) = r^* \right\}.
\end{equation*}
I restrict Sender to the standard Bayesian persuasion tool of committing ex-ante to a Blackwell experiment, i.e., a state-dependent signal distribution, and in particular do not allow her to elicit Receiver's type in order to capture the public-communication interpretation of this model. After Sender communicates, Receiver chooses a binary action $a \in \left\{0, 1\right\}$ whose utility depends on the state and on Receiver's type:
\begin{equation*}
	u_R(a, \omega, r) = a \, (\omega - r).
\end{equation*}
Thus when Receiver believes $\mathbb{E}[\omega \, | \, \text{Sender's message}] > r$, he strictly prefers $a = 1$, and when the opposite inequality holds he strictly prefers $a = 0$.\footnote{Receiver's choice when indifferent will not affect equilibrium outcomes, but will affect Sender's equilibrium strategy. I discuss this tie-breaking issue in Section \ref{subsec:tb}.} The explicit functional form used here is for ease of exposition only. Whenever Receiver's utility is a linear function of the state, his action depends only on the mean of his posterior belief about the state, and my results still hold (under an appropriate re-normalization of the interval of Receiver types).

Sender's goal is to maximize the probability of inducing the high action $a = 1$ independent of the true state $\omega$ and true Receiver type $r$:
\begin{equation*}
	u_S(a, \omega, r) = a.
\end{equation*}

\subsection{The Maxmin Persuasion Problem}
\label{subsec:maxmin}
Since Receiver's choice of action depends only on the mean $q$ of the posterior belief distribution, I can follow \citet{B1953} and directly consider Sender choosing a distribution of posterior means $G$ such that $G$ is a mean-preserving contraction of the prior distribution $F$. The set of feasible distributions of posterior means is therefore
\begin{equation*}
	\begin{split}
		\mathcal{G} =  \bigg\{ \text{cdf } G \text { over } [0, 1] \, \bigg| & \int_0^x G(q) \, dq \leq \int_0^x F(q) \, dq \hspace{0.5em} \forall \hspace{0.5em} x \in [0, 1]
		\\& \text{and } \int_0^1 G(q) \, dq = \int_0^1 F(q) \, dq \bigg\}.
	\end{split}
\end{equation*}
I follow the literature in referring to this constraint as Bayes-plausibility. Note that when $\supp(F) = \left\{0, 1\right\}$, a case which I refer to as \textit{binary support}, any posterior distribution that satisfies the equality at $x = 1$ satisfies the inequality for all $x \in [0, 1)$.

Using this formulation and Receiver's preferences, I rewrite Sender's utility as
\begin{equation*}
	u_S(q, r) =  \textbf{1}(q > r),
\end{equation*}
where I assume that an indifferent Receiver chooses Sender's less-preferred action, $a = 0$. Sender's full optimization problem is therefore
\begin{equation}
	\label{eq:gen}
	\begin{split}
		\max_{G \in \mathcal{G}} & \left\{\min_{T \in \mathcal{T}} \int \int \textbf{1}(q > r) \, dG(q) \, dT(r) \right\}.
	\end{split}
\end{equation}
I state the optimization problem using a maximum and minimum, rather than supremum and infimum; the tie-breaking rule for indifferent Receivers ensures that the maximum and minimum are well-defined (see the proof of Lemma \ref{lm:saddle} in Appendix \hyperref[subsec:a1]{A1} for details). The main difference from standard Bayesian persuasion with private information is the presence of an endogenously-determined Receiver type distribution. 

\subsection{MPC Games}
\label{subsec:mpc}
I characterize the solution to the maxmin persuasion problem by reframing Sender's maxmin preferences as a zero-sum game, in which Sender designs a distribution of posterior means and then Nature adversarially designs a type distribution. More generally, my persuasion model can be viewed as a special case of a more general game which I call an \textit{MPC game}. In this game, players $1,...,N$ simultaneously\footnote{Simultaneous choice is a feature of most prior literature on games of this kind; I discuss in Section \ref{subsec:tb} why the switch from sequential to simultaneous moves does not affect the result.} choose distributions $G_1,...,G_N$ that are mean-preserving contractions of corresponding distributions $F_1,...,F_N$. A realization $x_i$ is drawn from each distribution $G_i$ to produce a vector of realizations $x = (x_1,...,x_N)$. A prize allocation rule $A(x): \mathbb{R}^N \rightarrow \mathbb{R}^N$ determines each player's payoff as a function of the realizations.\footnote{This setup may remind the reader of the literature on contests. In recent work, \citet{AT2023} also combines information design and contests, but focuses on a standard cost-of-effort setup for contest participants with a third party providing information about value profiles. In contrast, I represent an information design problem as a contest with Nature as a participant.} One simple prize allocation rule is to assign the player with the highest realization a payoff of 1 and all other players a payoff of 0; such a rule fits my model, where Sender gets a payoff of 1 if and only if the realized posterior exceeds the realized Receiver type. Various choices of $F_i$ have been paired with this prize allocation rule in prior literature. In particular, as noted in the previous section, if a cdf $F_i$ over a positive interval $[0, c]$ or over $\mathbb{R}_+$ has binary support, then any $G_i$ with the same domain and mean as $F_i$ is a mean-preserving contraction of $F_i$. MPC games where the $F_i$ have binary support and domain $\mathbb{R}_+$ have been used to describe campaign spending or distribution of revenues by politicians \citep{M1993, CG1998, SP2006}. Changing the domain to a finite interval $[0, c]$ has been used to model all-pay auctions with complete information \citep{BKV1996, H2015}. Further specifying the domain as $[0, 1]$ and potentially allowing $F_i$ to have non-binary support can represent competition between different Senders attempting to persuade a single Receiver \citep{BC2015, HKB2019, AK2020}. However, this work is the first to explicitly use the connection between MPC games and persuasion by a single Sender with maxmin preferences. 

\section{The Binary-State Setting}
\label{sec:bin}
In this section, I fully characterize Sender's optimal distribution when the prior $F$ has binary support, so that a distribution of posterior means is the same as a posterior distribution (I use the latter expression for simplicity). This case is equivalent to a 2-player MPC game where $F_1$ equals $F$, with domain $[0, 1]$, binary support, and mean $\pi$; $F_2$, which represents Nature's mean constraint, has domain $[0, \infty)$, binary support, and mean $r^*$.\footnote{This equivalence also holds when $F$ does not have binary support. However, I do not use it in characterizing Sender's optimal distribution with a continuous-support prior in Section \ref{sec:cont}.} The solution to the maxmin persuasion problem of Equation (\ref{eq:gen}), as well as Sender's optimal posterior distribution under slight variations of my model, follows from extending earlier results about MPC games. Proposition \ref{prop:bin} shows that when the prior $\pi$ is weakly less than $1/2$, Sender uniquely selects an upper-truncated uniform distribution with an atom at $q = 0$. Under favorable tie-breaking\textemdash where an indifferent Receiver chooses $a = 1$ rather than $a = 0$\textemdash Corollary \ref{cor:tbGood} in Appendix \hyperref[subsec:a4]{A4} shows that Sender may modify this solution by also placing an atom at $q = 1$.

\subsection{Maxmin Persuasion as an MPC Game}
\label{subsec:tb}
Equivalence of the maxmin persuasion problem in Equation (\ref{eq:gen}) and the MPC game specified above rests on two results. The first (Lemma \ref{lm:saddle} in Appendix \hyperref[subsec:a1]{A1}) is that Nash equilibrium strategies for Sender in the MPC game are equivalent to optimal posterior distributions in the sequential-move game implied by Sender's maxmin preferences. This result follows from a minimax theorem in \citet{F1953}, which shows that Sender's maxmin and minmax utilities are equal, and therefore equal to the utility from the MPC game. 

The second result (Lemma \ref{lm:tb} in Appendix \hyperref[subsec:a1]{A1}) is that tie-breaking against Sender is equivalent to ignoring tie-breaking but allowing the Receiver type distribution to be unbounded above.\footnote{Similar observations have been made in the context of all-pay auctions by \citet{S2015} and \citet{GKR2022}. The latter model finds that when ties may occur on intervals with positive measure, players' equilibrium strategies involve multiple disjoint intervals with an atom at 0, rather than the single interval and atom at 0 that arises in my model.} Since Nature moves second, I use tie-breaking against Sender to ensure that the minimizing Receiver type distribution for each posterior distribution is well-defined; however, most results on MPC games use even tie-breaking, where a posterior $q = r$ convinces that Receiver type with probability $1/2$. With unfavorable tie-breaking, in order to persuade Receiver type $r$, Sender must generate posterior $q_r^\varepsilon = r + \varepsilon$ for arbitrary $\varepsilon > 0$. Thus, unlike in an MPC game with even tie-breaking, Sender can never persuade type $r = 1$. I can restore the equivalence between maxmin persuasion and MPC games by allowing Nature in the MPC game to generate a Receiver type $r = 1 + \eta$ for arbitrary $\eta > 0$. Then Sender can attain her even tie-breaking utility for all interior Receiver types in the $\varepsilon \rightarrow 0$ limit and Nature can replace any instance of $r = 1$ with $r = 1 + \eta$ in the $\eta \rightarrow 0$ limit.

\subsection{Optimal Posterior Distributions}
\label{subsec:binOpt}
Having established equivalence between the maxmin persuasion problem and an appropriate MPC game, the solution to the maxmin persuasion problem closely resembles Theorem 4 of \citet{H2015}. I extend that result by providing an alternative proof which shows uniqueness of Sender's optimal distribution when $\pi \leq 1/2$, as well as necessary and sufficient conditions on any maxmin-optimal posterior distribution when $\pi > 1/2$. Let $\delta_x$ be the Dirac distribution with all mass at $q = x$ and $U[x, y]$ be the uniform distribution over the interval $[x, y]$. The following describes Sender's optimal posterior distribution:
\begin{proposition}
	\label{prop:bin}
	Let $\supp(F) = \left\{0, 1\right\}$ and let ties be broken against Sender.\\
	If $\pi > 1/2$, then a posterior distribution $G^*$ is optimal for Sender if and only if $\mathbb{E}_{G^*}[\omega] = \pi$ and $G^*(x) \leq x \hspace{0.5em} \forall \hspace{0.5em} x \in [0, 1]$.\\ 
	If $\pi \leq 1/2$, then Sender's unique optimal posterior distribution $G^*$ is as follows:
	\begin{itemize}
		\item If $r^* \leq \pi \leq 1/2$,
		\begin{equation*}
			G^* = U[0, 2\pi].
		\end{equation*}
		\item If $\pi \leq r^* \leq 1/2$,
		\begin{equation*}
			G^* = \bigg(1 - \frac{\pi}{r^*}\bigg) \, \delta_0 + \frac{\pi}{r^*} \, U[0, 2r^*].
		\end{equation*}
		\item If $\pi \leq 1/2 \leq r^*$,
		\begin{equation*}
			G^* = (1 - 2\pi) \, \delta_0 + 2\pi \, U[0, 1].
		\end{equation*}
	\end{itemize}
\end{proposition}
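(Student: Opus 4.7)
The plan is to apply Lemma~\ref{lm:saddle} to convert \eqref{eq:gen} into a saddle point of the associated zero-sum MPC game, exhibit a Nash equilibrium pair $(G^*, T^*)$ delivering the value $V$, and then invoke additional off-equilibrium Nature strategies to pin down uniqueness of $G^*$ when $\pi \leq 1/2$ and the characterizing inequality when $\pi > 1/2$.

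For $\pi \leq 1/2$, I would propose Nature's $T^*$ as a mixture of an atom at $r = 0$, a uniform density on $[0, 2\pi]$, $[0, 2r^*]$, or $[0, 1]$ (with a further atom at $r = 1$ in case C) appropriate to the subcase, weighted so that $\mathbb{E}_{T^*}[r] = r^*$ and the induced payoff function $w^*(q) = \mathbb{P}_{T^*}(r < q)$ is affine on the support of $G^*$. Dually, $\phi^*(r) = 1 - G^*(r)$ is affine on the support of $T^*$, so each player's best-response problem returns indifference at a common value; Bayes-plausibility under binary prior support reduces to $\mathbb{E}_{G^*}[q] = \pi$, which is easy to check, and direct computation yields $V = 1 - r^*/(2\pi)$, $\pi/(2r^*)$, and $2\pi(1 - r^*)$ in the three subcases.

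The main technical step is uniqueness of $G^*$, because $G^*$ is generically not the unique best response to $T^*$. For each $y \in (0, 1)$ I would construct a binary mean-$r^*$ Nature strategy whose feasibility inequality $\int (1 - G)\, dT \geq V$ unwinds algebraically to $G(y) \leq G^*(y)$: specifically $T_y = \tfrac{y - r^*}{y}\delta_0 + \tfrac{r^*}{y}\delta_y$ for $y \in (r^*, 1]$ and $T_{y, z} = \tfrac{z - r^*}{z - y}\delta_y + \tfrac{r^* - y}{z - y}\delta_z$ with $z = 2r^*$ (or $z = 1$ in case C) for $y \in (0, r^*)$, supplemented by the bootstrap $T_{2r^*} = \tfrac12\delta_0 + \tfrac12\delta_{2r^*}$ to establish $G(0) \leq G^*(0)$, together with Nature's equilibrium $T^*$ itself to enforce $\supp(G) \subseteq \supp(G^*)$ (via Jensen applied to the concave $w^*$). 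Combining the resulting pointwise bound $G \leq G^*$ with the Bayes-plausibility identity $\int_0^1 (G^* - G)\, dq = \mathbb{E}_G[q] - \mathbb{E}_{G^*}[q] = 0$ and nonnegativity of the integrand forces $G = G^*$ almost everywhere, hence everywhere by right-continuity of CDFs.

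For $\pi > 1/2$, sufficiency is immediate: $G^*(x) \leq x$ yields $\int (1 - G^*(r))\, dT(r) \geq \int (1 - r)\, dT(r) = 1 - r^*$ for every mean-$r^*$ strategy $T$, and Nature's $(1 - r^*)\delta_0 + r^*\delta_1$ pins this as $V$. For necessity I reuse the same binary family $T_y$ (for $y \in (r^*, 1]$) and $T_{y, 1}$ (for $y \in (0, r^*)$) to derive $G(y) \leq y$ at each $y$. The main obstacle in both regimes is assembling a family of Nature strategies rich enough to pin down $G$ pointwise—once the pointwise CDF inequality is secured, the mean identity closes the argument mechanically.
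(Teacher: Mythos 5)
Your overall framework is sound: the reduction to the MPC game via Lemma~\ref{lm:saddle}, the verification of the proposed equilibrium pair $(G^*, T^*)$, the values $V = 1 - r^*/(2\pi)$, $\pi/(2r^*)$, $2\pi(1-r^*)$, and the entire $\pi > 1/2$ argument are all correct. The paper's own alternative proof (Lemma~\ref{lm:bin}) is quite different from yours---it works by taking the tangent $L$ to $1-\bar G$ at $r^*$, comparing against upper-truncated uniforms, and deriving a Bayes-plausibility violation from $\int L_+ > \pi$---so your route through families of binary Nature strategies is genuinely novel. For $\pi > 1/2$, your necessity argument is airtight precisely because $G^*(0) = 0$: the $T_y$ constraint reads $\tfrac{y-r^*}{y}G(0) + \tfrac{r^*}{y}G(y) \leq r^*$, and since $G(0) \geq 0$, this does imply $G(y) \leq y$.

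However, there is a genuine gap in the uniqueness argument for $\pi \leq 1/2$ in the subcases where $G^*$ has an atom at $0$ ($\pi \leq r^*$). For $y \in (r^*, z)$, the constraint from $T_y = \tfrac{y - r^*}{y}\delta_0 + \tfrac{r^*}{y}\delta_y$ unwinds to
\begin{equation*}
\frac{y - r^*}{y}\bigl(G^*(0) - G(0)\bigr) + \frac{r^*}{y}\bigl(G^*(y) - G(y)\bigr) \geq 0,
\end{equation*}
which bounds only a convex combination and not $G(y) - G^*(y)$ alone. Your bootstrap and Jensen steps establish $G(0) \leq G^*(0)$, which makes the first term \emph{nonnegative} and therefore allows $G(y) > G^*(y)$; the inequality you would need to isolate the second term is $G(0) \geq G^*(0)$, which the binary family does not deliver. (For the $y < r^*$ family you are fine, because the right endpoint $z$ gives $G(z) = G^*(z) = 1$ after the support-containment step, and in the $r^* \leq \pi$ case $G^*(0) = 0$ so the same logic works for $y > r^*$; note also that $z$ there should be $2\pi$, not $2r^*$.) The missing ingredient is the \emph{other} half of the interchangeability of Nash equilibria in zero-sum games: since $T^*$ is a Nature NE strategy, it must be a best response to any Sender NE strategy $G$, which forces $G$ to coincide with the tangent line to $\bar G$ at $r^*$ on all of $\supp T^*$---that is, $G$ must be affine on $[0, z]$. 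Once $G$ is affine on $[0, z]$, the boundary condition $G(z) = 1$ and the mean identity pin down $G = G^*$ mechanically, without ever needing the pointwise inequality $G \leq G^*$ for $y > r^*$. You verify that $G^*$ satisfies this affineness (your observation that $\phi^*(r) = 1 - G^*(r)$ is affine on $\supp T^*$), but you never impose it on the candidate $G$; that is where the argument breaks down.
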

\begin{proof}
	See Appendix \hyperref[subsec:a2]{A2}.
\end{proof}
\begin{figure}
	\begin{tikzpicture}[scale=1]
		\LARGE
		
		\begin{axis}
			[axis x line = middle,
			axis y line = left,
			xmin = 0, xmax = 1.2,
			ymin = 0, ymax = 1.2,
			ytick=1,
			xtick={0.00001,1},
			xticklabels={0, 1},
			clip=false]
			
			\draw [solid, line width = 1mm, orange] (axis cs: 0, 0) -- (axis cs: 2/3, 1);
			\draw [solid, line width = 1.15mm, orange] (axis cs: 2/3, 1) -- (axis cs: 1, 1);
			
			\draw [solid, line width = 1mm, blue] (axis cs: 0, 1/6) -- (axis cs: 4/5, 1);
			\draw [solid, line width = 0.85mm, blue] (axis cs: 4/5, 1) -- (axis cs: 1, 1);
			\node[circle,fill=blue,inner sep=0pt,minimum size=3mm] at (0, 1/6) {};
			
			\draw [solid, line width = 1mm, black!60!green] (axis cs: 0, 1/3) -- (axis cs: 1, 1);
			\node[circle,fill=black!60!green,inner sep=0pt,minimum size=4.24mm] at (0, 1/3) {};
		\end{axis}
	\end{tikzpicture}
	\vspace{-0.4cm}
	\caption{Optimal posterior distributions in the binary-state setting for different values of the mean state $\pi$ and mean Receiver type $r^*$.\\
		In blue, $r^* = 1/4 < 1/3 = \pi$, and $G^* = U[0, 2/3]$.\\
		In orange, $\pi = 1/3 < 2/5 = r^*$ and $G^* = (1/6) \, \delta_0 + (5/6) \, U[0, 4/5]$.\\
		In green, $\pi = 1/3 < 3/5 = r^*$ and $G^* = (1/3) \, \delta_0 + (2/3)\, U[0, 1]$.}
	\label{fig:binary}
\end{figure}
Because Nature can choose a binary support distribution where Receiver type $r = 0$ is always persuaded and Receiver type $r = 1$ is never persuaded, Sender's payoff cannot exceed $1 - r^*$. Whenever Sender chooses a posterior distribution with a convex cdf, this Receiver type distribution is indeed optimal for Nature, and Sender attains her maximum payoff. Since the uniform distribution $U[0, 1]$ has the smallest mean among distributions with convex cdfs, this choice is feasible for Sender if and only if $\pi \geq 1/2$ (and multiple such distributions are feasible when $\pi > 1/2$). When $\pi < 1/2$, Sender chooses a distribution that is as close to uniform as possible given her Bayes-plausibility constraint. This choice requires her to place an atom at $q = 0$, truncate the upper bound of the distribution's support below $q = 1$, or both. Fixing $\pi \leq 1/2$, for small $r^*$ Sender truncates the support at $2\pi$ but places no atom at 0. As $r^*$ increases, Sender simultaneously increases the size of the atom and moves the upper bound of the support towards 1. A higher average Receiver type makes high posteriors more valuable to Sender, so she is willing to sometimes fully reveal the low state in order to generate more of these posteriors. Figure \ref{fig:binary} shows three examples of optimal posterior distributions, corresponding to the three cases of Proposition 1.

\citet{H2015} does not establish uniqueness of this Nash equilibrium of the MPC game. A related work, \citet{NA2018}, shows through explicit calculations of players' utilities under various distributions that Sender's Nash equilibrium strategy is unique when $\pi \leq 1/2$ (Theorems 4 and 5 in that work) but gives only a partial characterization of optimal strategies for Sender when $\pi > 1/2$ (Theorem 10 in that work). In the maxmin persuasion setting with tie-breaking against Sender, I am able to avoid issues with limits of $\varepsilon$-approximating distributions and close that gap: Lemma \ref{lm:bigP} in Appendix \hyperref[subsec:a2]{A2} gives a necessary and sufficient condition for Sender's optimal distribution when $\pi > 1/2$. Additionally, in Lemma \ref{lm:bin} of Appendix \hyperref[subsec:a3]{A3}, I provide a novel geometric proof of Sender's optimal posterior distribution for the case $\pi \leq 1/2$, including its uniqueness, which leverages the concavification approach of \citet{KG2011}. This proof informs my approach in the continuous-state setting.

\section{The Continous-State Setting}
\label{sec:cont}
While the solution when $F$ has binary support is especially sharp, that restriction may not always be plausible. In this section, I consider the maxmin persuasion problem of Equation (\ref{eq:gen}) when $F$ is a continuously differentiable and unimodal cdf over $[0, 1]$ with $F(0) = 0$. I assume that, for some mode $m \in [0, 1]$, the density $f$ is strictly increasing on $[0, m)$ and strictly decreasing on $(m, 1]$. To rule out the binary-state solution, I also assume that $f'(0) < 1 - 2\pi$. In this setting, a double-truncated uniform distribution of posterior means is optimal when $r^*$ is sufficiently small (Proposition \ref{prop:contSmall}) or large (Proposition \ref{prop:contLarge}). Analogous results hold when, rather than being continuous, $F$ is supported on $N$ values $\left\{q_1,...,q_N\right\} \in [0, 1]^N$ with $N > 2$; I provide full details and proofs of this extension in Appendix \hyperref[subsec:b6]{B6}. Before turning to the main result, I first discuss two simple cases which extend the intuitions of the binary-state setting.

\subsection{Simple Continuous Priors}
\label{subsec:simpCont}
In the continuous-state setting, Sender's constraint is different from Nature's. It is no longer true that any cdf $G$ with support $[0, 1]$ and mean $\pi$ is a mean-preserving contraction of the prior $F$. Instead, the chosen cdf must additionally satisfy the \textit{integral constraint}
\begin{equation*}
	\int_0^x G(q) \, dq \leq \int_0^x F(q) \, dq \hspace{0.5em} \forall \hspace{0.5em} x \in [0, 1].
\end{equation*}
Under the assumptions $F(0) = 0$ and $f'(0) < 1 - 2\pi$, this constraint prevents Sender from choosing any of the optimal distributions in Proposition \ref{prop:bin}, as they violate it in the interval $(0, \varepsilon)$ for $\varepsilon > 0$ sufficiently small.

Despite this new constraint, two cases of the continuous-state model are easy to solve using the intuitions of the previous section. Nature may still choose the binary support distribution which generates only Receiver types $r = 0$ and $r = 1$, so Sender's utility is still upper-bounded by $1 - r^*$. Thus for any $F$ that first-order stochastically dominates $U[0, 1]$\textemdash so that $F(q) \leq q \hspace{0.5em} \forall \hspace{0.5em} q \in [0, 1]$\textemdash it is easy to see that full disclosure is optimal, since it ensures that $G^* = F$ and Sender's utility attains the upper bound. This condition generalizes the case where $F$ is unimodal with $m = 1$.

When $F$ is concave but not uniform, it must be that $F$ lies strictly above $U[0, 1]$ on $(0, 1)$ and therefore that $\pi < 1/2$. Additionally, the uniform distribution $U[0, 2\pi]$, which was uniquely optimal when $r^* \leq \pi \leq 1/2$ in the binary-state setting of Proposition \ref{prop:bin}, satisfies the integral constraint. To see why, note that the shape of $F$ ensures that $U[0, 2\pi]$ lies strictly below $F$ on $(0, c)$ for some $c < 1$; therefore the integral constraint is satisfied with equality at $x = 0$ and strict inequality for $x \in (0, c]$. The difference between the left- and right-hand sides of the constraint strictly decreases for $x \in (c, 1)$, but only reaches 0 at $x = 1$: thus the weak inequality is preserved on the entire interval $[0, 1]$.\footnote{This geometric approach to the integral constraint will be key in proving Proposition \ref{prop:contSmall}.} The binary-state maxmin persuasion problem is a relaxed version of the continuous-state maxmin persuasion problem, so a feasible optimal solution for Sender in the former must be optimal in the latter. Thus, if $r^* \leq \pi$ and $F$ is concave but not uniform, then $G^* = U[0, 2\pi]$ is the unique optimal distribution for Sender. This condition generalizes the case of unimodal $F$ with $m = 0$.

In the maxmin setting, Nature's mean constraint represents information Sender possesses which allows her to consider only a particular set of possible Receiver type distributions. Thus it is reasonable for Nature to face only a mean constraint while Sender also faces the integral constraint. However, in the case of competitive persuasion, where both parties are persuading Receiver about a common state, it is natural to require all players to choose mean-preserving contractions of the same continuous prior. This case is studied in \citet{HKB2019}, where the optimal distribution of posterior means divides the prior support into finitely many intervals and alternates between matching the prior and generating a linear mean-preserving contraction on each interval. Nature's weaker constraint in my setting rules out this result.

\subsection{Optimal Distributions with Small $r^*$}
\label{subsec:smallR}
Towards providing sufficient conditions for a double-truncated uniform distribution (henceforth DTU) to be optimal for Sender, I first establish notation. A DTU places no mass on any $q \in [0, \ell]$, an atom at $q = \ell$, uniform mass on all $q \in [\ell, u]$, and no mass on any $q \in [u, 1]$. It thus has three parameters: the lower truncation length, the size of the atom at $\ell$, and the upper truncation length. Because any Bayes-plausible DTU's mean must be $\pi$, the atom size is uniquely determined by the truncation lengths. That is, given $[\ell, u] \subseteq [0, 1]$ there is only one DTU with mean $\pi$ and support $[\ell, u]$. I can thus characterize a DTU by the slope $\beta > 0$ and intercept $y \in [0, 1)$ of the uniform portion of its cdf, writing it as $G_y^\beta$. Explicit formulas for the relationship between truncation lengths, atom size, and slope, as well as bounds on these parameters, are in Appendix \hyperref[subsec:b1]{B1}. For each $y$, there is a slope $\beta(y)$ which delivers Sender's highest utility among DTUs with intercept $y$ (Lemma \ref{lm:intOpt} in Appendix \hyperref[subsec:b2]{B2}); I refer to the DTU $G_y^{\beta(y)}$ as $y$-optimal. Let $q_i(\beta, 0)$ be the smallest nonzero point of intersection between the DTU $G_y^\beta$ and the prior $F$. Figure \ref{fig:dtu} shows an example of the $0$-optimal DTU when $F$ is a truncated normal distribution, highlighting the notation above. With notation fixed, the following proposition describes Sender's choice for small $r^*$: 
\begin{figure}[t!]
	\begin{tikzpicture}[scale=1]
		\LARGE
		
		\begin{axis}
			[axis x line = middle,
			axis y line = left,
			xmin = 0, xmax = 1.2,
			ymin = 0, ymax = 1.2,
			ytick={0.000001,1},
			yticklabels={y,1},
			xtick={0.000001,0.18375, 0.3248, 0.466, 1},
			xticklabels={0,$\ell$,,, 1},
			clip=false]
			
			\addplot [domain=0:1, line width = 0.75mm, smooth, blue]				{(normcdf(3*(x-1/5),1/5,1/3) - 0.0082)/(1-0.0082)};
			
			\addplot [domain=-0:0.18375, line width = 0.75mm, smooth, orange]	{0};
			\addplot [domain=0.18375:0.4675, line width = 0.75mm, smooth, orange]	{2.14665*x};
			\addplot [domain=0.4655:1, line width = 0.75mm, smooth, orange]	{1};

			\draw [dashed, line width = 0.75mm, orange] (axis cs: 0.18375, 0) -- (axis cs: 0.18375, 2.14665*0.18375);
			
			\draw [solid, line width = 0.1mm, black] (axis cs: 0.3248, 0) -- (axis cs: 0.3248, 1.2);
			
			\draw [solid, line width = 0.1mm, black] (axis cs: 0.18375, 0) -- (axis cs: 0.18375, 1.2);
			
			\draw [solid, line width = 0.1mm, black] (axis cs: 0.466, 0) -- (axis cs: 0.466, 1.2);
			
			\node[label={$q$}] 		at (axis cs: 1.2, -0.02) 	
			{};
			
			\node[label={$q_i$}] 		at (axis cs: 0.3248, -0.285) 	
			{};
			
			\node[label={$u$}] 		at (axis cs: 0.466, -0.265) 	
			{};
		\end{axis}
	\end{tikzpicture}	
	\vspace{-0.8cm}
	\caption{An example $0$-optimal DTU, $G^{2.15}_0$, in orange. The prior $F$ (blue) is a $N(1/5, 1/3)$ distribution truncated in $[0, 1]$. Its support is $[\ell, u] = [0.18, 0.47]$; its intercept is $y = 0$; its slope is $\beta(0) = 2.15$; the smallest nonzero point of intersection between the DTU and the prior is $q_i(\beta(0), 0) = 0.32$.}
	\label{fig:dtu}		
\end{figure}
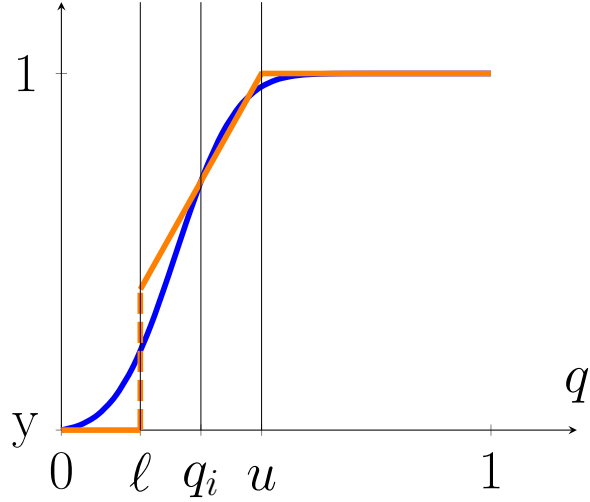

\begin{proposition}
	\label{prop:contSmall}
	Let $r^* \in (0, q_i(\beta(0), 0)]$. The $0$-optimal double-truncated uniform distribution, $G_0^{\beta(0)}$, attains Sender's highest possible utility. Any other distribution which attains that utility must have the same concavification as $G_0^{\beta(0)}$.
\end{proposition}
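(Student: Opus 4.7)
The plan is to apply a concavification argument analogous to the one used in the binary-state case (Lemma \ref{lm:bin}). For any Bayes-plausible posterior distribution $G \in \mathcal{G}$, Nature's best response solves $\max_{T \in \mathcal{T}} \int G \, dT$ subject to $\mathbb{E}_T[r] = r^*$, and by the standard concavification identity this maximum equals $\text{conc}(G)(r^*)$, the value at $r^*$ of the concave envelope of $G$ on $[0,1]$. Sender's payoff is therefore $1 - \text{conc}(G)(r^*)$, so her maxmin problem reduces to minimizing $\text{conc}(G)(r^*)$ over $G \in \mathcal{G}$.

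I will first evaluate this objective at the candidate $G_0^{\beta(0)}$. By construction, the uniform portion of $G_0^{\beta(0)}$ coincides with the line $L(q) = \beta(0) q$ on $[a, 1/\beta(0)]$ (where $a$ denotes the lower-truncation endpoint), while $G_0^{\beta(0)}$ vanishes on $[0, a)$ and equals $1$ on $[1/\beta(0), 1]$. Combined with $G_0^{\beta(0)}(0) = 0$, which follows from Bayes-plausibility and $F(0) = 0$, a chord argument shows that $L$ extended by the constant $1$ past $1/\beta(0)$ is the concavification of $G_0^{\beta(0)}$. The hypothesis $r^* \leq q_i(0, \beta(0)) \leq 1/\beta(0)$ places $r^*$ in the linear regime of the envelope, giving $\text{conc}(G_0^{\beta(0)})(r^*) = \beta(0) r^*$ and corresponding Sender payoff $1 - \beta(0) r^*$.

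The central step is to prove the matching lower bound $\text{conc}(G)(r^*) \geq \beta(0) r^*$ for every $G \in \mathcal{G}$. Using the sup representation $\text{conc}(G)(r^*) \geq (r^*/q_0) G(q_0)$, valid for any $q_0 \geq r^*$ since $G(0) = 0$, it suffices to exhibit $q_0 \in [r^*, 1/\beta(0)]$ with $G(q_0) \geq \beta(0) q_0$. I will establish this by contradiction: supposing $G < L$ strictly on $[r^*, 1/\beta(0)]$, I will exploit the defining property of $\beta(0)$ as the smallest slope for which a $y = 0$ DTU is Bayes-plausible. This property implies that the Bayes-plausibility constraint for $G_0^{\beta(0)}$ binds at an intersection point with $F$, and under the hypothesis $r^* \leq q_i(0, \beta(0))$ the first such binding must occur at $q_i$ itself, where $F(q_i) = L(q_i)$. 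The mass deficit of $G$ relative to $L$ on $[r^*, 1/\beta(0)]$ must be compensated elsewhere to preserve $\mathbb{E}_G[q] = \pi$, and any such compensation yields either a mean violation or a violation of the Bayes-plausibility integral inequality at or before $q_i$. The hardest part will be controlling $G$ on $[0, r^*]$, where $L$ does not bound $G$; here the argument must push the integral inequality up to $x = q_i$ and exploit the tightness of $\int_0^{q_i} G_0^{\beta(0)}\,dq = \int_0^{q_i} F\,dq$, together with the unimodality of $F$ ensuring $F$ lies on the appropriate side of $L$ on $[0, q_i]$.

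Uniqueness up to concavification then follows from a saddle-point argument. Nature's saddle strategy $T^*$ against $G_0^{\beta(0)}$ is supported on $\{0, 1/\beta(0)\}$ with weight $\beta(0) r^*$ on $1/\beta(0)$; any optimal $H$ must attain the same equilibrium payoff against $T^*$, which combined with $H(0) = 0$ forces $H(1/\beta(0)) = 1$. Concavity of $\text{conc}(H)$, its values $0$ at $0$ and $\beta(0) r^*$ at $r^*$, and the forced value $1$ at $1/\beta(0)$, pin down $\text{conc}(H)$ to the piecewise linear-then-flat shape of $\text{conc}(G_0^{\beta(0)})$.
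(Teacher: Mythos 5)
Your proposal is correct in its overall strategy, but it takes a genuinely different route from the paper in both halves of the argument, so the comparison is worth spelling out.

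For optimality, the paper passes through an intermediate DTU: it takes the tangent $L$ to $\bar H$ at $r^*$, finds the DTU $G_L$ with the same slope and intercept $y_L$, and uses Corollary~\ref{cor:optLowR} (which compares DTUs) to show $G_L$ is not Bayes-plausible, whence $H$ is not either. You instead work directly from the chord representation of the concave envelope: since any feasible $G$ has $G(0)=0$, you need only exhibit $q_0\in[r^*,1/\beta(0)]$ with $G(q_0)\ge\beta(0)q_0$, and you get a contradiction otherwise by transferring the deficit on $[q_i,1]$ to an excess on $[0,q_i]$ and invoking the tightness $\int_0^{q_i}G_0^{\beta(0)}=\int_0^{q_i}F$ from Lemma~\ref{lm:contEll}. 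This is cleaner than the paper's route and bypasses the tangent-line machinery. A small remark: the step you flag as ``the hardest part''---controlling $G$ on $[0,r^*]$---does not actually require separate treatment. Since $G<L=G_0^{\beta(0)}$ on $[q_i,1/\beta(0)]$ (using $q_i>\ell_0^*$, also from Lemma~\ref{lm:contEll}, so $G_0^{\beta(0)}$ coincides with $L$ there) and $G\le 1=G_0^{\beta(0)}$ on $[1/\beta(0),1]$, one gets $\int_{q_i}^1 G<\int_{q_i}^1 G_0^{\beta(0)}$ strictly; equal means then force $\int_0^{q_i}G>\int_0^{q_i}G_0^{\beta(0)}=\int_0^{q_i}F$, and the behavior of $G$ on $[0,r^*]$ never needs to be bounded pointwise.

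For uniqueness, the paper argues geometrically: it shows $\bar G_0^{\beta(0)}$ upper-bounds $H$ on all of $[0,1]$ (using the absence of a kink at $\ell_0^*$ when $y=0$), rules out $H<G_0^{\beta(0)}$ on $[r^*,1]$ via Corollary~\ref{cor:optLowR}, and pins down $\bar H$ by linearity. You instead invoke the zero-sum saddle structure from Lemma~\ref{lm:saddle}: by interchangeability of equilibrium strategies, any optimal $H$ must be a best response to Nature's equilibrium $T^*$ supported on $\{0,1/\beta(0)\}$, which together with $H(0)=0$ forces $H(1/\beta(0))=1$; concavity then pins $\mathrm{conc}(H)$ down to the chord on $[0,1/\beta(0)]$ and to $1$ thereafter. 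This is a nice alternative that makes the game-theoretic content explicit. Both approaches lean on the same underlying lemmas (\ref{lm:cav}, \ref{lm:contEll}, \ref{lm:saddle}); the main unfinished business in your sketch is filling in the Lemma~\ref{lm:contEll}-type structure you allude to but do not prove, which is where the paper's technical effort is concentrated.
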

\begin{proof}
	See Appendix \hyperref[subsec:b5]{B5}.
\end{proof}

The key step in the proof is to show that the integral constraint binds only at a single interior point, the intersection $q_i(\beta(0), 0)$ between the DTU and the prior cdf (Lemma \ref{lm:contEll} in Appendix \hyperref[subsec:b3]{B3}).\footnote{When $F$ has finite support, the integral constraint need not bind at any interior points. In Appendix \hyperref[subsec:b6]{B6}, I show that if it does bind, it must do so only on a subset of $\supp(F)$. Letting $q_\text{min}$ be the minimal element of $\supp(F)$ where the integral constraint binds, Proposition 2 holds after replacing $q_i(\beta(0), 0)$ with $q_\text{min}$.} Using this result, I can adapt the the strategy used in my geometric proof of Proposition \ref{prop:bin}. Towards simplifying the integral constraint, note that a DTU $G_0^\beta$ will have zero, one, or two interior intersections with $F$ depending on its slope.\footnote{Slight variations of these cases may occur and are dealt with in the proofs of Appendix B, but can be ignored to simplify the intuition.} Writing the integral constraint as a function of $x$,
\begin{equation*}
	v(x) = \int_0^x F(q) \, dq - \int_0^x G_0^\beta (q) \, dq,
\end{equation*}
the intersections of $F$ and $G_0^\beta$ can be used to infer whether $v$ is increasing or decreasing on particular intervals. Figure \ref{fig:lowerB} shows an example of this approach. Combined with the observation that $v(0) = v(1) = 0$, this behavior allows me to show that if $G_0^\beta$ has two interior intersections with $F$, then it satisfies the integral constraint if and only if $v(q_i(\beta, 0)) \geq 0$. To select among these Bayes-plausible DTUs, note that when $y = 0$ the slope of a concavified DTU $G_0^\beta$ equals $\beta$ in the lower truncation interval $[0, \ell]$ and uniform interval $[\ell, u]$. Thus Sender's $0$-optimal DTU is given by making $\beta$ small (to minimize Nature's utility) while satisfying the simplified integral constraint.
\begin{figure}[t!]
	\centering
	\begin{tikzpicture}[scale=0.65]
		\LARGE
		
		\begin{axis}
			[axis x line = middle,
			axis y line = left,
			xmin = -3, xmax = 4,
			ymin = 0, ymax = 1.2,
			ytick=1,
			xtick={-2.9999,3},
			xticklabels={0,1},
			clip=false]
			
			\addplot [domain=-3:3, line width = 0.75mm, smooth, blue]				{normcdf(x,-1,0.85)-0.005};
			
			\addplot [domain=-3:-1.5, line width = 0.75mm, smooth, orange]	{0};
			\addplot [domain=-1.5:1/3+0.01, line width = 0.75mm, smooth, orange]	{0.3*(x+3)};
			\addplot [domain=1/3:3, line width = 0.75mm, smooth, orange]	{1};
			
			\draw [dashed, line width = 0.75mm, orange] (axis cs: -1.5, 0) -- (axis cs: -1.5, 0.3*1.5);
			
			\draw [solid, black] (axis cs: -1.5, -0.37) -- (axis cs: -1.5, 1.1);
			
			\node[label={\normalsize $0 \nearrow + $}] 	at (axis cs: -2.25, -0.4) 	{};
			\node[label={\normalsize $+ \searrow 0$}] 	at (axis cs: 0.75, -0.4) 	{};
			
			\node[label={$q$}] 		at (axis cs: 3.5, -0.02) 		{};
		\end{axis}
	\end{tikzpicture}	
	\hspace{1cm}
	\begin{tikzpicture}[scale=0.65]
		\LARGE
		
		\begin{axis}
			[axis x line = middle,
			axis y line = left,
			xmin = -3, xmax = 4,
			ymin = 0, ymax = 1.2,
			ytick=1,
			xtick={-2.9999,3},
			xticklabels={0,1},
			clip=false]
			
			\addplot [domain=-3:3, line width = 0.75mm, smooth, blue]				{normcdf(x,-1,0.85)-0.005};
			
			\addplot [domain=-3:-2, line width = 0.75mm, smooth, orange]	{0};
			\addplot [domain=-2:1.01, line width = 0.75mm, smooth, orange]	{0.25*(x+3)};
			\addplot [domain=1:3, line width = 0.75mm, smooth, orange]	{1};

			\draw [dashed, line width = 0.75mm, orange] (axis cs: -2, 0) -- (axis cs: -2, 0.25);
			
			\draw [solid, black] (axis cs: -2, -0.37) -- (axis cs: -2, 1.1);
			\draw [solid, black] (axis cs: -0.95, -0.37) -- (axis cs: -0.95, 1.1);
			\draw [solid, black] (axis cs: 0.92, -0.37) -- (axis cs: 0.92, 1.1);
			
			\node[label={\small $0 \nearrow +$}] 	at (axis cs:-2.6, -0.4) 	{};
			\node[label={\small $+ \searrow \, ?$}] 	at (axis cs:-1.5, -0.4) 	{};
			\node[label={\small $? \nearrow \, ?$}] 	at (axis cs: 0, -0.4) 		{};
			\node[label={\small $? \searrow 0$}] 	at (axis cs: 2, -0.4) 		{};
			
			\node[label={$q$}] 		at (axis cs: 3.5, -0.02) 		{};
		\end{axis}
	\end{tikzpicture}	
	\vspace{-0.4cm}
	\caption{Two DTUs with intercept 0 and varying slopes (orange) and their relationship to the prior $F$ (blue). The guidelines show where the function $v$, which represents the difference between the integral of $F$ and that of the corresponding DTU, switches from increasing to decreasing or vice-versa.}
	\label{fig:lowerB}		
\end{figure}
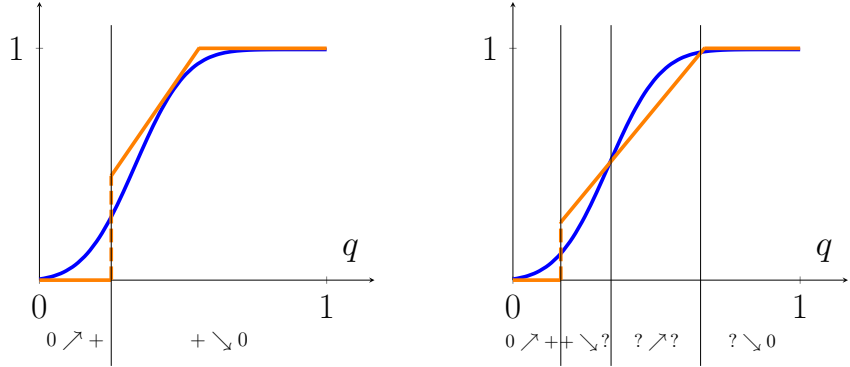

The remainder of the proof uses the fact that the integral constraint binds only at $q \in \left\{0, q_i(\beta(0), 0), 1\right\}$ for $G_0^{\beta(0)}$ and adapts the strategy used in my geometric proof of Proposition \ref{prop:bin}. First, I show that any DTU which delivers higher utility for Sender than $G_0^{\beta(0)}$ cannot be Bayes-plausible. I then approximate an arbitrary optimal distribution $H$ by a DTU, using the concavification of $H$ to ensure that this DTU upper-bounds $H$ everywhere above the lower truncation. If $H$ delivers Sender strictly higher utility than $G_0^{\beta(0)}$, the approximating DTU must do so as well; therefore it cannot be Bayes-plausible, and neither is $H$ itself. If $H$ delivers Sender the same utility as $G_0^{\beta(0)}$, then the approximating DTU is precisely $G_0^{\beta(0)}$ and the concavification of $H$ equals that of $G_0^{\beta(0)}$. This last step relies crucially on the slope of $G_0^{\beta(0)}$ being the same as that of its concavification in the lower truncation interval $[0, \ell]$. For a DTU with intercept $y > 0$, this property will no longer hold, and as a result the concavified optimal distribution will no longer be unique.

While the concavification of $G_0^{\beta(0)}$ is the unique concavified distribution that maximizes Sender's utility, $G_0^{\beta(0)}$ is not itself a unique solution to the maxmin persuasion problem. In the binary-state setting, Sender's optimal distribution was equal to its concavification everywhere on $[0, 1]$, and any other distribution with the same concavification would have a different mean. In the continuous-state setting, a DTU differs from its concavification on the lower truncation interval, so it is possible for a non-DTU distribution $H$ to have the same mean and concavification as a DTU. Thus uniqueness of the concavification is the strongest result that can be obtained.

\subsection{Optimal Distributions with Large $r^*$}
\label{subsec:bigR}
When $r^* > q_i(\beta(0), 0)$, characterizing both the optimal DTU and optimal distributions more generally becomes more difficult. In fact, an optimal distribution may not exist, though Sender's supremum utility over a sequence of distributions converging to optimality is always well-defined. Despite these challenges, I can still show that for sufficiently large values of $r^*$, DTUs are not dominated by other distributions. This result holds without alteration when $F$ has finite support:
\begin{proposition}
	\label{prop:contLarge}
	Let $r^* \in [\pi, 1)$. Then no distribution of posterior means gives Sender strictly higher utility than all DTUs.
\end{proposition}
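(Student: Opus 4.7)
My plan is to establish that for every feasible distribution $H$, there exists a Bayes-plausible DTU $G$ with $\hat{G}(r^*) \leq \hat{H}(r^*)$, where $\hat{H}$ denotes the concavification of $H$ on $[0,1]$. Then by Lemma \ref{lm:cav}, $G$ delivers Sender at least the utility that $H$ does, which is the proposition.

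I begin by fixing a feasible $H$ and setting $v^* = \hat{H}(r^*)$. Since $\hat{H}$ is concave on $[0,1]$ with $\hat{H}(0)=0$ and $\hat{H}(1)=1$, the chord inequality gives $v^* \geq r^*$. I then consider the one-parameter family of DTUs $\{G_y^\beta\}$ satisfying $\hat{G}_y^\beta(r^*) = v^*$, indexed by $y \in [0, (v^*-r^*)/(1-r^*)]$ via $\beta(y) = (v^*-y)/r^*$, with lower truncation $a_y$ uniquely determined by the mean constraint $\mathbb{E}_{G_y^\beta} = \pi$ and upper truncation $b_y = (1-y)/\beta(y)$. The hypothesis $r^* \geq \pi$ yields $a_y \leq \pi \leq r^*$, and direct computation gives $r^* \leq b_y \leq 1$; hence $r^*$ lies in the linear portion of each DTU and $\hat{G}_y^\beta(r^*) = y + \beta(y) r^* = v^*$ for every admissible $y$, so each such DTU would deliver Sender utility $1 - v^*$ provided it is Bayes-plausible.

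The crux of the argument is to show that at least one member of this family satisfies the integral constraint. Define $w_y(x) = \int_0^x F(q)\,dq - \int_0^x G_y^{\beta(y)}(q)\,dq$, so $w_y(0) = w_y(1) = 0$. Then $w_y$ is nondecreasing on $[0, a_y]$ (where $G_y^{\beta(y)} = 0 \leq F$) and nonincreasing on $[b_y, 1]$ (where $G_y^{\beta(y)} = 1 \geq F$), so only the behavior on $[a_y, b_y]$ can threaten feasibility. On that interior interval, $w_y'(x) = F(x) - (y + \beta(y) x)$, and the unimodality of $F$ limits this difference to at most three sign changes. By tracking $w_y$ continuously in $y$ --- noting that the endpoints of the admissible range correspond to qualitatively distinct DTU shapes (no kink in the concavification at $y=0$; no upper truncation at $y = (v^*-r^*)/(1-r^*)$) --- an intermediate-value argument identifies $y^*$ for which $w_{y^*} \geq 0$ on $[0,1]$. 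The DTU $G_{y^*}^{\beta(y^*)}$ is then feasible and satisfies $\hat{G}_{y^*}^{\beta(y^*)}(r^*) = v^*$, matching $H$'s utility.

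The main obstacle is this feasibility verification. A natural candidate for the DTU comes from the supporting line $\ell$ of $\hat{H}$ at $r^*$, which has intercept $y_0 \geq 0$ and slope $\beta_0 \geq 1 - y_0$; however, $\ell$ is typically not itself Bayes-plausible, since $\ell \geq H$ pointwise while $H$ already saturates the integral constraint. The DTUs in the family effectively ``clip'' $\ell$ at both ends, setting the cdf to $0$ below $a_y$ and $1$ above $b_y$, reducing the cumulative integral by amounts that depend on $y$. Rigorously showing that this clipping restores the integral constraint for at least one $y$ --- rather than arguing heuristically --- is the technical heart of the proof and requires careful use of the unimodality of $F$ together with the hypothesis $r^* \geq \pi$ to position the support interval appropriately.
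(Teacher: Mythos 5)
Your plan is to show that for each feasible $H$ there is a Bayes-plausible DTU matching its utility, by exhibiting a one-parameter family $\{G_y^{\beta(y)}\}$ all satisfying $\hat{G}_y^{\beta(y)}(r^*) = v^*$ and then running an intermediate-value argument over $y$ to find a feasible member. This is a genuinely different route from the paper's. The paper instead fixes $y_L$ to be the intercept of the supporting line $L$ of $\bar H$ at $r^*$, compares $H$ against the single $y_L$-optimal DTU $G^*_{y_L}$, and argues by contradiction: if $H$ strictly beats $G^*_{y_L}$ then the DTU $G_L$ with the same slope as $L$ has a strictly smaller slope than $G^*_{y_L}$, hence by Lemma~\ref{lm:minSlope} (which requires exactly $r^*\geq\pi$) $G_L$ is infeasible; since $G_L$ upper-bounds $\bar H \geq H$ on $[\ell,1]$ and shares its mean with $H$, the violation of Bayes-plausibility by $G_L$ at some $q_v\geq \ell$ transfers to $H$. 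The paper's argument uses $H$'s own concavification to pick the comparison DTU, which is precisely what lets it convert feasibility of $H$ into feasibility of a DTU.

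Your version has a genuine gap, and you acknowledge it yourself: the intermediate-value step (``an intermediate-value argument identifies $y^*$ for which $w_{y^*}\geq 0$ on $[0,1]$'') is asserted, not proven, and you call it ``the technical heart of the proof.'' As sketched it is not even an IVT in the usual sense, since ``$w_y\geq 0$ on $[0,1]$'' is a functional condition rather than a scalar sign change; you would first need to collapse feasibility to a single quantity (this is exactly what the paper's Lemma~\ref{lm:contEll} does), and then prove opposite signs at the two endpoints, neither of which is attempted. There is also a soundness issue in the setup: you claim that for each admissible $y$ the lower truncation $a_y$ is ``uniquely determined by the mean constraint,'' but the DTU $G_y^{\beta(y)}$ exists only when $(1-y)^2 r^*/(2(v^*-y))\leq\pi$, and at $y=0$ this reduces to $v^*\geq r^*/(2\pi)$, which fails for instance when $\pi<1/2$ and $r^*>2\pi$ (take $\pi=0.3$, $r^*=0.8$: then $r^*/(2\pi)>1\geq v^*$). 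So the family is not well-defined over the whole stated interval, and the endpoint $y=0$ that you want to use for the IVT may not correspond to any DTU at all. Finally, because your family is parametrized by $v^*$ alone rather than by $H$, there is no geometric relationship between $H$ and the candidate DTUs, so even in principle you cannot import $H$'s feasibility into the feasibility of a family member the way the paper's upper-bounding argument does.
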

\begin{proof}
	See Appendix \hyperref[subsec:b5]{B5}, or Appendix \hyperref[subsec:b6]{B6} for the finite-state case.
\end{proof}
This proof is similar in approach to that of Proposition \ref{prop:contSmall}. When $y > 0$, the slope of a concavified DTU $G_y^\beta$ is larger in the lower truncation interval $[0, \ell]$ than that of the DTU itself (since the concavification passes through the origin, while the DTU has intercept $y$), but is again equal to $\beta$ in the uniform interval $[\ell, u]$. Thus the concavified DTU has a kink at $q = \ell$. However, setting $r^* \geq \pi$ ensures that the kink does not affect the value of the concavified DTU at $r^*$. Then, as in Proposition \ref{prop:contSmall}, Sender's $y$-optimal DTU for each intercept $y$ is given by minimizing the slope $\beta$ subject to the integral constraint. Unlike in that proposition, I cannot directly characterize which choice of intercept is optimal. In fact, since the set $[0, 1)$ of possible intercept choices is not compact, it may be that the optimal choice is $y = 1$ and Sender's highest utility is attained only in the limit. However, I can still use the fact that each $y$-optimal DTU has minimal slope among Bayes-plausible DTUs with intercept $y$ to extend the bounding argument of Proposition \ref{prop:contSmall}. This approach rules out as infeasible any distribution that delivers strictly higher utility than all DTUs, but again leaves room for non-DTU distributions that attain Sender's highest possible utility.

\subsection{Non-Uniform Optimal Distributions}
\label{subsec:nonDTU}
The result of Proposition \ref{prop:contSmall} provides an appealing reason for focusing on DTUs as opposed to other maxmin-optimal posterior distributions: outside of the lower-truncation region, the optimal DTU is precisely equal to the unique optimal concavification. However, in the large-$r^*$ case of Proposition \ref{prop:contLarge}, the optimal concavification is no longer unique. To see why, assume an optimal DTU with $y > 0$ exists. Its concavification passes through the origin rather than the point $(0, y)$, so it has a kink at $q = \ell$. This kink can be used to alter the DTU without affecting Sender's utility. In particular, consider a distribution that places slightly positive mass in the interval $[\ell - \varepsilon, \ell)$, has a smaller atom than the DTU at $q = \ell$, and places slightly less mass than the DTU in the interval $(\ell, \ell + \varepsilon]$. This distribution, shown in Figure \ref{fig:2kink}, changes slope at $\ell$ and $\ell + \varepsilon$, but is equal to the DTU for $q \notin (\ell - \varepsilon, \, \ell + \varepsilon)$. Whenever $r^* \geq \ell + \varepsilon$, and in particular when $r^* \geq \pi$ (the case in Proposition \ref{prop:contLarge}) the deviation delivers the same utility for Sender.

\begin{figure}[t!]
	\centering
	\begin{tikzpicture}[scale = 0.95]
		\definecolor{dark-green}{RGB}{0, 100, 0}
		
		\pgfmathsetmacro{\p}{1/2}
		\pgfmathsetmacro{\l}{1/3}
		\pgfmathsetmacro{\k}{1/5}
		\pgfmathsetmacro{\b}{((\p - \k*\l)-((\p - \k*\l)^2 - ((\l)^2)*((1 - \k)^2))^(1/2))/(\l^2)}
		
		\begin{axis}[
			legend cell align={left}, 
			legend style={draw=none},
			legend pos = {north west},
			axis x line = middle,
			axis y line = left,
			xmin = 0, xmax = 1.2,
			ymin = 0, ymax = 1.2,
			xlabel=$q$,
			ytick={1},
			xtick={0.000001,1},
			xticklabels={0,1},
			clip=false]
			
			\addlegendimage{very thick, smooth, blue}
			\addlegendentry{DTU}
			
			\addlegendimage{smooth, white}
			\addlegendentry{}
			
			\addlegendimage{very thick, smooth, dark-green}
			\addlegendentry{deviation}
			
			\addplot [domain=0:\l, very thick, smooth, blue]	{0};
			\addplot [domain=\l-0.0001:(1-\k)/\b, very thick, smooth, blue]	{\k + x*\b};
			\addplot [domain=(1-\k)/\b-0.0001:1, very thick, smooth, blue]	{1};
			
			\addplot [domain=0:\l, thick, dashed, blue]	{x*(\b*\l+\k)/\l};
			
			\addplot [domain=0:\l-0.05, very thick, smooth, dark-green]	{0};
			\addplot [domain=\l-0.05:\l, very thick, smooth, dark-green]	{0.25*(x - \l + 0.05)};
			\addplot [domain=\l+0.05:(1-\k)/\b, very thick, smooth, dark-green]	{\k + x*\b};
			\addplot [domain=(1-\k)/\b-0.0001:1, very thick, smooth, dark-green]	{1};
			
			\addplot [domain=0:\l+0.0005, thick, dashed, dark-green]	{1.3834*x};
			\addplot [domain=\l:\l+0.0505, very thick, solid, dark-green]	{1.1*x+0.0945};
			
			\draw [solid, black] (\l, 0) -- (\l, 1);
			\draw [solid, black] (\l+0.05, 0) -- (\l+0.05, 1);
		\end{axis}
	\end{tikzpicture}
	\hfill
	\begin{tikzpicture}[scale = 0.95]
		\definecolor{dark-green}{RGB}{0, 100, 0}
		
		\pgfmathsetmacro{\p}{1/2}
		\pgfmathsetmacro{\l}{1/3}
		\pgfmathsetmacro{\k}{1/5}
		\pgfmathsetmacro{\b}{((\p - \k*\l)-((\p - \k*\l)^2 - ((\l)^2)*((1 - \k)^2))^(1/2))/(\l^2)}
		
		\begin{axis}[
			legend cell align={left}, 
			legend style={draw=none},
			legend pos = {north west},
			axis x line = middle,
			axis y line = left,
			xmin = 0.3, xmax = \l+0.1,
			ymin = 0.4, ymax = 0.55,
			xlabel=$q$,
			ytick={0.4, 0.5},
			xtick={0.31,0.4},
			clip=false]
			
			\addlegendimage{very thick, smooth, blue}
			\addlegendentry{DTU}
			
			\addlegendimage{smooth, white}
			\addlegendentry{}
			
			\addlegendimage{very thick, smooth, dark-green}
			\addlegendentry{deviation}
			
			\addplot [domain=\l-0.0001:0.4, very thick, smooth, blue]	{\k + x*\b};
			
			\addplot [domain=0.3:\l, very thick, dashed, blue]	{x*(\b*\l+\k)/\l};
			
			\addplot [domain=\l+0.05:0.4, very thick, smooth, dark-green]	{\k + x*\b};
			
			\addplot [domain=0.3:\l+0.0005, very thick, dashed, dark-green]	{1.3834*x};
			\addplot [domain=\l:\l+0.0505, very thick, solid, dark-green]	{1.1*x+0.0945};
			
			\draw [solid, black] (\l, 0.4) -- (\l, 0.55);
			\draw [solid, black] (\l+0.05, 0.4) -- (\l+0.05, 0.55);
			
		\end{axis}
	\end{tikzpicture}
	\vspace{-0.35cm}
	\caption{A potential deviation (solid green) from a DTU (solid blue); the concavification of each distribution is shown by dashed lines of the same color. The second panel focuses on the concavified deviation's double kink (at both guidelines), while the concavified DTU's has only one kink (at the first guideline).}
	\label{fig:2kink}
\end{figure}
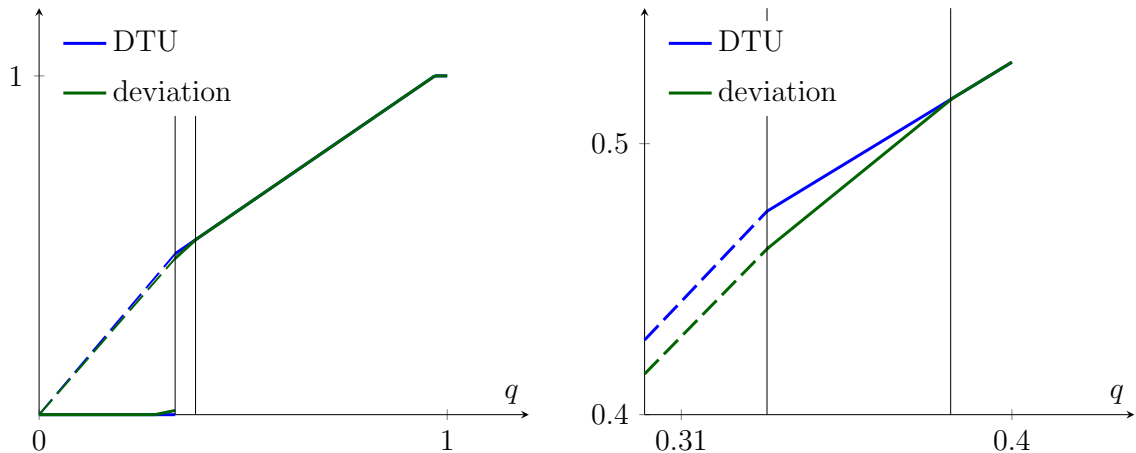

\subsection{Intermediate $r^*$}
The deviation in Figure \ref{fig:2kink} also sheds light on the difficulty of characterizing the optimal distribution when $r^* \in (q_1(\beta(0), 0), \pi)$. In the binary-state setting the optimal distribution equals its concavification. With a continuous state, lower truncation is one possible response to the integral constraint, but it is not a unique solution for Sender because that constraint may only bind at a finite set of interior points. For instance, the deviation in Figure \ref{fig:2kink} gives Sender a greater utility than the corresponding DTU when $r^* \in (0, \ell + \varepsilon)$ and is feasible whenever the integral constraint does not bind in that interval. Without further structure on the space of possible deviations from DTUs, even numerical approaches with a parametric prior distribution provide no insight, since they would require a novel algorithm to search over all mean-preserving contractions of the prior.

Despite this challenge, I am able to shed light on the prevalence of the intermediate-$r^*$ case by numerically estimating $q_i(\beta(0), 0)$ within a class of parametric prior distributions. For truncated normal priors\textemdash generated by taking a $N(\mu, \sigma^2)$ distribution and truncating it to lie in the unit interval\textemdash I show numerically that there is a gap between Propositions \ref{prop:contSmall} and \ref{prop:contLarge} only when $\mu < 1/2$ and $\sigma$ is large enough. For example, when $\mu = 0.2$, shown in orange in Figure \ref{fig:sims}, there is a gap only when $\sigma \geq 0.135$.
\begin{figure}
	\includegraphics[width=0.5\hsize]{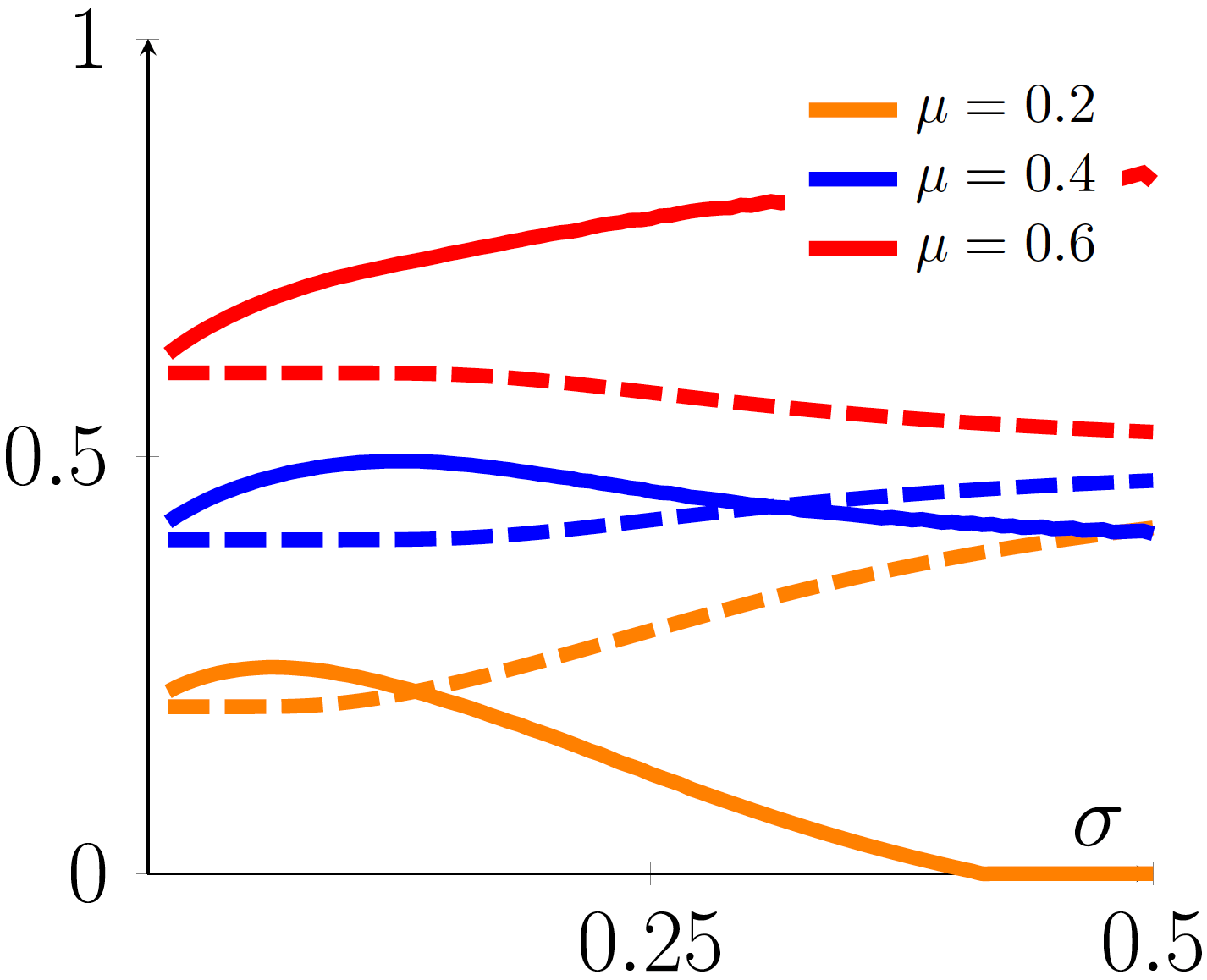}
	\caption{Numerically computed values of $q_i(\beta(0), 0)$ and their corresponding prior means $\pi$ using a $N(\mu, \sigma^2)$ distribution truncated to have support $[0, 1]$. For each color, the solid line is $q_i(\beta(0), 0)$ and the dashed line is the prior mean $\pi$ after truncation, as functions of the generating variance $\sigma$.}
	\label{fig:sims}		
\end{figure}

Full details of the algorithm for computing $q_i(\beta(0), 0)$ are in Appendix \hyperref[sec:c]{C}; Figure \ref{fig:sims} shows an example of the output from these computations. Each color represents a fixed mean $\mu$ of the generating normal distribution, with the $x$-axis representing that distribution's standard deviation $\sigma$. Because the normal distribution is truncated to produce a prior in $[0, 1]$, the ``true mean" $\pi$ of that prior depends on both $\mu$ and $\sigma$; it is shown as a dashed line. The solid line shows the numerically computed value $q_i(\beta(0), 0)$. Thus there is a gap between the small-$r^*$ case of Proposition \ref{prop:contSmall} and the large-$r^*$ case of Proposition \ref{prop:contLarge} if and only if a dashed line lies above its corresponding same-color solid line. When $\mu > 1/2$, this property never holds and there is no gap between Proposition \ref{prop:contSmall} and Proposition \ref{prop:contLarge}. When $\mu < 1/2$, there is no gap for $\sigma$ small enough, but a gap arises for larger $\sigma$. However, making $\sigma$ too large violates the assumption $f'(0) < 1 - 2\pi$, invalidating the propositions. These results suggest that double-truncated uniform distributions are optimal for many possible priors.

\section{Extensions}
\label{sec:ext}
In the motivating example, a politician has a well-defined prior belief about the state of the world and knows the average voter's cost-effectiveness threshold, but she makes no further assumptions on the distribution. This difference in information is not unreasonable: the politician can fine-tune the details of her welfare program, but voter preferences are subject to a number of factors outside her control, e.g., opposition campaigning and news coverage. Loosely speaking, limited data about voter preferences allows the politician to estimate the population mean with convergence rate $1/n$, but estimation of the distribution (or any given quantile) converges at rate $1/\sqrt{n}$; she may thus be more willing to base her strategy on the former than the latter. Despite these justifications, it may still be realistic to weaken these informational assumptions; I do so in this section and discuss how my existing results extend.\footnote{I am grateful to an anonymous referee for suggesting these approaches.}

\subsection{Doubly-Maxmin Sender Preferences}
I first consider a Sender who knows only the mean state and mean Receiver type, with maxmin preferences over all possible pairs of independent distributions fitting those moment restrictions. In the binary-state setting, the mean state and state distribution are equivalent. In the continuous-state case, allowing Nature to choose both distributions in response to Sender's choice of information structure makes Sender's optimization problem ill-posed since her set of feasible distributions depends on the prior distribution. There are two straightforward ways to ensure this set is well-defined. First, I can alter the order of the moves: Nature chooses a unimodal prior distribution (as described in Section \ref{sec:cont}), then Sender chooses a distribution of posterior means, and finally Nature chooses a Receiver type distribution with the known mean. In this case, my characterizations of Sender's optimal distribution (Propositions \ref{prop:contSmall} and \ref{prop:contLarge}) apply to each possible prior. The overall solution follows by minimizing (over possible priors) Sender's maxmin utility from the existing model. Given the lack of closed-form solutions in Section \ref{sec:cont}, this minimization requires a numerical approach. Second, I can ignore the prior altogether by letting Sender choose a distribution of posterior means, then let Nature simultaneously choose any mean-preserving spread of that distribution (which I can call a ``prior") and any Receiver type distribution with the known mean. Since only the distribution of posterior means affects Sender's utility, this approach reduces to the binary-state setting of my model where Sender faces a mean constraint.

\subsection{Flexible Mean Receiver Types} If Nature may freely select from among different mean Receiver types, the highest permissible mean Receiver type is always worst for Sender. However, I can partially relax the mean restriction by allowing Nature to choose higher mean Receiver types only by paying some cost (similar to the variational preference setup in \citealt{MMR2006}). If Nature is restricted to choosing a single mean Receiver type, then the results of my model are unchanged for that fixed mean and its value for Nature can be computed either analytically using Proposition \ref{prop:bin} for the binary-state case or numerically in the continuous-state case. If Nature may randomize over mean Receiver types, then I can solve the binary-state case using the characterization in Proposition \ref{prop:bin}.

Let $u_s(r^*, r)$ be Sender's utility if she chooses the optimal distribution for mean Receiver type $r^*$, but in fact faces a realized mean Receiver type $r$ (drawn from some distribution chosen by Nature). For any $r^*$, the value of $u_s(r^*, r)$ is given by the concavification of a truncated uniform distribution (following Proposition \ref{prop:bin} and Lemma \ref{lm:cav}), so it is linear in $r$.\footnote{It is linear if $r$ is in the support of Sender's optimal distribution. Any mass strictly above the support brings Nature no additional benefit, so I assume without loss that this choice is never made.} Thus if the expected mean Receiver type chosen by Nature is $r^{**}$, Sender's utility from choosing the $r^{*}$-optimal distribution is $u_s(r^*, r^{**})$ regardless of the full distribution of mean Receiver types. This expression is maximized by choosing the $r^{**}$-optimal posterior distribution, in which case Sender could do no better even if she knew the mean Receiver type was $r^{**}$ with certainty. Therefore, fixing some zero-cost mean Receiver type $r^*$, Nature's gain from any distribution with mean Receiver type $r^{**}$ is captured by the difference between Sender's maxmin utility at $r^*$ and her maxmin utility at $r^{**}$. Sender's maxmin utility is a well-defined, continuously differentiable function of the mean Receiver type (computed using Proposition \ref{prop:bin}), so Nature's choice of $r^{**}$ can be straightforwardly found by, e.g., setting the marginal cost of increasing the expected mean Receiver type equal to the marginal decrease in Sender's utility. Sender's optimal distribution when facing this new mean Receiver type is given by using $r^{**}$ in Proposition \ref{prop:bin}.

In the continuous-state case, the presence of multiple optimal distributions (some of which do not have linear concavifications), the kink in the concavification of a double-truncted uniform distribution (as discussed in Section \ref{subsec:nonDTU}), and the potential gap between Propositions \ref{prop:contSmall} and \ref{prop:contLarge} rule out the approach above. Finding Sender's optimal response in this case would require a different characterization of her maxmin utility, so I leave it for future work. 

\section{Conclusion}
\label{sec:conc}
Bayesian persuasion provides a tractable model of communication that can be extended to include rich uncertainty about the Receiver who is the target of persuasion. This work contributes to a growing literature that also introduces ambiguity by posing a maxmin persuasion problem, where the Sender seeks to be robust to any possible prior belief about Receiver types with a known mean. In a binary-state setting, I show a connection to mean-preserving contraction (MPC) games, where competing players choose mean-preserving contractions of probability distributions to obtain the highest realization, and fully characterize Sender's optimal distribution. As in many other MPC games, when her constraint is strong enough Sender chooses a uniform distribution mixed with atoms at the lowest and highest posterior beliefs. These results highlight the importance of the tie-breaking assumption in persuasion problems and emphasize the strength of the maxmin criterion, which delivers strictly lower utility for Sender than any prior belief over Receiver types when the probability of the high state is less than $1/2$. I then use a geometric approach to show, in both a finite-support setting and a novel continuous-state setting, that uniform distributions with an atom at the lower bound of their support are in many cases still optimal. Unlike in the binary-state setting, these distributions now have support in the interior of $[0, 1]$, and Sender's optimal distribution is no longer unique. However, the intuition of linearizing the prior belief over states in order to make Nature indifferent between many worst-case Receiver type distributions is preserved.

\bibliographystyle{ecta}
\bibliography{mmPers.bib}

\newpage
\section*{Appendix A: Omitted Proofs for Section \ref{sec:bin}}
\label{sec:a}
\subsection*{A1: Maxmin Persuasion and MPC Games}
\label{subsec:a1}
I first show that any Nash equilibrium strategy for Sender in the MPC game described in Section \ref{sec:bin} solves the maxmin persuasion problem of Equation (\ref{eq:gen}):
\begin{lemma}
	\label{lm:saddle}
	Consider the MPC game with tie-breaking against Sender, where Sender's choice set is defined as
	\begin{equation*}
		\mathcal{G} = \left\{\text{cdf } G \text{ over } [0, 1]\, | \, G \text{ is an MPC of } F\right\},
	\end{equation*}
	and Nature's choice set is defined as 
	\begin{equation*}
		\mathcal{T} = \left\{\text{cdf } T \text{ over } [0, 1]\, | \, T \text{ is an MPC of } (1 - r^*) \, \delta_0 + r^* \delta_1 \right\}.
	\end{equation*}
	$G^* \in \mathcal{G}$ is a Nash equilibrium strategy for Sender if and only if $G^*$ solves the maxmin persuasion problem of Equation (\ref{eq:gen}).
\end{lemma}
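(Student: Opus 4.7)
The plan is to treat the game in the lemma as a zero-sum game on convex, weakly compact strategy sets and apply Ky Fan's minimax theorem (\citealt{F1953}) to obtain that the maxmin and minmax values coincide and a saddle point exists; the ``iff'' then follows from the standard zero-sum equivalence between saddle points and maxmin-minmax optimal pairs. Specifically, if $G^*$ is the Sender-component of a saddle point $(G^*, T^*)$, the saddle inequalities immediately give $\min_T u(G^*, T) = u(G^*, T^*) = V$, so $G^*$ solves Sender's maxmin problem; conversely, given a maxmin-optimal $G^*$ and a minmax-optimal $T^*$, the equality $\max\min = \min\max$ forces $(G^*, T^*)$ to satisfy the saddle-point inequalities.

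I would verify the ingredients of the minimax theorem as follows. Both $\mathcal{G}$ and $\mathcal{T}$ are convex (convex combinations of MPCs of a fixed distribution are again MPCs) and weakly compact (closed subsets of the weakly compact set of probability measures on $[0,1]$, with closedness following from preservation of the defining integral constraints under weak convergence). The payoff $u(G, T) = \iint \mathbf{1}(q > r)\, dG(q)\, dT(r)$ is bilinear in $(G, T)$. Tie-breaking against Sender makes $\mathbf{1}(q > r)$ equal to $0$ on the diagonal $\{q = r\}$, matching its liminf there, so $\mathbf{1}(q > r)$ is jointly lower semi-continuous on $[0, 1]^2$; equivalently, writing $u(G, T) = \int (1 - G(r))\, dT(r)$ with $r \mapsto 1 - G(r)$ right-continuous non-increasing and hence LSC in $r$, the Portmanteau theorem for LSC bounded integrands yields that $T \mapsto u(G, T)$ is LSC on the weakly compact $\mathcal{T}$, so the inner minimum is attained for every $G$. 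The dual expression $u(G, T) = \int T(q^-)\, dG(q)$ supplies the analogous statement on the Sender side. With these in hand, Fan's theorem delivers
\[
V \;=\; \max_{G \in \mathcal{G}} \min_{T \in \mathcal{T}} u(G, T) \;=\; \min_{T \in \mathcal{T}} \max_{G \in \mathcal{G}} u(G, T),
\]
attained on both sides, and a saddle point $(G^*, T^*)$ exists.

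The main obstacle is the careful handling of the semi-continuity. The indicator payoff is discontinuous on the diagonal $\{q = r\}$, and with tie-breaking against Sender the payoff is LSC (not USC) in $G$ for each fixed $T$ --- the opposite of what the textbook statements of Sion's or Fan's theorem require for the maximizing variable. The delicate step is either to invoke a version of Fan's theorem that accommodates one-sided LSC (as in the ``convex-like'' formulations in Fan's 1953 paper, applied via the bilinearity of $u$) or to establish attainment of the outer maximum directly by exploiting the bilinear structure, joint LSC of $u$, and weak compactness of $\mathcal{G}$. This is also precisely the place where the opposite tie-breaking rule would make the argument fail, explaining the paper's remark that the choice of tie-breaking is what makes both the outer max and inner min well-defined.
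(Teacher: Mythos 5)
Your approach is the same as the paper's: invoke Ky Fan's minimax theorem on the zero-sum game over the convex, compact strategy sets $\mathcal{G}$ and $\mathcal{T}$, obtain $\max\min = \min\max$, and then read off the ``iff'' from the standard saddle-point equivalence. The verification of the ingredients is also essentially the same (the paper cites Kleiner, Moldovanu, and Strack for compactness and convexity of the MPC sets rather than arguing directly, but the content is identical), and both proofs rest on the observation that tie-breaking against Sender makes $\mathbf{1}(q>r)$ lower semicontinuous so that $T \mapsto u(G,T)$ is LSC and the inner minimum is attained.

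Where your proposal goes astray is the ``main obstacle'' paragraph. You worry that because $G \mapsto u(G,T)$ is LSC rather than USC, the textbook hypotheses for the maximizing variable in Sion's or Fan's theorem fail. For Sion's theorem that concern would be valid, but Fan's Theorem 2 (1953)---which is exactly what the paper invokes---places \emph{no} topological requirement whatsoever on the non-compact (sup) side: it asks only that $X$ be compact Hausdorff, that $f(\cdot,y)$ be LSC and convex on $X$, and that $f(x,\cdot)$ be concave on $Y$, with $Y$ an arbitrary (untopologized) set. Applying it with $X = \mathcal{T}$ and $Y = \mathcal{G}$ uses exactly the LSC-in-$T$ you already established and nothing about continuity in $G$, so there is no delicacy and no need for ``convex-like'' variants or a separate attainment argument on the Sender side. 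Relatedly, writing the value as $\min_T \max_G u(G,T)$ with ``attained on both sides'' overstates what Fan gives: the paper is explicit that the inner $\sup_G$ on that side is \emph{not} generally a max (this is precisely why the MPC-game literature works with limits of approximating distributions), and the saddle-point conclusion only needs the outer $\max_G$ on the left and the outer $\min_T$ on the right. Neither issue derails your argument---the attainment of the outer max over $\mathcal{G}$ is what you and the paper both use, and the ``iff'' logic is correct---but the diagnosed obstacle is illusory and the display should read $\sup_G$, not $\max_G$, inside the $\min_T$.
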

\begin{proof}
	I first show that Sender's utility from the maxmin persuasion problem is the same as from an analogous minmax problem. By Proposition 1 of \citet{KMS2020}, both $\mathcal{G}$ and $\mathcal{T}$ are compact and convex. Because this result uses the norm topology, both spaces are metric spaces, hence Hausdorff spaces. 
	
	The functional of Equation (\ref{eq:gen}) is linear in both distributions, so it is convex in $G$ and concave in $T$. For fixed $G \in \mathcal{G}$, it is also lower semicontinuous on $T$: the result follows from lower semicontinuity of the indicator function (which applies for $q > r$, any $q \in [0, r]$ produces the same value) and application of Fatou's Lemma.
	 
	Therefore, I can apply Theorem 2 of \citet{F1953} to state that
	\begin{equation*}
		\sup_{G \in \mathcal{G}} \left\{\min_{T \in \mathcal{T}} \int \int \textbf{1}(q > r) \, dG(q) \, dT(r) \right\} = \min_{T \in \mathcal{T}} \left\{\sup_{G \in \mathcal{G}} \int \int \textbf{1}(q > r) \, dG(q) \, dT(r) \right\}.
	\end{equation*}
	Because $\mathcal{G}$ is compact, I can in fact replace the supremum on the left-hand side of the equation with a maximum, giving precisely the expression in Equation (\ref{eq:gen}).\footnote{I cannot replace the supremum on the right-hand side with a maximum, and indeed the results for MPC games are often stated using limits of sequences of distributions.} It is then clear that Sender's utility with simulatenous moves in the MPC game must be equal to her utility in the maxmin persuasion problem.
	
	I now prove the ``if" portion of the lemma. Let $G^*$ solve the maxmin persuasion problem. Since the maxmin and minmax utilities for Sender are equal, the game has a value, and both Sender and Receiver have strategies that guarantee them at least the value. $G^*$ is by definition such a strategy; let $T^*$ be such a strategy for Receiver. It must be that the pair $(G^*, T^*)$ guarantees each player exactly the value of the game because it is zero-sum: if either player's utility were strictly above the value, then the other's would be strictly below it. Thus $G^*$ is a best response to $T^*$, since no other strategy gives Sender strictly higher utility (or $T^*$ would not guarantee Receiver the value). Similarly, $T^*$ is a best response to $G^*$. Thus $G^*$ is a Nash equilibrium in the MPC game for Sender.
	
	For the ``only if" portion, let $G^*$ be a Nash equilibrium strategy in the MPC game for Sender. Then the Nash equilibrium payoff for Sender results from taking Nature's best response to $G^*$. Since Nature's payoff is the opposite of Sender's, that payoff is therefore Sender's minimum utility from $G^*$. Thus a Nash equilibrium distribution for Sender has the same payoff in the MPC game and the maxmin persuasion problem, and that utility is precisely equal to Sender's maximum utility in the maxmin persuasion problem, so $G^*$ solves the maxmin persuasion problem by definition.
\end{proof}
This equivalence does not rely on the support of the prior $F$ in the maxmin persuasion problem, and thus suggests that results from other maxmin persuasion models may be applied to solve richer MPC games.

I next show that different tie-breaking assumptions in the persuasion context can be re-interpreted as restrictions on the domain of distributions each player can choose in the MPC game context.
\begin{lemma}
	\label{lm:tb}
	Consider a two-player MPC game $G$ where the support of Sender's and Receiver's chosen distributions must lie weakly above a common lower bound and weakly below a common upper bound, and tie-breaking selects Receiver as the winner if there is a tie. This game is equivalent to a game $F$ which is identical to $G$ except for the following two changes:
	\begin{enumerate}
		\item The support of Receiver's distribution is not bounded above\textemdash Receiver may choose any mean-preserving contraction that obeys the lower bound.
		\item Tie-breaking is even\textemdash in the case of a tie, a winner is randomly chosen.
	\end{enumerate}
\end{lemma}
\begin{proof}
	The game $F$ is (modulo simultaneous moves, which Lemma \ref{lm:saddle} shows are irrelevant) the same as the maxmin persuasion problem. There, Sender persuades Receiver type $r < 1$ by generating a posterior $q_r^0 = r$; with unfavorable tie-breaking, she must generate $q_r^\varepsilon = r + \varepsilon$ for arbitrary $\varepsilon > 0$. Since posteriors must lie in $[0, 1]$, posteriors $q_1^\varepsilon$ are infeasible and Sender can never persuade Receiver type $r = 1$. As $\varepsilon \rightarrow 0$, the effect on the Bayes-plausibility constraint from replacing any $q_r^0$ with $q_r^\varepsilon$ vanishes, allowing a Sender facing unfavorable tie-breaking to match her utility with favorable tie-breaking (and thus for any intermediate tie-breaking rule) for interior Receiver types, but not for type $r = 1$. Thus Sender's utility is affected by the tie-breaking rule if and only if she chooses a posterior distribution with an atom at $q = 1$.
	
	The MPC game $G$, where the chosen distributions must have support in $[0, 1]$ for Sender and in $[0, \infty)$ for Receiver, means that even under favorable tie-breaking for Sender, Nature can generate Receiver types $r_q^\varepsilon = q + \varepsilon$ and keep Sender's utility to the same level as with Receiver-favoring tie-breaking. In particular, Nature can generate type $r_1^\varepsilon = 1 + \varepsilon$, which it would not be able to do if constrained by the upper bound. Thus, a Nash equilibrium of $G$ is equivalent to one of $F$.
\end{proof}

Since tie-breaking against Sender allows me to work with a well-defined minimizing Receiver type distribution for each posterior distribution, I choose this rule. Thus, combining both lemmas, I may apply existence and uniqueness results from MPC games with arbitrary tie-breaking rules and choice set
\begin{equation*}
	\mathcal{G} = \left\{\text{cdf } G \text{ over } [0, 1]\, | \, G \text{ is an MPC of } F\right\},
\end{equation*}
for Sender and 
\begin{equation*}
	\mathcal{T}' = \left\{\text{cdf } T \text{ over } \mathbb{R}_+\, | \, T \text{ is an MPC of } (1 - r^*) \, \delta_0 + r^* \delta_1 \right\}.
\end{equation*}
for Nature.

\subsection*{A2: Characterizing Sender's Optimal Distribution}
\label{subsec:a2}
Given the equivalence result of the prior section, a Nash equilibrium of the MPC game in \citet{H2015}, which the author calls a ``Captain Lotto game," provides a solution to the maxmin persuasion problem of Equation (\ref{eq:gen}). Thus the strategy of Player B in Theorem 4 of that work now gives an optimal posterior distribution for Sender. Theorems 4 and 5 of \citet{NA2018} show that the Nash equilibrium strategy for Player B in the Captain Lotto game is unique when $\pi \leq 1/2$, and therefore so is Sender's optimal posterior distribution. To complete the proof of Proposition \ref{prop:bin}, I replace the sufficient condition for Nash equilibrium when $\pi > 1/2$ in Theorem 10 of \citet{NA2018} with a necessary and sufficient condition for optimality of Sender's chosen posterior distribution.

I begin by showing that Sender's utility can be expressed as a function of $\bar{G}$, the concavification of $G$:
\begin{lemma}
	\label{lm:cav}
	Consider the maxmin persuasion problem of Equation (\ref{eq:gen}) and let $\bar{G}: [0, 1] \rightarrow [0, 1]$ be the concavification of $G$, i.e., the infimum over the set of concave functions $H: [0, 1] \rightarrow [0, 1]$ satisfying
	\begin{equation*}
		H(q) \geq G(q) \hspace{0.5em} \forall \hspace{0.5em} q \in [0, 1].
	\end{equation*}
	Then the following equality holds:
	\begin{equation*}
		\min_{T \in \mathcal{T}} \int \int \textbf{1}(q > r) \, dG(q) \, dT(r) = 1 - \bar{G}(r^*).
	\end{equation*}
\end{lemma}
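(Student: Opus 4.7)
The plan is to integrate out $q$ first via Fubini's theorem, reduce Nature's problem to maximizing a scalar integral against $G$ subject to the mean constraint, and then invoke the standard concavification identity (as in \citet{KG2011}). Because $G$ is right-continuous as a cdf and ties are broken against Sender, for each fixed $r$ we have
\begin{equation*}
\int \textbf{1}(q > r) \, dG(q) = G\big((r, 1]\big) = 1 - G(r),
\end{equation*}
so by Fubini,
\begin{equation*}
\int \int \textbf{1}(q > r) \, dG(q) \, dT(r) = 1 - \int G(r) \, dT(r).
\end{equation*}
Minimizing the left-hand side over $T \in \mathcal{T}$ is therefore equivalent to maximizing $\int G \, dT$.

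Next I would establish that
\begin{equation*}
\max_{T \in \mathcal{T}} \int G(r) \, dT(r) = \bar{G}(r^*).
\end{equation*}
For the upper bound, Jensen's inequality applied to the concave function $\bar{G}$ yields, for any $T \in \mathcal{T}$,
\begin{equation*}
\int G(r) \, dT(r) \leq \int \bar{G}(r) \, dT(r) \leq \bar{G}\!\left(\int r \, dT(r)\right) = \bar{G}(r^*).
\end{equation*}
For the matching lower bound, I would use the one-dimensional fact that the concave envelope of a bounded function at an interior point is realized by a two-point convex combination: there exist $r_1, r_2 \in [0, 1]$ and $\lambda \in [0, 1]$ with $\lambda r_1 + (1-\lambda) r_2 = r^*$ and $\lambda G(r_1) + (1-\lambda) G(r_2) = \bar{G}(r^*)$. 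The distribution $T^* = \lambda \delta_{r_1} + (1-\lambda) \delta_{r_2}$ lies in $\mathcal{T}$ and attains $\bar{G}(r^*)$. Combining the two displays yields $1 - \bar{G}(r^*)$, as claimed.

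The main obstacle is justifying that this supremum is actually attained rather than merely approached, since the cdf $G$ need not be continuous. The key observation is that $G$, being monotone and right-continuous, is upper semi-continuous on $[0, 1]$, so the map $(r_1, r_2, \lambda) \mapsto \lambda G(r_1) + (1-\lambda) G(r_2)$ is USC on the compact set of triples satisfying $\lambda r_1 + (1-\lambda) r_2 = r^*$ and therefore attains its supremum. This is what licenses writing a maximum (rather than a supremum) on both sides of the identity, and together with the Fubini step it completes the proof.
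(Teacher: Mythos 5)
Your argument is correct and takes essentially the same route as the paper: both reduce the double integral to $1 - \int G\,dT$ and then identify Nature's problem as a concavification problem. The paper simply cites Corollary 2 of \citet{KG2011} for the identity $\max_T \int G\,dT = \bar{G}(r^*)$, whereas you re-derive it directly (Jensen for the upper bound, a two-point representation for the lower bound); you also add a careful attainment argument via upper semicontinuity of the cdf $G$ that the paper leaves implicit. Both pieces of reasoning are sound.
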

\begin{proof}
	Manipulating the bounds of integration to rewrite Sender's objective function from Equation (\ref{eq:gen}) gives
	\begin{equation*}
		\begin{split}
			\int_{[0, 1]} \bigg(\int_{[0, 1]} \textbf{1}(q > r) \, dG(q)\bigg) \, dT(r) 
			& = \int_{[0, 1]} \bigg(\int_{[r, 1]} 1 \, dG(q)\bigg) \, dT(r) 
			\\& = \int_{[0, 1]} (1 - G(r)) \, dT(r).
		\end{split}
	\end{equation*}
	Then the minimzation portion of the problem can be written as 
	\begin{equation*}
		\max_{T \in \Delta([0, 1])} \int G(r) \, dT(r) \hspace{0.25cm} \text{ s.t. } \int r \, dT(r) = r^*,
	\end{equation*}
	where I have dropped the constant, rewritten the min as a max, and explicitly included the mean restriction to highlight the similarity to a Bayesian persuasion problem. In this case, the Receiver type $r$ fills the role of ``posterior belief," Nature's utility from a realized Receiver type is $G(r)$, and the ``prior" is the distribution with support $\left\{0, 1\right\}$ and mean $r^*$. This final point follows from the observation in Section \ref{sec:model} that when the prior distribution has binary support, the Bayes-plausibility constraint is the same as a mean restriction. Thus by Corollary 2 of \citet{KG2011}, Nature's utility is given by $\bar{G}$. the concavification of $G$ over the interval $[0, 1]$, evaluated at the prior mean $r^*$. Flipping the sign again, Sender's utility is $1 - \bar{G}(r^*)$.
\end{proof}

My necessary and sufficient condition is an immediate consequence of this result:
\begin{lemma}
	\label{lm:bigP}
	Let $\pi > 1/2$. Then a posterior distribution $G^*$ is optimal for Sender if and only if $\mathbb{E}_{G^*}[\omega] = \pi$ and $G^*(q) \leq q \hspace{0.5em} \forall \hspace{0.5em} q \in [0, 1]$. More than one distribution satisfying this condition always exists.
\end{lemma}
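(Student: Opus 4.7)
The plan is to reduce the problem to one about the concavification via Lemma \ref{lm:cav}, obtain a tight lower bound on $\bar{G}(r^*)$, and translate equality in that bound back into a condition on $G$ itself.

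\textbf{Step 1: Reduce to concavification.} By Lemma \ref{lm:cav}, Sender's utility from any $G \in \mathcal{G}$ equals $1 - \bar{G}(r^*)$. Maximizing Sender's payoff is therefore equivalent to minimizing $\bar{G}(r^*)$ over all Bayes-plausible $G$, which under binary support of $F$ reduces to the single moment condition $\mathbb{E}_G[\omega] = \pi$.

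\textbf{Step 2: Bound the concavification.} Any CDF $G$ on $[0,1]$ satisfies $G(0) \geq 0$ and $G(1) = 1$; since the constant function $1$ is a concave majorant of $G$, we get $\bar{G}(1) = 1$ and $\bar{G}(0) \geq 0$. Concavity of $\bar{G}$ evaluated at $r^* = (1 - r^*) \cdot 0 + r^* \cdot 1$ then yields
\begin{equation*}
	\bar{G}(r^*) \geq (1 - r^*) \bar{G}(0) + r^* \bar{G}(1) \geq r^*,
\end{equation*}
so Sender's utility is bounded above by $1 - r^*$.

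\textbf{Step 3: Characterize equality.} The crux of the necessity argument is that $\bar{G}(r^*) = r^*$ in fact forces $\bar{G}(q) = q$ on all of $[0,1]$. The chord from $(0, \bar{G}(0))$ to $(1, 1)$ has value at $r^*$ equal to $r^* + (1-r^*)\bar{G}(0)$; since $\bar{G}$ lies weakly above this chord by concavity, equality at $r^*$ forces $\bar{G}(0) = 0$. Then concavity with $\bar{G}(0) = \bar{G}(r^*) - r^* = \bar{G}(1) - 1 = 0$ gives that the slopes of $\bar{G}$ on $[0, r^*]$ are all $\geq 1$ and on $[r^*, 1]$ all $\leq 1$, while the two average slopes both equal $1$; thus $\bar{G}(q) = q$ throughout. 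Since $G \leq \bar{G}$, this yields $G^*(q) \leq q$ for all $q$, giving the necessary direction alongside the Bayes-plausibility condition $\mathbb{E}_{G^*}[\omega] = \pi$.

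\textbf{Step 4: Sufficiency and multiplicity.} Conversely, if $G^*$ has mean $\pi$ and satisfies $G^*(q) \leq q$ on $[0,1]$, then the identity $q \mapsto q$ is a concave function majorizing $G^*$, so $\bar{G^*}(q) \leq q$; combined with Step 2, $\bar{G^*}(r^*) = r^*$ and Sender attains the optimum $1 - r^*$. Finally, to exhibit multiple optimal distributions when $\pi > 1/2$, I would check that both $G_1 = U[2\pi - 1, 1]$ and $G_2 = (2 - 2\pi)\,U[0,1] + (2\pi - 1)\,\delta_1$ have mean $\pi$ and satisfy $G_i(q) \leq q$ pointwise (the inequality in each case reduces to $\pi \geq 1/2$). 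The main subtlety is Step 3—the rigidity argument that pointwise equality at a single interior point propagates to equality on the whole interval—while Steps 1, 2, and 4 are routine once the concavification reformulation is in hand.
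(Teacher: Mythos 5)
Your proof is correct and follows essentially the same approach as the paper: reduce to concavification via Lemma \ref{lm:cav}, observe that concavity with $\bar{G}(0)\geq 0$, $\bar{G}(1)=1$ bounds Sender's payoff by $1-r^*$, and show that equality forces $\bar{G}^*$ to be the identity, hence $G^*(q)\leq q$. Your Step 3 rigidity argument is a slightly more explicit version of the paper's brief observation that a weakly positive concave function through $(r^*,r^*)$ and $(1,1)$ must be the identity, and your two multiplicity examples differ only cosmetically from the paper's (the paper uses $q\mapsto q^n$ and the same truncated uniform).
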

\begin{proof}
	As established in Section \ref{sec:model}, the only constraint on a feasible distribution $G$ for Sender is that $\mathbb{E}_{G}[\omega] = \pi$; I show that the second constraint is both necessary and sufficient for optimality.
	
	Assume $G^*(q) \leq q \hspace{0.5em} \forall \hspace{0.5em} q \in [0, 1]$. Then the function $U(q) = q$ upper-bounds $G^*$ and is concave. Since $U$ is the pointwise-smallest concave function on $[0, 1]$ passing through the point $(1, 1)$, it must therefore be the concavification of $G^*$, and Sender's utility from $G^*$ is $1 - U(r^*) = 1 - r^*$. Because Nature may always choose a Receiver type distribution $T$ with $\supp(T) = \left\{0, 1\right\}$, Sender's utility from any posterior distribution is no more than $1 - r^*$ (the probability that Receiver type $r = 0$ is drawn from $T$). Thus $G^*$ attains the upper bound and is optimal for Sender. There are at least two such distributions for any $\pi > 1/2$. The first is given by solving $\pi = n/(n + 1)$ for $n$ and setting $G^*(q) = q^n$. The second is given by 
	\begin{equation*}
		G^*(q) = \begin{cases}
			0 & q \in [0, 2\pi - 1),\\
			(q + 1 - 2\pi)/(2 - 2\pi) & q \in (2\pi - 1, 1].
		\end{cases}
	\end{equation*}
	Therefore an optimal distribution always exists and is non-unique.
	
	Now assume $G^*$ is optimal for Sender; then, since I have just shown an optimal distribution exists, it must be that $1 - \bar{G}^*(r^*) = 1 - r^*$. But the only weakly positive concave function $H$ on $[0, 1]$ satisfying $H(1) = 1$ and $H(r^*) = r^*$ is $U(q) = q$. Any distinct concave function must have slope greater than 1 at $r^*$\textemdash any less and it would fail to pass through the point $(1, 1)$\textemdash and must therefore have $H(0) < 0$. Therefore $\bar{G}^* = U$ and $G^*(q) \leq q \hspace{0.5em} \forall \hspace{0.5em} q \in [0, 1]$.
\end{proof}
This lemma completes the proof of Proposition \ref{prop:bin}. However, the theorems I reference rely on lengthy computations of Sender's utility under different strategy profiles. In the next section, I provide a clearer geometric proof that does not rely on MPC games and instead highlights the usefulness of concavification.

\subsection*{A3: An Alternative Proof of Proposition \ref{prop:bin}}
\label{subsec:a3}
The concavification result of Lemma \ref{lm:cav} means that Sender's utility from any posterior distribution $G$ is a convex function of $r^*$. It can therefore be lower-bounded by a line tangent to that function through the fixed $r^*$ in the maxmin persuasion problem. The key step of my alternative proof of Proposition \ref{prop:bin} is to show that if $G$ gives Sender a higher utility than the optimal distribution $G^*$, then that tangent line implicitly defines a cdf whose mean is greater than $\pi$. Because the tangent line lies below the function $1 - G$, it must therefore be that $G$ itself has a mean greater than $\pi$, and thus $G$ is not a Bayes-plausible posterior distribution.

Towards establishing this result, consider \textit{upper-truncated uniform posterior distributions} (henceforth UTUs), a class of posterior distributions which place mass $x \geq 0$ on posterior $q = 0$, equal mass on all posteriors $q \in (0, r_h]$ for some $r_h \leq 1$, and no mass on posteriors $q \in (r_h, 1]$. I can use Bayes-plausibility to solve for the unique value of $r_h$ corresponding to a given $x$, so that a UTU is fully characterized by $x$:
\begin{equation*}
	\begin{split}
		\pi & = \int q \, dG_x(q) = \int 1 - G_x(q) \, dq = \frac{(1 - x) \, r_h(x)}{2}
		\\& \Leftrightarrow r_h(x) = \frac{2\pi}{1 - x},
	\end{split}
\end{equation*}
Since $r_h$ is uniquely determined by $x$, I denote a UTU by $G_x$. The following lemma shows that a single choice of $x$ is optimal among all UTUs and can be written as a closed-form function of $r^*$: 
\begin{lemma}
	\label{lm:utu}
	Let $\pi \leq 1/2$. Then if $r^* \leq \pi$, Sender's unique optimal UTU is $G_0$; if $\pi \leq r^* \leq 1/2$, it is $G_{1 - \pi/r^*}$; and if $1/2 \leq r^*$ it is $G_{1 - 2\pi}$.
\end{lemma}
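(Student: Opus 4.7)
The plan is to reduce the characterization to a one-dimensional concave optimization in $x$. By Lemma \ref{lm:cav}, Sender's utility from a UTU $G_x$ equals $1-\bar{G}_x(r^*)$, so it suffices to compute $\bar{G}_x$ explicitly, express Sender's payoff as a function $u(x)$ alone, and maximize over the feasible set $x\in[0,1-2\pi]$ (the range that keeps $r_h(x)=2\pi/(1-x)$ in $[2\pi,1]$, which is nonempty precisely because $\pi\le 1/2$).

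The first step is to observe that $G_x$ is already equal to its own concavification. On $[0,r_h(x)]$ the UTU is linear with slope $(1-x)/r_h(x)$, rising from $x$ at $q=0$ to $1$ at $q=r_h(x)$, and on $[r_h(x),1]$ it is flat at $1$. Thus $G_x$ is piecewise linear on $[0,1]$ with nonincreasing slope, hence concave, and so $\bar{G}_x=G_x$. In particular, for $r^*\le r_h(x)$,
\begin{equation*}
G_x(r^*) = x + (1-x)\frac{r^*}{r_h(x)} = x + (1-x)^2\frac{r^*}{2\pi},
\end{equation*}
while for $r^*\ge r_h(x)$, $G_x(r^*)=1$. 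Defining $u(x)=1-G_x(r^*)$, the latter regime gives $u(x)=0$ and is dominated; in the relevant regime,
\begin{equation*}
u(x) = 1 - x - (1-x)^2\,\frac{r^*}{2\pi},
\end{equation*}
which is a strictly concave quadratic in $x$ since $u''(x)=-r^*/\pi<0$.

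The second step is elementary single-variable optimization on $[0,1-2\pi]$. The first-order condition $u'(x)=-1+(1-x)r^*/\pi=0$ gives an unconstrained maximizer $x^*=1-\pi/r^*$. Projecting onto the feasible interval yields exactly the three cases of the lemma: if $r^*\le\pi$ then $x^*\le 0$ and $u$ is decreasing on $[0,1-2\pi]$, so $G_0$ is optimal; if $\pi\le r^*\le 1/2$ then $x^*=1-\pi/r^*\in[0,1-2\pi]$ is an interior maximum, so $G_{1-\pi/r^*}$ is optimal; if $r^*\ge 1/2$ then $x^*\ge 1-2\pi$ and $u$ is increasing on $[0,1-2\pi]$, so $G_{1-2\pi}$ is optimal. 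Strict concavity of $u$ gives uniqueness in each case, and continuity of $u$ across the kink at $x=1-2\pi/r^*$ (where the two branches of $u$ meet at value $0$) ensures this analysis covers the dominated branch as well.

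The main obstacle, and really the only non-mechanical step, is the identification $\bar{G}_x=G_x$: because $G_x$ has an atom at $q=0$ one has to check that the candidate concavification need not be raised above the cdf value $x$ at that endpoint. This is immediate from the piecewise-linear, nonincreasing-slope description above, but it is the step that converts the problem from one about distributions into the tractable scalar optimization solved in the previous paragraph.
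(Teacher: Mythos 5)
Your proposal is correct and takes essentially the same approach as the paper: both invoke Lemma \ref{lm:cav}, observe that $G_x$ is concave and hence equal to its own concavification, reduce Sender's payoff to the scalar quadratic $1-x-(1-x)^2 r^*/(2\pi)$, and project the unconstrained maximizer $x^*=1-\pi/r^*$ onto $[0,1-2\pi]$. Your additional remark that strict concavity of $u$ yields uniqueness makes explicit a point the paper leaves to the monotonicity description, but the argument is the same.
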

\begin{proof}
	By construction, any UTU $G_x$ is concave and is therefore equal to its concavification $\bar{G}_x$. By Lemma \ref{lm:cav}, the utility from a UTU $G_x$ is therefore
	\begin{equation*}
		1 - \bar{G}_x(r^*) = 1 - G_x(r^*) = \left\{(1 - x) - r^* \frac{(1 - x)^2}{2\pi}\right\}_+.
	\end{equation*}
	The first-order condition in $x$ for the expression in brackets is
	\begin{equation*}
		-1 + r^* \, \frac{1 - x}{\pi} = 0 \Leftrightarrow x_{\text{FOC}} = 1 - \frac{\pi}{r^*}.
	\end{equation*}
	The bracketed expression is increasing in $x$ when $x < x_{\text{FOC}}$ and decreasing in $x$ when $x > x_{\text{FOC}}$. Since $x \in [0, 1 - 2\pi]$, if $r^* < \pi$ the constrained optimal solution is $x^* = 0$ and if $r^* > 1/2$ the constrained optimal solution is $x^* = 1 - 2\pi$; otherwise the optimum is the interior solution $x^* = x_{\text{FOC}} = 1 - \pi/r^*$.
\end{proof}

I now prove two lemmas describing the relationship between the UTU $G_{1 - 2\pi}$ and the function $1 - \bar{G}$ derived from an arbitrary posterior distribution $G$. The first establishes that if, for some posterior distribution $G$, the function $1 - G$ falls below $1 - G_{1 - 2\pi}$ at some mean Receiver type $q$, Sender's utility from $G$ remains below her utility from $G_{1 - 2\pi}$ for all higher Receiver types: 
\begin{lemma}
	\label{lm:bound1}
	Let $G$ be a cdf on $[0, 1]$. Then if there is $q \in [0, 1)$ such that
	\begin{equation*}
		1 - G(q) < 1 - G_{1 - 2\pi}(q),
	\end{equation*}
	then it is also the case that
	\begin{equation*}
		1 - \bar{G}(q') < 1 - G_{1 - 2\pi}(q') \hspace{0.5em} \forall \hspace{0.5em} q' \in [q, 1).
	\end{equation*}
\end{lemma}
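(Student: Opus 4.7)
The plan is to use concavity of $\bar{G}$ together with the fact that $G_{1-2\pi}$ is an affine function on $[0,1]$. Concretely, from the closed form $G_{1-2\pi}(q)=(1-2\pi)+2\pi q$ computed via the UTU normalization, the function $1-G_{1-2\pi}$ is just an affine function on $[0,1]$ that vanishes at $q=1$. This rigidity is what drives the argument.

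First I would observe three boundary/ordering facts: (i) $\bar{G}(1)=1=G_{1-2\pi}(1)$, since $\bar G$ is a concave upper envelope of a cdf on $[0,1]$ taking values in $[0,1]$; (ii) from the hypothesis $1-G(q)<1-G_{1-2\pi}(q)$, i.e.\ $G(q)>G_{1-2\pi}(q)$, and the bound $\bar G(q)\ge G(q)$, we deduce the strict inequality $\bar G(q)>G_{1-2\pi}(q)$ at the given point $q$; and (iii) $G_{1-2\pi}$ is affine on $[0,1]$.

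Second, I would introduce the chord
\begin{equation*}
L(q') \;=\; \frac{(1-q')\,\bar G(q) + (q'-q)}{1-q}, \qquad q'\in[q,1],
\end{equation*}
from $(q,\bar G(q))$ to $(1,1)$. By concavity of $\bar G$ on $[0,1]$, I have $\bar G(q')\ge L(q')$ for every $q'\in[q,1]$. Now $L$ and $G_{1-2\pi}$ are both affine on $[q,1]$ and both equal $1$ at $q'=1$, while at $q'=q$ we have $L(q)=\bar G(q)>G_{1-2\pi}(q)$. Two affine functions that agree at one endpoint and are strictly ordered at the other must remain strictly ordered throughout the half-open interval away from the agreement point, so $L(q')>G_{1-2\pi}(q')$ for all $q'\in[q,1)$. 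Chaining the inequalities gives $\bar G(q')\ge L(q')>G_{1-2\pi}(q')$ on $[q,1)$, which rearranges to the desired $1-\bar G(q')<1-G_{1-2\pi}(q')$.

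There is no real obstacle: the only delicate point is ensuring that the strict inequality survives the weak concavity bound $\bar G(q')\ge L(q')$, which it does because the strict gap is already present in the affine comparison between $L$ and $G_{1-2\pi}$. The argument also makes transparent why the claim must fail at $q'=1$, namely because both functions are forced to hit $1$ there, explaining the half-open interval $[q,1)$ in the statement.
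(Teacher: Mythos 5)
Your proof is correct. You directly exploit the facts that $G_{1-2\pi}(q)=(1-2\pi)+2\pi q$ is affine, that $\bar G(1)=1=G_{1-2\pi}(1)$, and that $\bar G(q)\ge G(q)>G_{1-2\pi}(q)$, and then use the chord inequality for concave functions on $[q,1]$ to bound $\bar G$ below by an affine function that strictly dominates $G_{1-2\pi}$ on $[q,1)$. The paper proves the same statement by contradiction: assuming $1-\bar G$ crosses back above $1-G_{1-2\pi}$ at some $q'\in[q,1)$, it finds a point $q_1\le q'$ where the slope of $1-\bar G$ strictly exceeds that of the (affine) function $1-G_{1-2\pi}$, and a point $q_2\ge q'$ where the slope relationship reverses (using the common endpoint value $0$ at $q=1$), contradicting convexity of $1-\bar G$. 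The two arguments use the same structural ingredients (concavity, affinity, and agreement at $q=1$), but yours is direct and stated in terms of chords rather than one-sided derivatives, which is arguably cleaner and avoids the implicit appeal to monotonicity of the subgradient; the paper's slope-based phrasing has the advantage of foreshadowing the tangent-line machinery used immediately afterward in Lemma \ref{lm:bound2} and Lemma \ref{lm:bin}.
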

\begin{proof}
	The proof is by contradiction. Assume there is $q$ such that
	\begin{equation*}
		1 - \bar{G}(q) \leq 1 - G(q) < 1 - G_{1 - 2\pi}(q),
	\end{equation*}
	but that there is $q' \in [q, 1)$ such that
	\begin{equation*}
		1 - \bar{G}(q') \geq 1 - G_{1 - 2\pi}(q').
	\end{equation*}
	Since $1 - \bar{G}(q) < 1 - G_{1 - 2\pi}(q)$ but $1 - \bar{G}(q') \geq 1 - G_{1 - 2\pi}(q')$, it must be that there is $q_1 \in [q, q']$ where the slope of $1 - \bar{G}$ is strictly greater than that of $1 - G_{1 - 2\pi}$. But because $G$ and $G_{1 - 2\pi}$ are cdfs and $1 - \bar{G}$ is weakly positive, 
	\begin{equation*}
		1 - \bar{G}(1) = 0 = 1 - G(1) = 1 - G_{1 - 2\pi}(1),
	\end{equation*}
	so there must be $q_2 \in [q', 1]$ where the slope of $1 - \bar{G}$ is weakly less than that of $1 - G_{1 - 2\pi}$. Then $q_1 \leq q_2$ but the slope of $1 - \bar{G}$ at $q_1$ is strictly greater than at $q_2$, violating convexity of $1 - \bar{G}$, and thus concavity of $\bar{G}$.
\end{proof}

The next lemma describes features of $1 - \bar{G}$ when the posterior distribution $G$ weakly improves on Sender's utility from $G_{1 - 2\pi}$:
\begin{lemma}
	\label{lm:bound2}
	If $G \neq G_{1 - 2\pi}$ is a cdf such that
	\begin{equation*}
		1 - \bar{G}(r^*) \geq 1 - G_{1 - 2\pi}(r^*) \text{ and } \int q \, dG(q) = \pi,
	\end{equation*}
	then the slope\footnote{Because $1 - \bar{G}$ is convex, it is continuous on $(0, 1)$ and its left and right derivatives are always well-defined. The function $1 - G$ for any UTU $G$ is also continuous with well-defined left and right derivatives. When referring to the slope or to a tangent line I consider the right derivative.} of $1 - \bar{G}$ at $r^*$ is strictly less than the slope of $1 - G_{1 - 2\pi}$ at $r^*$.
\end{lemma}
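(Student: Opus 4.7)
The plan is to argue by contradiction. Suppose the right derivative $s$ of $1 - \bar{G}$ at $r^*$ satisfies $s \geq -2\pi$. The starting observation is that $G_{1 - 2\pi}$ is explicit: it places an atom of size $1 - 2\pi$ at $0$ and is uniform on $(0, 1]$, so $1 - G_{1 - 2\pi}(q) = 2\pi(1 - q)$ for $q \in (0, 1]$, a line of slope $-2\pi$. Since $\bar{G}$ is concave, $1 - \bar{G}$ is convex, so it lies weakly above its tangent line at $r^*$. Combining this tangent-line bound with the hypothesis $1 - \bar{G}(r^*) \geq 2\pi(1 - r^*)$ gives
\begin{equation*}
	1 - \bar{G}(q) \;\geq\; 2\pi(1 - r^*) + s(q - r^*) \qquad \forall\, q \in [0, 1].
\end{equation*}

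The first step is to rule out the strict case $s > -2\pi$. Evaluating the display above at $q = 1$ and using $1 - \bar{G}(1) = 0$ yields $0 \geq (1 - r^*)(2\pi + s)$. Since $r^* \in (0, 1)$ makes $1 - r^* > 0$, the assumption $s > -2\pi$ would force $0 > 0$, a contradiction. Hence the only remaining case is $s = -2\pi$, and this is the main obstacle: tangent-line convexity alone is not enough to reach a contradiction here, so the mean constraint has to be brought in.

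For the borderline case $s = -2\pi$, the tangent line becomes exactly $q \mapsto 2\pi(1 - q)$, so the display above upgrades to $1 - \bar{G}(q) \geq 1 - G_{1 - 2\pi}(q)$ on $(0, 1]$. Using $\bar{G} \geq G$ pointwise, this yields $G(q) \leq G_{1 - 2\pi}(q)$ for all $q \in (0, 1]$. Now I invoke the Bayes-plausibility/mean condition: integration by parts gives $\int_0^1 G(q)\,dq = 1 - \pi = \int_0^1 G_{1 - 2\pi}(q)\,dq$, so the nonnegative function $G_{1 - 2\pi} - G$ has zero integral on $[0, 1]$ and therefore vanishes almost everywhere. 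Right-continuity of cdfs upgrades this to $G = G_{1 - 2\pi}$ everywhere, contradicting the hypothesis $G \neq G_{1 - 2\pi}$. Together with the first step, this shows $s < -2\pi$, which is the claimed strict inequality.
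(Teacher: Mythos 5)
Your proof is correct, and it takes a genuinely different route from the paper's. The paper establishes the slope inequality by locating a witness point $q_d\in(r^*,1]$ where $1-G(q_d)<1-G_{1-2\pi}(q_d)$ (using the mean constraint and right-continuity to show such a $q_d$ must exist), invoking the auxiliary Lemma~\ref{lm:bound1} to guarantee $q_d>r^*$, and then applying a mean-value-type argument plus convexity of $1-\bar G$ to push the slope comparison down to $r^*$. You instead work directly from the subgradient inequality for the convex function $1-\bar G$ at $r^*$: combining the tangent-line bound with the hypothesis $1-\bar G(r^*)\geq 2\pi(1-r^*)$, evaluation at $q=1$ (where $\bar G(1)=1$) immediately rules out any slope strictly greater than $-2\pi$, and the borderline slope $-2\pi$ is then killed by noting that the resulting pointwise ordering $G\leq G_{1-2\pi}$ together with the shared mean $\pi$ forces $G=G_{1-2\pi}$ almost everywhere, hence everywhere by right-continuity of cdfs, contradicting $G\neq G_{1-2\pi}$. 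Your argument is more economical: it avoids Lemma~\ref{lm:bound1} entirely (which in the paper is used only for this lemma) and replaces the witness-point and MVT machinery with a one-line evaluation at the boundary. Both proofs ultimately lean on the same two facts---convexity of $1-\bar G$ and $\int q\,dG=\pi$---but your use of the mean constraint (zero integral of a nonnegative gap) is the dual of the paper's (nonzero integral of a nonnegative gap), and is arguably the cleaner of the two.
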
 
\begin{proof}
	I first show that there is $q_d \in (r^*, 1]$ such that 
	\begin{equation*}
		1 - \bar{G}(q_d) \leq 1 - G(q_d) < 1 - G_{1 - 2\pi}(q_d).
	\end{equation*}
	Note that for $G$ to be distinct from $G_{1 - 2\pi}$, there must be some posterior $q_d \in [0, 1]$ where $1 - G(q_d) \neq 1 - G_{1 - 2\pi}(q_d)$. It cannot be the case that 
	\begin{equation*}
		1 - G(q) \geq 1 - G_{1 - 2\pi}(q) \hspace{0.5em} \forall \hspace{0.5em} q \in [0, 1]\ \text{ and } 1 - G(q_d) > 1 - G_{1 - 2\pi}(q_d).
	\end{equation*}
	If that is the case, then because $G$ is a cdf, it is right-continuous, and therefore fixing $\varepsilon > 0$ there is $\delta(\varepsilon) > 0$ such that 
	\begin{equation*}
		1 - G(q') > 1 - G(q_d) - \varepsilon \hspace{0.5em} \forall \hspace{0.5em} q' \in [q_d, q_d + \delta(\varepsilon)).
	\end{equation*}
	Since the slope of $1 - G_{1 - 2\pi}$ is no greater than 0, setting $\varepsilon \in (0, G_{1 - 2\pi}(q_d) - G(q_d))$ ensures that 
	\begin{equation*}
		1 - G(q') > 1 - G_{1 - 2\pi}(q_d) \geq 1 - G_{1 - 2\pi}(q') \hspace{0.5em} \forall \hspace{0.5em} q' \in [q_d, q_d + \delta(\varepsilon)).
	\end{equation*}
	Therefore there is a non-degenerate interval where $1 - G > 1 - G_{1 - 2\pi}$, and by assumption $1 - G \geq 1 - G_{1 - 2\pi}$ everywhere on $[0, 1]$, so integrating the inequality gives a violation of Bayes-plausibility:
	\begin{equation*}
		\int q \, dG(q) = \int 1 - G(q) \, dq > \int 1 - G_{1 - 2\pi}(q) \, dq = \int q \, dG_{1 - 2\pi}(q) = \pi.
	\end{equation*}
	Thus by contradiction there must be $q_d \in [0, 1]$ such that
	\begin{equation*}
		1 - \bar{G}(q_d) \leq 1 - G(q_d) < 1 - G_{1 - 2\pi}(q_d).
	\end{equation*}
	By Lemma \ref{lm:bound1}, since $1 - \bar{G}(r^*) \geq 1 - G_{1 - 2\pi}(r^*)$, there is no $q \in [0, r^*)$ where $1 - G(q) < 1 - G_{1 - 2\pi}(q)$. Thus it must be that 
	\begin{equation*}
		1 - G(q) \geq 1 - G_{1 - 2\pi}(q) \hspace{0.5em} \forall \hspace{0.5em} q \in [0, r^*],
	\end{equation*}
	and therefore $q_d \in (r^*, 1]$.
	
	The claim now follows by the argument in Lemma \ref{lm:bound1}. Since $1 - \bar{G}(r^*) \geq 1 - G_{1 - 2\pi}(r^*)$ and $1 - \bar{G}(q_d) < 1 - G_{1 - 2\pi}(q_d)$, there is $q' \in [r^*, q_d]$ where the slope of $1 - \bar{G}$ is strictly less than that of $1 - G_{1 - 2\pi}$. But since $\bar{G}$ is concave, $1 - \bar{G}$ is convex and its slope cannot increase as $q$ decreases; the slope of $1 - \bar{H}$ at $r^*$ must therefore be strictly less than that of $1 - G_{1 - 2\pi}$ at $r^*$. 
\end{proof}
The implication is vacuous for $r^* \leq 1/2$, where there are no posterior distributions that meet the conditions; however, even in that case the result is central to a proof by contradiction. 

With these three lemmas in hand, I now provide an alternative proof of the case $\pi \leq 1/2$ in Proposition \ref{prop:bin}:
\begin{lemma}
	\label{lm:bin}
	If $\pi \leq 1/2$, Sender's unique optimal posterior distribution is as follows:
	\begin{itemize}
		\item If $r^* \leq \pi \leq 1/2$,
		\begin{equation*}
			G^*(q) = U[0, 2\pi].
		\end{equation*}
		\item If $\pi \leq r^* \leq 1/2$,
		\begin{equation*}
			G^*(q) = \bigg(1 - \frac{\pi}{r^*}\bigg) \, \delta_0 + \frac{\pi}{r^*} \, U[0, 2r^*].
		\end{equation*}
		\item If $\pi \leq 1/2 \leq r^*$,
		\begin{equation*}
			G^*(q) = (1 - 2\pi) \, \delta_0 + 2\pi \, U[0, 1].
		\end{equation*}
	\end{itemize}
\end{lemma}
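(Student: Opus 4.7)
My plan is to argue by contradiction and show that if $H \neq G^{*}$ is Bayes-plausible with $1 - \bar{H}(r^{*}) \geq 1 - G^{*}(r^{*})$, where $G^{*}$ is the candidate UTU of Lemma \ref{lm:utu}, then a tangent-line construction yields a violation of $\int_{0}^{1}(1 - H(q)) \, dq = \pi$. By Lemma \ref{lm:cav}, this together with Lemma \ref{lm:utu} suffices for both optimality and uniqueness of $G^{*}$ in each sub-case.

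The first step is to strengthen Lemma \ref{lm:bound2} (and as a prerequisite, Lemma \ref{lm:bound1}) by replacing the reference distribution $G_{1 - 2\pi}$ with the sub-case-specific $G^{*}$. The proofs of both lemmas adapt directly: they use only convexity of $1 - \bar{H}$, that $1 - G^{*}$ is linear on the interior of its support, that $1 - G^{*}$ and $1 - \bar{H}$ agree at $q = 1$, and that $\int(1 - G^{*}) \, dq = \pi$. The conclusion is that the right derivative $s$ of $1 - \bar{H}$ at $r^{*}$ satisfies $|s| > |s^{*}|$, where $s^{*}$ is the slope of $1 - G^{*}$ on its uniform piece, equal to $-1/(2\pi)$, $-\pi/(2{r^{*}}^{2})$, or $-2\pi$ in the three sub-cases respectively.

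Next, let $L(q) = L(r^{*}) + s(q - r^{*})$ be the tangent line to $1 - \bar{H}$ at $r^{*}$, with $L(r^{*}) = 1 - \bar{H}(r^{*}) \geq 1 - G^{*}(r^{*})$. Convexity of $1 - \bar{H}$ together with $\bar{H} \in [0,1]$ and $\bar{H}(1) = 1$ yields $\max\{L(q), 0\} \leq 1 - \bar{H}(q) \leq 1 - H(q)$ pointwise, together with the endpoint bounds $L(0) \leq 1$ and $L(1) \leq 0$. For $r^{*} \leq \pi$ the contradiction is immediate: combining $L(0) \leq 1$ with $L(r^{*}) \geq 1 - r^{*}/(2\pi)$ forces $|s| \leq (1 - L(r^{*}))/r^{*} \leq 1/(2\pi) = |s^{*}|$, conflicting with $|s| > |s^{*}|$. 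For the remaining sub-cases $\pi \leq r^{*} \leq 1/2$ and $r^{*} \geq 1/2$, endpoint bounds alone do not close the gap, so I would explicitly compute $A(L(r^{*}), |s|) := \int_{0}^{1} \max\{L(q), 0\} \, dq$, which in each case is a triangular area determined by the intersections of $L$ with the coordinate axes, and show $A > \pi$ under the constraints $L(r^{*}) \geq 1 - G^{*}(r^{*})$ and $|s| > |s^{*}|$. Combined with $A \leq \int_{0}^{1}(1 - H(q)) \, dq = \pi$, this gives the contradiction.

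The main obstacle is the area computation for the two non-trivial sub-cases. Here $A = \pi$ holds tightly at the boundary $L \equiv 1 - G^{*}$, so the strict inequality $A > \pi$ must be argued by a perturbation analysis in the $(L(r^{*}), |s|)$ plane that exploits the strict slope gap from Step 1 along with the additional endpoint constraint $L(1) \leq 0$; the computation reduces to verifying that $\alpha^{2}/(2|s|) > \pi$ for $\alpha := L(r^{*}) + |s| r^{*}$ in the feasible region, which is where the sub-case-specific geometry of $G^{*}$ enters. Uniqueness then follows by tracing the boundary case $A = \pi$ back through the concavification: equality forces $\bar{H}$ to coincide with the concavification of $G^{*}$ on the uniform piece, and Bayes-plausibility together with $\bar{H}(r^{*}) = G^{*}(r^{*})$ pins down $H = G^{*}$ everywhere.
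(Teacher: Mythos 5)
Your overall strategy is sound and close to the paper's own: both arguments start from Lemma \ref{lm:cav}, take a tangent line $L$ to $1-\bar{H}$ at $r^*$, show $L_+\leq 1-\bar{H}\leq 1-H$, and then turn the inequality $\int_0^1 L_+ \,dq \leq \pi$ into a contradiction. But there is a genuine gap in the step where you claim ``the proofs of both lemmas adapt directly'' when replacing $G_{1-2\pi}$ by $G^*$. The paper's Lemma \ref{lm:bound1} relies crucially on $1-G_{1-2\pi}$ having a single constant slope on all of $[0,1]$: the proof finds $q_1\in[q,q']$ where the slope of $1-\bar{G}$ exceeds that of $1-G_{1-2\pi}$ and $q_2\in[q',1]$ where it is weakly below, and derives a contradiction with convexity because the reference slope is the same at $q_1$ and $q_2$. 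For the first two sub-cases, $G^*$ is upper-truncated at $r_h=2\pi$ or $r_h=2r^*$, so $1-G^*$ has slope $-\lvert s^*\rvert$ on $[0,r_h]$ and $0$ on $[r_h,1]$. If $q_2$ falls in $(r_h,1]$, the mean-value comparison only gives ``slope of $1-\bar{G}$ at $q_2\leq 0$,'' which is not in tension with ``slope at $q_1 > -\lvert s^*\rvert$.'' Indeed, one can check that when $q_d\leq r^*$, the argument actually yields $\lvert s\rvert < \lvert s^*\rvert$, the opposite of what you want; ruling out $q_d\leq r^*$ requires exactly the kind of integral bound you are trying to obtain, so using the strengthened Lemma \ref{lm:bound2} as a free input is circular. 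Consequently, your ``immediate'' contradiction for $r^*\leq \pi$ rests on a claim that has no short proof. (The paper's argument avoids this by keeping $G_{1-2\pi}$, whose support is all of $[0,1]$, as the reference in Lemma \ref{lm:bound2} and only afterwards introducing the auxiliary UTU $G_{\mathrm{alt}}$.)

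The good news is that your proposed area computation, which you reserve for the two harder sub-cases, actually closes the gap entirely and works uniformly, so the proof can be salvaged without the strengthened slope lemma. Writing $c=L(r^*)\geq 1-G^*(r^*)$ and $\sigma=\lvert s\rvert$, the constraints $L(0)\leq 1$ and $L(1)\leq 0$ (together with the triangle formula $A = (c+\sigma r^*)^2/(2\sigma)$) already pin down a feasible region in the $(c,\sigma)$-plane on which $A$ is increasing in $c$ and, at $c = 1-G^*(r^*)$, attains its constrained minimum value $\pi$ exactly at $\sigma = \lvert s^*\rvert$: in sub-case $r^*\leq\pi$ the $L(0)\leq 1$ constraint alone forces $\sigma\leq \lvert s^*\rvert$ and $A$ is decreasing in $\sigma$ there; in sub-cases $\pi\leq r^*\leq 1/2$ and $r^*\geq 1/2$ the $L(1)\leq 0$ constraint forces $\sigma\geq \lvert s^*\rvert$ and $A$ is increasing there. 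In every sub-case $A=\pi$ only at $(c,\sigma)=(1-G^*(r^*),\lvert s^*\rvert)$, i.e.\ only if $L=1-G^*$, and then $1-H\geq 1-G^*$ pointwise with equal means forces $H=G^*$. This delivers optimality and uniqueness directly, renders the slope bound a byproduct rather than a prerequisite, and makes Lemma \ref{lm:utu} unnecessary. I'd recommend restructuring the write-up around this uniform area argument and dropping both the ``adapt directly'' claim and the sub-case-$1$ shortcut.
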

\begin{proof}
	The proof is by contradiction. Let $G$ be a proposed alternative posterior distribution that delivers weakly greater utility for Sender than $G^*$. By Lemma \ref{lm:cav} (to define the utility from each posterior distribution) and Lemma \ref{lm:utu} (since $G^*$ is a UTU, it must be uniquely optimal among UTUs), it is the case that
	\begin{equation*}
		1 - \bar{G}(r^*) \geq \bar{G}^*(r^*) = 1 - G^*(r^*) \geq 1 - G_{1 - 2\pi}(r^*).
	\end{equation*}
	Consider the line $L$ that is tangent to $1 - \bar{G}$ at $r^*$.\footnote{Recall that if $r^*$ is a kink point of $1 - \bar{G}$, I use the right derivative of $1 - \bar{G}$ to define the slope.} Because $\bar{G}$ is convex and weakly positive (recall that the line $\ell(q) = 0$ is convex and lower-bounds $1 - G$), it is lower-bounded by $L_+(q) = \max\left\{L(q), 0\right\}$. Furthermore, by Lemma \ref{lm:bound2}, the slope of $L$ is less than that of $1 - G_{1 - 2\pi}$, so it must be that 
	\begin{equation*}
		1 \geq 1 - G(0) \geq 1 - \bar{G}(0) \geq L_+(0) > 1 - G_{1 - 2\pi}(0) = 2\pi.
	\end{equation*}
	
	For any $x \in [0, 1 - 2\pi]$, there is a corresponding UTU $G_x$ with $G_x(0) = x$. Since $1 - L_+(0) \in [0, 1 - 2\pi]$, there exists an UTU\textemdash call it $G_\text{alt}$ for alternative\textemdash with $1 - G_\text{alt}(0) = L_+(0)$. If $G_\text{alt} \neq G^*$, then because $G^*$ is uniquely optimal among UTUs, it must be that
	\begin{equation*}
		L_+(r^*) = 1 - \bar{H}(r^*) \geq 1 - \bar{G^*}(r^*) > \bar{G}_\text{alt}(r^*) = 1 - G_\text{alt}(r^*).
	\end{equation*}
	Then, because $L$ and $1 - G_\text{alt}$ intersect at $q = 0$ but $L$ is greater than $1 - G_\text{alt}$ at $q = r^*$, it must be that the slope of $L$ is strictly greater than the slope of the strictly downward-sloping portion of $1 - G_\text{alt}$; therefore in fact
	\begin{equation*}
		L_+(q) \geq 1 - G_\text{alt}(q) \hspace{0.5em} \forall \hspace{0.5em} q \in [0, 1] \hspace{0.25cm} \text{ and } \hspace{0.25cm} L_+(q') > 1 - G_\text{alt}(q') \hspace{0.5em} \forall \hspace{0.5em} q' \in (0, r^*].
	\end{equation*}
	Integrating the expression and using the fact that $L_+$ lower-bounds $1 - \bar{G}$, which in turn lower-bounds $1 - G$, it is the case that
	\begin{equation*}
		\begin{split}
			\int q \, dG(q) = \int 1 - G(q) \, dq \geq \int 1 - \bar{G}(q) dq & \geq \int L_+(q) \, dq
			\\&  > \int 1 - G_\text{alt}(q) \, dq = \int q \, dG_\text{alt}(q) = \pi.
		\end{split}
	\end{equation*}
	The first and penultimate equalities are both from integration by parts, and the final equality is because all UTUs (including $G_\text{alt}$) are Bayes-plausible by construction. Therefore $G$ violates Bayes-plausibility and is not a valid alternative distribution.  
	
	Even when $G_\text{alt} = G^*$, it is still the case that, whenever 
	\begin{equation*}
		L_+(r^*) = 1 - \bar{G}(r^*) > 1 - \bar{G}^*(r^*) = 1 - G^*(r^*),
	\end{equation*} 
	the slope of $L$ is greater than the slope of the strictly downward-sloping portion of $1 - G^*$. In this case, $L_+(q) > 1 - G^*(q) \hspace{0.5em} \forall q \hspace{0.5em} \in (0, r^*]$ and
	\begin{equation*}
		\begin{split}
			\int q \, dG(q) = \int 1 - G(q) \, dq \geq \int 1 - \bar{G}(q) dq & \geq \int L_+(q) \, dq 
			\\& > \int 1 - G^*(q) \, dq = \int q \, dG^*(q) = \pi,
		\end{split}
	\end{equation*}
	just as before. Thus $G$ again violates Bayes-plausibility. 
	
	If instead $G_\text{alt} = G^*$ but now $L_+(r^*) = 1 - G(r^*)$, it must be the case that $L$ and the strictly downward-sloping portion of $1 - G$ have the same slope, so in fact 
	\begin{equation*}
		L_+(q) = 1 - G^*(q) \hspace{0.5em} \forall \hspace{0.5em} q \in [0, 1].
	\end{equation*}
	Then then there are two possible cases. The first is trivial: 
	\begin{equation*}
		1 - G(q) = L_+(q) = 1 - G^*(q) \hspace{0.5em} \forall \hspace{0.5em} q \in [0, 1],
	\end{equation*}
	so that $G$ is not a deviation at all. In the second, there must be some $q \in [0, 1]$ so that $1 - G(q) > L_+(q)$; recall that $L_+$ lower-bounds $1 - G$, and thus the direction of the inequality is known. Because $G$ is a cdf, it is right-continuous, and therefore fixing $\varepsilon > 0$ there is $\delta(\varepsilon) > 0$ such that 
	\begin{equation*}
		1 - G(q') > 1 - G(q) - \varepsilon \hspace{0.5em} \forall \hspace{0.5em} q' \in [q, q + \delta(\varepsilon)).
	\end{equation*}
	Since the slope of $L_+$ is no greater than 0, setting $\varepsilon \in (0, 1 - G(q) - L_+(q))$ ensures that 
	\begin{equation*}
		1 - G(q') > L_+(q) \geq L_+(q') \hspace{0.5em} \forall \hspace{0.5em} q' \in [q, q + \delta(\varepsilon)).
	\end{equation*}
	Therefore there is a non-degenerate interval where $1 - G > L_+$, and $1 - G \geq L_+$ everywhere on $[0, 1]$, so integrating the inequality gives
	\begin{equation*}
		\int q \, dG(q) = \int 1 - G(q) \, dq > \int L_+(q) \, dq = \int 1 - G^* dq = \int q \, dG^*(q) = \pi,
	\end{equation*}
	as desired. Having covered both the case $G_\text{alt} \neq G^*$ and the case $G_\text{alt} = G^*$, I have shown that in all cases $H$ violates Bayes-plausibility and therefore, by contradiction, $G^*$ is uniquely optimal.
\end{proof}

The full proof of Proposition \ref{prop:bin} without reference to MPC games is therefore obtained by combining Lemmas \ref{lm:bigP} and \ref{lm:bin}.

\subsection*{A4: Results with Alternative Tie-Breaking}
\label{subsec:a4}
The difference in Sender's optimal posterior distribution from the standard Bayesian persuasion problem is clearly driven by Nature's ability to tailor a worst-case Receiver type distribution to Sender's particular disclosure strategy, but may also be affected by the ability to generate a Receiver type who is unconvinced even when the state is surely $\omega = 1$. By following Lemma \ref{lm:tb}, I can obtain Sender's optimal posterior distribution under favorable tie-breaking, and show exactly when the choice of tie-breaking rule is influential:
\begin{corollary}[Adapted from Theorem 4 of \citealt{H2015}]
	\label{cor:tbGood}
	Let $\supp(F) = \left\{0, 1\right\}$ and let ties be broken in favor of Sender.\\
	One optimal posterior distribution for Sender's is as follows:
	\begin{itemize}
		\item If $r^* \leq \pi \leq 1/2$,
		\begin{equation*}
			G^*(q) = U[0, 2\pi].
		\end{equation*}
		\item If $\pi \leq r^* \leq 1/2$,
		\begin{equation*}
			G^*(q) = \bigg(1 - \frac{\pi}{r^*}\bigg) \, \delta_0 + \frac{\pi}{r^*} \, U[0, 2r^*].
		\end{equation*}
		\item If $1/2 < r^*$,
		\begin{equation*}
			G^*(q) = (1 - \pi) \, \delta_0 +  \pi \, \delta_1.
		\end{equation*}
		\item If $r^* \leq 1/2 < \pi$,
		\begin{equation*}
			G^*(q) = (2 - 2\pi) \, U[0, 1] + (2\pi - 1) \, \delta_1.
		\end{equation*}
	\end{itemize}
\end{corollary}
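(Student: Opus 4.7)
The plan is to reduce the persuasion problem under favorable tie-breaking to the Captain Lotto MPC game of \citet{H2015}, apply that paper's Theorem 4, and verify the resulting equilibrium strategy yields the four cases in the statement.

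First, I would invoke the tie-breaking reduction of Section \ref{subsec:tb}: under favorable tie-breaking for Sender, her problem is equivalent (with even tie-breaking) to the MPC game in which her effective support is extended infinitesimally above $1$ while Nature's support remains $[0, 1]$. Under this reformulation, Lemma \ref{lm:saddle} extends essentially unchanged; its minimax argument via Ky Fan's theorem still applies because $\mathcal{G}$ and $\mathcal{T}$ remain compact and convex and the payoff function is linear, with lower semi-continuity now holding in $q$ rather than $r$. Nash equilibrium strategies for Sender in this MPC game therefore coincide with solutions to the maxmin persuasion problem. Theorem 4 of \citet{H2015} then provides the Nash equilibrium strategy of Player B (corresponding to Sender) for each configuration of caps and means, and the four parameter regions in the statement correspond exactly to the four regions in Hart's theorem, distinguished by whether $\pi$ and $r^*$ lie above or below $1/2$. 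The content of the corollary is then the translation of Hart's strategy into the notation of this paper.

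Second, I would verify each case directly. Each proposed $G^*$ has support in $[0, 1]$ and mean $\pi$, so is Bayes-plausible. For the utility computation, I would use an analog of Lemma \ref{lm:cav} valid under favorable tie-breaking: Sender's utility from $G$ is $1 - \overline{G^-}(r^*)$, where $\overline{G^-}$ is the concavification over $[0, 1]$ of the left-continuous function $r \mapsto G(r^-) = \lim_{s \uparrow r} G(s)$. The proof mirrors Lemma \ref{lm:cav} because Sender's utility from type $r$ is now $1 - G(r^-)$ and Nature's problem becomes $\max_T \mathbb{E}_T[G(r^-)]$ subject to mean $r^*$ and support $[0, 1]$, whose value is $\overline{G^-}(r^*)$ by Corollary 2 of \citet{KG2011}. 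In Cases 1 and 2, $G^{*-}$ and $G^*$ differ only on a measure-zero set, so the computations reduce to those in Proposition \ref{prop:bin}. In Case 3, $G^{*-}$ equals $0$ at $r = 0$ and $1 - \pi$ on $(0, 1]$, whose concavification on $[0, 1]$ is the constant $1 - \pi$, giving utility $\pi$. In Case 4, $G^{*-}(r) = (2 - 2\pi) r$ is already affine (hence concave) on $[0, 1]$, giving utility $1 - 2 r^* (1 - \pi)$.

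Third, to rule out alternatives I would adapt the tangent-line argument of Lemma \ref{lm:bin} to the left-continuous setting. For any $G$ with $1 - \overline{G^-}(r^*) > 1 - \overline{G^{*-}}(r^*)$, a supporting line of the convex function $1 - \overline{G^-}$ at $r^*$ lies strictly above $1 - G^{*-}$ on $[0, r^*]$; integrating yields $\int (1 - G^-) \, dq > \int (1 - G^{*-}) \, dq = \pi$, contradicting Bayes-plausibility.

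The main obstacle is the appearance of atoms at $q = 1$ in Cases 3 and 4, which were absent from Proposition \ref{prop:bin}. Under unfavorable tie-breaking such atoms fail to persuade type $r = 1$ and are never strictly valuable; under favorable tie-breaking they persuade every type and become central to the optimal strategy, so the concavification must be computed using the left-continuous $G^-$ rather than $G$ itself. Verifying that $G^*$ places exactly the right mass at $q = 1$, and that no deviation can improve Sender's utility, relies crucially on the threshold inequalities $r^* > 1/2$ (Case 3) and $\pi > 1/2$ (Case 4), which ensure the atom configuration is an equilibrium rather than something that can be improved by reshuffling mass between $q = 0$ and $q = 1$ or between the uniform component and the atom at $q = 1$.
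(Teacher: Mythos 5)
Your plan tracks the paper's own proof at the decisive step: cast the favorable-tie-breaking problem as the Captain Lotto MPC game (via Lemma \ref{lm:saddle}) and invoke Theorem 4 of \citet{H2015}. The paper's entire proof is essentially that single reduction, noting only that Hart's $\varepsilon$-approximating distributions can be replaced by the exact upper bound of $1$ on posteriors. What you add beyond the paper is a more self-contained verification: a modified form of Lemma \ref{lm:cav} in which Sender's utility is $1 - \overline{G^-}(r^*)$ with $G^-$ the left-continuous modification of $G$, a direct utility computation in each case, and a tangent-line step to rule out improvements. The modified-concavification observation is a useful clarification that the paper leaves implicit, and your utility computations (in particular $\pi$ in Case 3 and $1 - 2r^*(1-\pi)$ in Case 4) are correct. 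The third step is more ambitious than the paper attempts---the paper explicitly declines to characterize all optimal distributions under favorable tie-breaking, citing \citet{NA2018} for the complexity---and you should note that the argument of Lemma \ref{lm:bin} does not automatically cover the $\pi > 1/2$ (Case 4) or $r^* > 1/2$ (Case 3) configurations, where the atom at $q = 1$ and the possibility of reallocating mass between that atom and the uniform component require care. One small correction: you identify Sender with Player B in Hart's theorem, but the paper uses Player A's strategy here. Under favorable tie-breaking it is Sender, not Nature, who plays the role of the participant whose upper bound is effectively removed, reversing the labeling used for Proposition \ref{prop:bin} in Appendix A2.
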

\begin{proof}
	I use the MPC game representation of the maxmin persuasion problem. The result follows directly from Player A's equilibrium strategy in Theorem 4 of \citet{H2015}, where I replace the use of $\varepsilon$-approximating distributions, which are not needed in my setting, with the exact upper bound of 1 on posterior beliefs. Since the alternate tie-breaking rule is not the focus of this work, I do not provide a full characterization of other optimal posterior distributions; as \citet{NA2018} shows, attempting a full characterization through the connection to MPC games becomes complex.
\end{proof}
In this case, Sender sometimes takes advantage of favorable tie-breaking and places an atom at posterior $q = 1$, exploiting Nature's inability to generate a skeptical Receiver type for that posterior belief. This choice allows Sender to obtain a utility higher than $1 - r^*$, since even if the Receiver type is $r = 1$, they are now persuaded whenever posterior $q = 1$ is realized. However, creating this atom tightens the Bayes-plausibility constraint, so if neither Sender nor Nature's constraint is slack enough to allow frequent realizations of 1, Sender uses the same approach as with unfavorable tie-breaking. Thus when the probability of the high state and the mean Receiver type are both small, Sender's maxmin utility remains strictly below her utility with even the most unfavorable prior belief about Receiver types, regardless of whether tie-breaking is favorable or not.

In the maxmin persuasion context, it seems natural to break ties either entirely in favor of or entirely against Sender. Those rules allow me to interpret a Receiver of type $r$ either as the highest Receiver type who is convinced by posterior belief $q = r$ or the lowest Receiver type who is not convinced by that belief, respectively. However, if the MPC game is interpreted as competitive persuasion, as in \citet{BC2015}, then it also seems reasonable to consider breaking ties evenly, so as to favor neither player.\footnote{When the players are persuading a Receiver about a common state of the world, as in \citet{AK2020}, it is reasonable to also require $\pi = r^*$. In \citet{BC2015}, the players are schools convincing a Receiver about the binary ability of a student drawn from a school-specific distribution, so $\pi \neq r^*$ represents one school producing more high-type students on average.} This choice is equivalent to requiring that both distributions have support in $[0, 1]$; in that case the optimal posterior distribution (derived without uniqueness in \citealt{H2015} and with uniqueness in \citealt{BC2015} and \citealt{NA2018}) is as follows:
\begin{corollary}[Lemma 3 of \citealt{BC2015}]
	\label{cor:tbEven}
	Let $\supp(F) = \left\{0, 1\right\}$ and let ties be broken evenly.\\
	The unique optimal posterior distribution $G^*$ for Sender's is as follows:
	\begin{itemize}
		\item If $r^* \leq \pi \leq 1/2$,
		\begin{equation*}
			G^*(q) = U[0, 2\pi].
		\end{equation*}
		\item If $\pi \leq r^* \leq 1/2$,
		\begin{equation*} 
			G^*(q) = \bigg(1 - \frac{\pi}{r^*}\bigg) \, \delta_0 + \frac{\pi}{r^*} \, U[0, 2r^*].
		\end{equation*}
		\item If $1/2 \leq \pi$ and $r^* \leq \pi$,
		\begin{equation*}
			G^*(q) = \frac{1 - \pi}{\pi} \, U[0, 2 - 2\pi] + \frac{2\pi - 1}{\pi} \, \delta_1.
		\end{equation*}
		\item If $1/2 \leq r^*$ and $\pi \leq r^*$,
		\begin{equation*}
			G^*(q) = \bigg(1 - \frac{\pi}{r^*}\bigg) \, \delta_0 + \frac{\pi}{r^*} \, \bigg(\frac{1 - r^*}{r^*} \, U[0, 2 - 2r^*] + \frac{2r^* - 1}{r^*} \, \delta_1\bigg).
		\end{equation*}
	\end{itemize}
\end{corollary}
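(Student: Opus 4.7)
My plan is to combine Lemma \ref{lm:saddle} with Lemma 3 of \citet{BC2015}, mirroring the strategy used for Corollary \ref{cor:tbGood}. First, I would apply Lemma \ref{lm:saddle} (whose argument is tie-breaking-agnostic, as its use of Ky Fan's minimax theorem only requires compactness, convexity, and lower semicontinuity, all of which survive the change of tie-breaking rule) to reformulate the maxmin persuasion problem with even tie-breaking as the simultaneous-move MPC game in which both Sender and Nature draw from distributions supported on $[0, 1]$. The symmetric $[0, 1]$ domain is natural here because even tie-breaking rules out the $\varepsilon$-approximating device by which either player could, under an asymmetric tie-breaking rule, effectively access realizations outside $[0, 1]$; this is exactly the argument sketched in Section \ref{subsec:tb}.

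Second, I would invoke Lemma 3 of \citet{BC2015}, which characterizes the unique Nash equilibrium of precisely this symmetric MPC game in their competitive-persuasion setting. In their notation Sender plays the role of one school with binary-support prior $F$ of mean $\pi$, while Nature plays the other school with binary-support prior of mean $r^*$. Translating their four-case characterization to my notation gives exactly the distributions listed. Cases 1 and 2, where $\pi \leq 1/2$ and $r^* \leq 1/2$, produce the same optimal distribution as under unfavorable tie-breaking (Proposition \ref{prop:bin}), since no atom at $q = 1$ is placed and tie-breaking therefore has no effect on Sender's payoff. Cases 3 and 4 introduce an atom at $q = 1$ that exploits the random tie-breaking: unlike under unfavorable tie-breaking, placing mass at $q = 1$ now beats half of whatever mass Nature places at $r = 1$, which is what lifts Sender's utility above $1 - r^*$ and produces the new equilibrium shape.

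The main obstacle, and the portion that would require the most care if one wished to avoid citing \citet{BC2015} directly, is establishing uniqueness. Both \citet{BC2015} and \citet{NA2018} use explicit payoff calculations; a self-contained alternative would extend the geometric concavification argument of Lemma \ref{lm:bin} to candidates with atoms at both $q = 0$ and $q = 1$. The difficulty is that Sender's payoff under even tie-breaking is no longer purely a function of the concavification when Nature places an atom where Sender also has one, so Lemma \ref{lm:cav} would need to be refined to account for the one-half weighting at shared atoms before the tangent-line deviation argument of Lemma \ref{lm:bin} can be applied. Once that refinement is in place, I would parametrize candidate distributions by the sizes of their atoms at $0$ and at $1$, identify the unique member of this two-parameter family that concavifies to the tangent of $1 - \bar{G}$ at $r^*$, and then run the Bayes-plausibility contradiction exactly as in Lemma \ref{lm:bin} for each of the four cases.
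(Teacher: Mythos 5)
Your proposal matches the paper's approach: the proof is simply a direct citation of Lemma~3 of \citet{BC2015}, reinterpreting Sender as Player~A when $r^* \leq \pi$ and as Player~B when $r^* \geq \pi$. One small caveat on your supporting commentary: under even tie-breaking the integrand becomes $\mathbf{1}(q > r) + \tfrac{1}{2}\mathbf{1}(q = r)$, which is \emph{not} lower semicontinuous in $r$, so the Ky~Fan argument of Lemma~\ref{lm:saddle} does not carry over verbatim as you assert; the paper instead relies on the informal domain-adjustment/$\varepsilon$-approximation argument of Section~\ref{subsec:tb} to justify the equivalence, and on \citet{BC2015} (and \citealt{NA2018}) for uniqueness rather than attempting the concavification refinement you sketch.
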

\begin{proof}
	This result appears verbatim in \citet{BC2015}, with Sender as Player A when $r^* \leq \pi$ and Player B when $r^* \geq \pi$.
\end{proof}

Finally,  note for general interest that in the MPC game when both players' feasible distributions have domain $\mathbb{R}_+$ (and are mean-preserving contractions of binary support distributions), the unique solution is the same as the cases $r^* \leq \pi \leq 1/2$ and $\pi \leq r^* \leq 1/2$ of Corollary \ref{cor:tbEven}, with the relationship between $\pi$ and $r^*$ determining which case applies. The solution when $\pi = r^*$, so that the constraints are symmetric, first appears in \citet{BC1980}, and also appears in \citet{M1993}. The solution for the asymmetric case first appears in \citet{SP2006}, and also appears in \citet{H2008}.

\newpage
\section*{Appendix B: Omitted Proofs for Section \ref{sec:cont}}
\label{sec:b}
\subsection*{B1: Properties of DTUs}
\label{subsec:b1}
To begin, I describe DTUs in more detail. The uniform portion of the DTU (between the lower and upper truncations) has slope $\beta$, which I refer to as the slope of the DTU. The line $L(q) = \beta q + y$, which forms that uniform portion, intersects the vertical axis at $y$; I refer to this value as the intercept of the DTU. To derive a relationship between $\beta$, $y$, and $\ell$, I use the fact that Bayes-plausibility requires $\mathbb{E}_G[\omega] = \pi$. This condition immediately imposes the restriction that $\ell \in [0, \pi]$; using simple geometry to compute the integral of a DTU's cdf and set it equal to $1 - \pi$ shows that
\begin{equation*}
	\beta(\ell, y) = \frac{(\pi - y \ell) - \sqrt{(\pi - y \ell)^2 - \ell^2 \, (1 - y)^2}}{\ell^2}.
\end{equation*}
This expression is continuously differentiable for $\ell \in (0, \pi]$ and $y \in [0, 1)$. Fixing $\ell$, $\beta(\ell, y)$ is injective and decreasing in $y$. Fixing $y$, $\beta(\ell, y)$ is injective and increasing in $\ell$, attaining a maximum of $\beta(\pi, y) = (1 - y)/\pi$. While $\beta(0, y)$ is not defined using the expression above, the limit from the right exists:
\begin{equation*}
	\lim_{\ell \rightarrow 0^+} \beta(\ell, y) = \lim_{\ell \rightarrow 0^+} \frac{(1 - y)^2}{(\pi - y \ell) + \sqrt{(\pi - y \ell)^2 - \ell^2 \, (1 - y)^2}} = \frac{(1 - y)^2}{2\pi}.
\end{equation*}
I thus define $\beta(0, y) = (1 - y)^2/(2\pi)$ explicitly. For $y \in [0, 1 - 2\pi]$, $\beta(0, y)$ is the slope of the UTU with intercept $y$. When $y > 1 - 2\pi$, there is no corresponding UTU; instead, the lower bound of interest is $\beta(\ell, y) = 1 - y$, the slope that satisfies $G(1) = 1$.\footnote{This is the desired lower bound because any cdf $H$ over $[0, 1]$ must satisfy $H = 1$, and I wish to use DTUs to upper-bound other feasible probability distributions.} The assumption $y > 1 - 2\pi$ implies $1 - y \in ((1 - y)^2/(2\pi), (1 - y)/\pi)$, so the lower bound is attained at an interior $\ell \in (0, \pi)$; I call this value $\ell^\text{min}_y$. Because the function $\beta(\ell, y) - (1 - y)$ is continuously differentiable, the Implicit Function Theorem ensures that I can write $\ell^\text{min}_y$ as a continuously differentiable function of $y$.

The concavification of a DTU is easy to compute: so long as the slope of the line through $(0, 0)$ and $(\ell, \beta(\ell) \ell + y)$ is weakly less than $\beta(\ell, y)$, the concavification will be
\begin{equation*}
	\bar{G}_y^\ell(q) = \begin{cases}
		(\beta(\ell, y) \, \ell + y)/\ell & q \in [0, \ell),\\
		G_y^\ell(q) & q \in [\ell, 1].
	\end{cases}
\end{equation*}
That condition is simply
\begin{equation*}
	\frac{\beta(\ell, y) \, \ell + y}{\ell} = \beta(\ell, y) + \frac{y}{\ell} \geq \beta(\ell, y),
\end{equation*} 
which always holds since $y \geq 0$ and $\ell \geq 0$. Thus the concavification of a DTU is composed of two upward-sloping line segments with a kink at $\ell$ and a constant line segment in the region of the upper truncation.

\subsection*{B2: $y$-Optimal DTUs}
\label{subsec:b2}
Given a value of the mean Receiver type $r^*$ and a fixed intercept $y$, I show the existence of a well-defined and unique DTU that provides Sender's highest utility among all DTUs with an intercept of $y$. Since $y$ is fixed, for this section I drop the dependence on $y$ from all functions.
\begin{lemma}
	\label{lm:intOpt}
	Given $r^* \in (0, 1)$ and $y \in [0, 1)$, there is a well-defined DTU $G^{\beta(y)}_y$ with lower truncation length $\ell^*_y$ that maximizes Sender's utility among all Bayes-plausible DTUs with intercept $y$.
\end{lemma}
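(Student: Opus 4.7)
The plan is to reduce the problem to a one-dimensional optimization in the lower truncation length $\ell$ over a compact feasible set, then invoke the extreme value theorem for existence and check single-peakedness for uniqueness. First, I would fix $y \in [0, 1)$ and parameterize each DTU with intercept $y$ by $\ell$, with slope determined by Bayes-plausibility on the mean via the closed-form expression $\beta(\ell, y)$ derived in Appendix B1. By Lemma \ref{lm:cav}, Sender's utility from $G_y^\ell$ equals $V(\ell) := 1 - \bar{G}_y^\ell(r^*)$, so the task becomes maximizing $V$ over the set $\mathcal{L}_y$ of values of $\ell$ for which $G_y^\ell$ also satisfies the integral constraint against $F$.

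Second, I would establish that $\mathcal{L}_y$ is compact and nonempty. The natural range for $\ell$ is $[0, \pi]$ when $y \leq 1 - 2\pi$ and $[\ell^{\min}_y, \pi]$ otherwise, both bounded intervals. Since $\beta(\ell, y)$ is continuous in $\ell$ on this range (including at $\ell = 0$, by the explicit limit computed in Appendix B1), the Bayes-plausibility functional $\ell \mapsto \int_0^x [F(q) - G_y^\ell(q)] \, dq$ is jointly continuous in $(x, \ell)$, so the subset of $\ell$ for which this quantity is nonnegative for all $x \in [0, 1]$ is closed and hence compact. For nonemptiness, I would exhibit at least one feasible DTU, for instance the one corresponding to $\ell = \pi$ or, when it lies in range, the distribution coinciding with a UTU from Appendix A, which is a mean-preserving contraction of any $F$ with mean $\pi$ by concavity.

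Third, I would write $V$ explicitly using the DTU concavification formula from Appendix B1. The concavification has (at most) one kink at $\ell$ and becomes constant above the upper truncation endpoint $q_u(\ell, y) := (1 - y)/\beta(\ell, y)$, so $V(\ell)$ falls into one of three regimes depending on whether $r^* \leq \ell$, $\ell < r^* < q_u(\ell, y)$, or $r^* \geq q_u(\ell, y)$. Matching the three pieces at the transition boundaries in $\ell$ yields continuity of $V$ on $\mathcal{L}_y$, and the extreme value theorem delivers existence of a maximizer $\ell^*_y$.

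For uniqueness I would argue strict single-peakedness of $V$ by examining each regime. In the middle regime, $V(\ell) = 1 - y - \beta(\ell, y) \, r^*$ is strictly decreasing in $\ell$ because $\beta(\ell, y)$ is strictly increasing in $\ell$, so its maximum is pinned down at the left endpoint of that regime. In the lower regime $V(\ell) = 1 - r^*[\beta(\ell, y) + y/\ell]$, and I would differentiate the closed-form expression for $\beta(\ell, y)$ to show that the strictly increasing term $\beta(\ell, y)$ and the strictly decreasing term $y/\ell$ combine into a function with a unique interior critical point (or a monotone function over $\mathcal{L}_y$, in which case the maximum is again at an endpoint). The upper regime is identically zero and dominated by the others whenever they are feasible. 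Stitching together the three regimes using agreement at the transition boundaries and monotonicity inside each regime then rules out multiple global maximizers. The main obstacle is the algebraic verification of single-peakedness in the lower regime: the explicit formula for $\beta(\ell, y)$ involves a square root, and one must show that the competition between $\beta(\ell, y)$ and $y/\ell$ produces a well-behaved (single-peaked) objective rather than multiple local optima, a calculation that is vacuous for $y = 0$ (where the lower regime collapses and $V$ is monotone in $\beta$) but genuinely substantive for $y > 0$.
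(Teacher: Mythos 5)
Your existence argument is essentially the paper's: parameterize by $\ell$, use Lemma~\ref{lm:cav} to express Sender's utility, show the feasible set $\mathcal{L}_y \subset [0,\pi]$ is compact by continuity of the integral-constraint functional in $\ell$, show the objective is continuous in $\ell$ (in particular at the kink $\ell = r^*$, where the paper also checks left and right limits agree), and apply the extreme value theorem. That part is fine. Note, though, that your proposed witness for nonemptiness "the distribution coinciding with a UTU from Appendix~A, which is a mean-preserving contraction of any $F$ with mean $\pi$ by concavity" is wrong: under the paper's standing assumption $F(0)=0$, any UTU (which by construction places an atom at $q=0$) violates the integral constraint near 0 and is therefore infeasible; the paper stresses exactly this at the start of Section~\ref{sec:cont}. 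Your other suggestion, $\ell = \pi$ (the degenerate $\delta_\pi$), is the correct fallback.

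The more substantial issue is that you are trying to prove uniqueness, which the lemma does not assert and the paper does not prove here. The paper's proof ends with the maximum being attained by "some (possibly multiple) $\ell \in V$," and uniqueness is obtained only later, in Lemma~\ref{lm:minSlope}, under the additional hypotheses $y=0$ or $r^* \in [\pi,1)$ (where the kink at $\ell$ drops out and utility is monotone in the slope, so injectivity of $\ell \mapsto \beta(\ell,y)$ does the work). Your proposed single-peakedness route has two gaps. First, you flag but do not carry out the algebraic verification in the $r^* \le \ell$ regime, where $\beta(\ell,y) + y/\ell$ mixes a strictly increasing and a strictly decreasing term with a square root; this is exactly the case the paper avoids by restricting to $y=0$ or $r^* \geq \pi$. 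Second, even granting single-peakedness of the unconstrained map $\ell \mapsto V(\ell)$, the optimization is over the constrained set $\mathcal{L}_y$, which the paper only shows is closed; it is not shown to be an interval, and if the unconstrained peak falls in a gap of $\mathcal{L}_y$, single-peakedness does not rule out a tie between feasible points on either side of the peak. So the uniqueness claim, as argued, would not go through. Dropping it brings your proposal back in line with what the lemma actually asserts.
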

\begin{proof}
	Let $V_y \subset [0, \pi]$ be the set of $\ell$ such that a DTU with lower truncation $\ell$ and intercept $y$ is Bayes-plausible. I first show that $V_y$ is closed; since it is clearly also bounded, $V_y$ is therefore compact. To do so, I define the function
	\begin{equation*}
		v(x, \ell) = \int_0^x F(q) \, dq - \int_0^x G^\ell_y(q) \, dq
	\end{equation*}
	for some DTU $G^\ell_y$ with intercept $y$ and lower truncation $\ell$. This function captures the value of the Bayes-plausibility integral constraint for $G^\ell_y$ at $x \in [0, 1]$. Clearly $v(0, \ell) = 0$, and $v(1, \ell) = 0$ because $\mathbb{E}_F[\omega] = \mathbb{E}_{G^\ell_y}[\omega] = \pi$.
	
	At any $x$, the integral of $G^\ell_y$ on $[0, x]$ is continuous in $\ell$. This result is obvious for $x \neq \ell$ (since $G_\ell(q)$ is continuous in $\ell$ at those points) and holds for $x = \ell$ because the left and right limits as $x \rightarrow \ell$ are both 0. Therefore $v(x, \ell)$ is also continuous in $\ell$ for fixed $x$, since it depends on $\ell$ only through that integral. If $G^\ell_y$ is not Bayes-plausible, then (since it satisfies $\mathbb{E}_{G^\ell_y}[\omega] = \pi$ by construction) there must be some $x_\text{neg} \in (0, 1)$ for which $v(x_\text{neg}, \ell) < 0$. Because $v(x_\text{neg}, \ell)$ is continuous in $\ell$, there is $\varepsilon > 0$ such that for any $\ell'$ in a $\varepsilon$-neighborhood of $\ell$, $v(x_\text{neg}, \ell') < 0$. Therefore any $G^{\ell'}_y$ is not Bayes-plausible, so $U \subset [0, \pi]$, the set of $\ell$ where Bayes-plausibility fails, is open. Since $V_y = [0, \pi] \setminus U$, it must be that $V$ is closed.
	
	By Lemma \ref{lm:cav}, Sender's utility from a DTU is given by
	\begin{equation*}
		u_S(r^*, \ell) = 1 -
		\begin{cases}
			(\beta(\ell) + y/\ell) \, r^* + y & \ell < r^*,\\
			G^\ell_y(r^*) & \ell \geq r^*.\\
		\end{cases}
	\end{equation*}
	This function is continuous in $\ell$ on $[0, \pi]$. Since $\beta(\ell)$ is continuous in $\ell$ on $[0, \pi]$, each of the two piecewise portions of $u_S$ are clearly continuous in $\ell$; it remains only to check the case $\ell = r^*$. But because the left and right limits as $\ell \rightarrow r^*$ exist (by continuity of each piecewise portion) and are equal (by construction of $u_S$), $u_S$ is continuous at $\ell = r^*$ as well. Therefore the image of $V$ under $u_S$ must be compact, and thus contains a well-defined maximum, which is attained by some (possibly multiple) $\ell \in V$.
\end{proof}

Unlike in the binary-state setting, it is not possible to solve analytically for $G^{\beta(y)}_y$. However, appropriate sufficient conditions can ensure that $G^{\beta(y)}_y$ is both unique and slope-minimizing among Bayes-plausible DTUs with intercept $y$:
\begin{lemma}
	\label{lm:minSlope}
	Fix $y \in [0, 1)$ and $r^* \in (0, 1)$. There is a unique and well-defined DTU $G^{sm}_y$ that has minimal slope among all Bayes-plausible DTUs with intercept $y$. If $y = 0$ or $r^* \in [\pi, 1)$, then the $y$-optimal DTU $G^{\beta(y)}_y$ equals $G^{sm}_y$
\end{lemma}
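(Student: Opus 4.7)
The plan has two parts: establishing existence and uniqueness of $G^{sm}_y$, then verifying $G^*_y = G^{sm}_y$ in each of the two listed regimes.

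For the first part, I would rely on the compactness of $V_y = \{\ell \in [0, \pi] : G^\ell_y \text{ is Bayes-plausible}\}$ already established in the proof of Lemma \ref{lm:intOpt}. Nonemptiness of $V_y$ is immediate because $\ell = \pi$ makes $G^\pi_y$ degenerate to $\delta_\pi$ (the full-pooling distribution, since $\beta(\pi, y)\,\pi + y = 1$ forces $q_u = \pi$), which is trivially Bayes-plausible for every $y$. On a nonempty compact subset of $[0, \pi]$ the minimum $\ell^{sm}_y$ exists, and since the preamble identifies $\beta(\cdot, y)$ as strictly increasing on $[0, \pi]$, this minimum is the unique Bayes-plausible slope-minimizer. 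I then set $G^{sm}_y = G^{\ell^{sm}_y}_y$.

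For the second part, I would use Lemma \ref{lm:cav} to write Sender's utility as $1 - \bar{G}^\ell_y(r^*)$, with the piecewise concavification having slope $\beta(\ell, y) + y/\ell$ on $[0, \ell]$, slope $\beta(\ell, y)$ on $[\ell, q_u]$, and constant value $1$ on $[q_u, 1]$. The goal in both regimes is to argue that $r^*$ always lies in a piece whose value depends on $\ell$ only through $\beta(\ell, y)$, so that minimizing utility coincides with minimizing slope. If $r^* \geq \pi$, then $V_y \subseteq [0, \pi]$ gives $\ell \leq \pi \leq r^*$, and $\beta(\ell, y) \leq \beta(\pi, y) = (1 - y)/\pi \leq (1 - y)/r^*$ forces $q_u = (1 - y)/\beta \geq r^*$, so $\bar{G}^\ell_y(r^*) = \beta(\ell, y)\,r^* + y$, strictly increasing in $\beta$. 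If $y = 0$, the potential kink at $\ell$ vanishes because $y/\ell = 0$, collapsing the concavification to $\beta(\ell, 0)\,q$ on $[0, q_u]$ and $1$ on $(q_u, 1]$, so $\bar{G}^\ell_0(r^*) = \min\{\beta(\ell, 0)\,r^*,\, 1\}$, which is non-decreasing in $\beta$. In both cases the utility-maximizer is $\ell^{sm}_y$.

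The main subtlety is the concavified kink at $q = \ell$ when $y > 0$ and $r^* < \ell$: there $\bar{G}^\ell_y(r^*) = (\beta(\ell, y) + y/\ell)\,r^*$ depends on $\ell$ through both a monotone slope term and the competing term $y/\ell$, so slope minimization and utility maximization can disagree. The hypotheses of the lemma are precisely what rules this out, since $y = 0$ erases the kink and $r^* \geq \pi \geq \ell$ keeps $r^*$ out of the kinked region. Uniqueness of $G^*_y$ in the $y = 0$ case only needs care when every feasible DTU yields utility zero (i.e.\ $\beta(\ell^{sm}_0, 0)\,r^* \geq 1$); in that degenerate subcase all $\ell \in V_0$ are $y$-optimal and the identification $G^*_0 = G^{sm}_0$ is a consistent selection.
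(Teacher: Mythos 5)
Your proposal follows the paper's argument: compactness of $V_y$ plus strict monotonicity of $\beta(\cdot, y)$ for existence and uniqueness of the slope-minimizer, then the observation that when $y = 0$ or $r^* \geq \pi$ the kink at $\ell$ is irrelevant, so utility maximization reduces to slope minimization. One slip to flag: in the $r^* \geq \pi$ case you write $\beta(\pi, y) = (1-y)/\pi \leq (1-y)/r^*$, but $r^* \geq \pi$ gives $(1-y)/\pi \geq (1-y)/r^*$, so the deduction $q_u \geq r^*$ does not follow (only $q_u \geq \pi$). This does not break the argument, since $\bar{G}^\ell_y(r^*) = \min\{\beta(\ell, y)\,r^* + y, 1\}$ is still non-decreasing in $\beta$ for $r^* \geq \ell$, exactly as you note for $y = 0$; it only costs you the claimed strict monotonicity, which is not needed for the lemma's conclusion that one $y$-optimal DTU is the slope-minimizer.
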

\begin{proof}
	Fix $y \in (0, 1)$. By Lemma \ref{lm:intOpt}, the set $V_y$ of values of $\ell$ such that $G^\ell_y$ is Bayes-plausible is closed, and the function $\beta(\ell, y)$ is continuous and monotonic in $\ell$ for fixed $y$, so there is a unique $\ell_{sm} \in V_y$ such that $\beta(\ell_{sm}, y) = \inf_{\ell \in V_y} \beta(\ell, y)$. 
	
	Now I show that either of the conditions provided in the lemma are sufficient for the slope-minimizing DTU to be optimal. First fix $y = 0$. Then $\beta(\ell) + y/\ell = \beta(\ell)$, so $u_S(r^*, \ell) = 1 - G(r^*)$; that is, there is no kink at $\ell$ in Sender's utility from DTUs with intercept 0. Thus Sender's utility from $G^\ell_y$ is strictly greater than her utility from $G^{\ell'}_y$ if and only if $\beta(\ell) < \beta(\ell')$. By Lemma \ref{lm:intOpt}, there exists a DTU $G^{\beta(0)}_0$ with lower truncation length $\ell^*_0$ that maximizes Sender's utility among all Bayes-plausible DTUs with intercept 0. No other Bayes-plausible DTU can have a strictly smaller slope, since then it would deliver a strictly higher utility. But no other Bayes-plausible DTU can have the same slope, $\beta(\ell^*_0)$, since there can be no $\ell' \neq \ell^*_0$ where $\beta(\ell) = \beta(\ell^*_0)$. Therefore all other Bayes-plausible DTUs have strictly larger slope, and so $G^{\beta(0)}_0$ satisfies both (1) and (2).
	
	If instead $r^* \in [\pi, 1)$, then similarly $u_S(r^*, \ell) = 1 - G(r^*)$; since $\ell \in [0, \pi]$, $r^*$ surely lies weakly above $\ell$. The argument is then the same; a DTU is utility-maximizing if and only if it is slope-minimizing, Lemma \ref{lm:intOpt} guarantees the existence of a utility-maximizing DTU, and the injectivity of the map from $\ell$ to $\beta(\ell)$ guarantees uniqueness.
\end{proof}

\subsection*{B3: Simplifying the Integral Constraint}
\label{subsec:b3}
Let $U_{r^*}$ be the set of utilities attained by any $y$-optimal DTU:
\begin{equation*}
	U_{r^*} = \left\{u_S(r^*, \ell, y) \, | \, G_\ell^y = G^{\beta(y)}_y \text{ for some } y \in [0, 1) \right\},
\end{equation*}
where I restore the dependence on $y$ in $u_S$, since $y$ is no longer fixed. That set is a subset of $[0, 1]$, and is therefore bounded, so $\sup U_{r^*}$, Sender's supremum utility over all $y$-optimal DTUs (and thus over all DTUs) is well-defined and contained in the closure of $U_{r^*}$. Further restrictions on $F$ and $r^*$ provide sufficient conditions for $U_{r^*}$ to be closed, and thus for the maximum to exist. In order to state these sufficient conditions, I first prove Lemma \ref{lm:contEll}. In this proof, I again drop the dependence on $y$ from all functions since $y$ is fixed, but note important changes in the argument for different values of $y$.
\begin{lemma}
	\label{lm:contEll}
	Let $G^{sm}_y$ be the DTU with the minimal slope among all Bayes-plausibile DTUs with intercept $y$, and let $\ell^{sm}_y$ be its lower truncation length. If $y \in [0, 1 - 2\pi]$, then the minimal interior $q$ where $G^\beta_y(q) = F(q)$, call it $q_1(\ell, y)$, is well-defined and $\ell^{sm}_y$ satisfies
	\begin{equation*}
		\begin{split}
			&\int_0^{q_1(\ell^{sm}_y, y)} F(q) \, dq = \int_0^{q_1(\ell^{sm}_y, y)} G^{sm}_y(q) \, dq 
			\\& \hspace{3cm} \text{and} 
			\\& \int_0^{x} F(q) \, dq > \int_0^{x} G^{sm}_y(q) \, dq \qquad \text{if and only if} \qquad x \in (0, q_1(\ell^{sm}_y, y)) \cup (q_1(\ell^{sm}_y, y), 1).
		\end{split}
	\end{equation*}
	
	If instead $y \in (1 - 2\pi, 1)$, then either the two conditions above hold or $\ell^{sm}_y$ equals the minimum lower truncation length $\ell^\text{min}_y$.
\end{lemma}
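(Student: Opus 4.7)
The plan is to analyze the slack function
\[
v_\ell(x) = \int_0^x F(q)\,dq - \int_0^x G^\ell_y(q)\,dq,
\]
for which Bayes-plausibility of $G^\ell_y$ is equivalent to $v_\ell \geq 0$ on $[0,1]$, with boundary values $v_\ell(0) = v_\ell(1) = 0$ (the latter from the mean restriction $\mathbb{E}_{G^\ell_y}[\omega] = \pi$). The key structural facts to record are: (i) $v_\ell$ is strictly increasing on $[0,\ell]$ (where $G^\ell_y \equiv 0$) and strictly decreasing on $[u_\ell, 1]$ (where $G^\ell_y \equiv 1$); (ii) on $(\ell, u_\ell)$, $v'_\ell = F - G^\ell_y$ is $F$ minus an affine function, so it inherits from $F$ a convex-on-$(\ell, m]$ and concave-on-$[m, u_\ell)$ shape, and since $v''_\ell = f - \beta$ is unimodal with max at $m$, $v'_\ell$ has at most three zeros on $(\ell, u_\ell)$; (iii) $v_\ell$ may have one additional interior extremum at $q = \ell$ induced by the jump in $G^\ell_y$. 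Combining these, $v_\ell$ has at most four interior extrema on $(0, 1)$.

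First I would establish that at the minimum-slope DTU, the integral constraint binds at some interior point $q^* \in (0, 1)$. For $y \in [0, 1-2\pi]$, I rule out $\ell^{sm}_y = 0$: when $y > 0$ the UTU has $G^0_y(0^+) = y > 0 = F(0)$, forcing $v_0 < 0$ on an initial interval; when $y = 0$ the combination $F(0) = 0$ and the regularity hypothesis $f'(0) < 1 - 2\pi$ forces the UTU with slope $1/(2\pi)$ to violate Bayes-plausibility near zero. Given $\ell^{sm}_y > 0$, suppose for contradiction that $v_{\ell^{sm}_y}(x) > 0$ for all $x \in (0, 1)$. Continuity of $\ell \mapsto v_\ell$ (from the proof of Lemma \ref{lm:intOpt}) combined with the strict positivity of $v_\ell$ near the endpoints for all $\ell$ lets me produce $\ell' < \ell^{sm}_y$ with $v_{\ell'} \geq 0$ on $[0, 1]$; since $\beta(\cdot, y)$ is strictly increasing, $G^{\ell'}_y$ has strictly smaller slope, contradicting minimality. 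For $y \in (1-2\pi, 1)$ the same argument applies provided $\ell^{sm}_y > \ell^\text{min}_y$; otherwise $\ell^{sm}_y = \ell^\text{min}_y$, which is the lemma's second alternative.

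At a binding point $q^*$ with $v(q^*) = 0$, $q^*$ is a local minimum of $v$. Since $v(\ell^{sm}_y) > 0$ (by strict monotonicity on $[0, \ell^{sm}_y]$) and $v$ is strictly decreasing on $[u, 1]$ with $v(1) = 0$, $q^* \in (\ell^{sm}_y, u)$, a region where $v'$ is continuous. Hence $v'(q^*) = 0$, so $F(q^*) = G^{sm}_y(q^*)$ and $q^* \in Q^{\ell^{sm}_y}_y$, proving $\min Q^{\ell^{sm}_y}_y$ is well-defined. The four-extremum bound rules out two interior zeros of $v$, since two interior zeros would require at least five alternating interior extrema (three local maxima interleaved with two local minima at the zeros); hence $q^*$ is the unique interior zero.

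The main obstacle is identifying $q^* = q_1$. I would split on the sign of $v'(\ell^+)$. In the case $v'(\ell^+) < 0$, the downward jump produces a local maximum of $v$ at $\ell$, so $v' < 0$ just past $\ell$ and the first zero of $v'$ on $(\ell, u)$ is a local minimum of $v$; this zero must be $q^*$, which is by construction $q_1$. In the case $v'(\ell^+) > 0$, no extremum at $\ell$ is induced, and any hypothetical earlier intersection $\tilde q < q^*$ in $Q^{\ell^{sm}_y}_y$ would be a local maximum of $v$ with $v(\tilde q) > 0$; then $v'$ would have exactly three zeros on $(\ell, u)$ (at $\tilde q$, at $q^*$, and at a later local-max point in $(q^*, u)$), saturating the count. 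To exclude this configuration I would use an explicit perturbation: computing
\[
\partial_\ell \int_0^{q^*} G^\ell_y(q)\,dq = (\partial_\ell \beta)(q^{*2} - \ell^2)/2 - \beta\ell - y,
\]
I would show that when $v'(\ell^+) > 0$ one can decrease $\ell$ slightly while preserving $v_\ell \geq 0$ at $q^*$ and at all other extrema, contradicting the minimality of $\ell^{sm}_y$. Making this perturbation fully rigorous—tracking the simultaneous changes at $\tilde q$, $q^*$, and the later maximum as $\ell$ varies—is the delicate step. Once $q^* = q_1$ is established, the strict inequality $v(x) > 0$ for $x \in (0, q_1) \cup (q_1, 1)$ follows from $q_1$ being the unique interior zero of $v$.
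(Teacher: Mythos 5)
Your proposal follows the same overall blueprint as the paper: introduce the slack function $v_\ell(x)=\int_0^x(F-G^\ell_y)$, show that at $\ell^{sm}_y$ the constraint binds at an interior point, and then use the monotonicity structure of $v$ to identify that binding point with $q_1$. The setup and the $v'(\ell^+)<0$ case are handled essentially as the paper does. But the case $v'(\ell^+)>0$, which you flag as ``the delicate step,'' contains a genuine gap, and the perturbation you outline to close it is both harder than necessary and left unfinished.

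The cleaner observation is that the case $v'(\ell^+)>0$ is already vacuous once you have established (as you do earlier) that the constraint binds at some interior $q^*$. Because $F$ is convex on $[0,m]$ and concave on $[m,1]$ while $L(q)=\beta q+y$ is affine, the function $F-L$ is convex then concave; combined with $F(0)-L(0)=-y\le 0$, this forces at most two zeros of $F-L$ in $(0,1]$ (at most one zero while convex starting nonpositive, at most one while concave given the boundary behavior). If $v'(\ell^+)=F(\ell)-L(\ell)>0$, one of these two intersections already lies in $(0,\ell)$, so $v'=F-L$ has at most one zero on $(\ell,u)$, necessarily a sign change from $+$ to $-$. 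Then $v$ is strictly increasing on $(0,q_z)$ and strictly decreasing on $(q_z,1)$ for that single zero $q_z$, so $v>0$ on all of $(0,1)$ and there is no binding point—contradicting what you established. Thus $v'(\ell^+)\le 0$, i.e., $q_1(\ell^{sm}_y,y)>\ell^{sm}_y$, and your $v'(\ell^+)<0$ analysis applies directly (the knife-edge $v'(\ell^+)=0$ similarly gives no binding point and is likewise excluded). The paper packages this same content differently: it shows that $q_1(\ell)\le\ell$ gives $v>0$ strictly on $(0,1)$ and then uses an Intermediate Value Theorem argument to conclude that a minimal-slope $\ell$ cannot satisfy $q_1(\ell)\le\ell$. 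Either way, the key fact you are missing is the two-intersection bound on $F$ versus $L$, which makes the ``delicate'' perturbation unnecessary. Relatedly, your count of ``at most three zeros of $v'$ on $(\ell,u)$'' is looser than what the convex-then-concave structure yields (at most two); this does not break your argument, but it is why the perturbation route looked necessary in the first place.

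One further point of rigor: your claim that the constraint must bind at $\ell^{sm}_y$ (by shrinking $\ell$ if $v>0$ everywhere interior) is stated via ``continuity of $\ell\mapsto v_\ell$'' and ``strict positivity near the endpoints,'' but the uniformity needed near $q=0$ and $q=1$ deserves an explicit argument. Near $q=0$ it suffices that $\ell^{sm}_y>0$ (which you establish for $y\in[0,1-2\pi]$), so for $\ell'$ close to $\ell^{sm}_y$ you still have $v_{\ell'}(x)=\int_0^xF>0$ for small $x$. Near $q=1$ you need the upper truncation point $u_{\ell'}$ bounded away from $1$, which holds because $\ell^{sm}_y>0$. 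Spelling this out would put the existence-of-binding-point step on firm ground.
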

\begin{proof}
	By Lemma \ref{lm:minSlope}, there exists a unique minimal-slope Bayes-plausible DTU with intercept $y$.
	
	Because of the shape of $F$, the equation $L(q) = \beta(\ell) \, q + y =  F(q)$ has at most two solutions with $q \in (0, 1]$. In particular, if the slope of $L$ is such that it lies completely above $F$ in $(0, 1]$, then there are no solutions in that interval; if the slope of $L$ is such that it is tangent to $F$, then there is one;\footnote{There is at most one value of $\ell$ such that $\beta(\ell) \, q + y$ is tangent to $F$ in $(0, 1]$.} and if the slope of $L$ is less than that of the tangent to $F$ through $y$, there are two.
	
	Consider a DTU $G^{\beta(\ell)}_y$ with lower truncation length $\ell$. If $L(q) \geq F(q) \hspace{0.5em} \forall \hspace{0.5em} q \in (0, 1]$\textemdash that is, $L$ is either tangent to $F$ at a point $q_t$ or lies entirely above $F$\textemdash then this DTU satisfies Bayes-plausibility. The function $v(x, \ell)$, which gives the value of the Bayes-plausibility integral constraint for $G^{\beta(\ell)}_y$ at some $x \in [0, 1]$, is weakly decreasing whenever $L(x) \geq F(x)$.\footnote{When $L(x) > 1$, $G^\ell_y(x) = 1$ rather than following $L(x)$, but since the line $y = 1$ is an upper bound on $F$ as well, the upper truncation does not affect the behavior of $v(x, \ell)$.} Thus $v(x, \ell)$ is weakly decreasing for all $x \in (\ell, 1)$. Since $v(1) = 0$, it must therefore be that $v(x, \ell) \geq 0 \hspace{0.5em} \forall \hspace{0.5em} x \in (\ell, 1)$; of course $v(x, \ell) \geq 0 \hspace{0.5em} \forall \hspace{0.5em} x \in [0, \ell]$, so in fact $v(x, \ell) \geq 0$ everywhere in $[0, 1]$ and Bayes-plausibility is satisfied.
	
	The case where $L$ intersects $F$ twice in $(0, 1]$ will form the bulk of the proof. In particular, let $q_1$ be the smallest $q \in (0, 1]$ such that $\beta(\ell) \, q + y = F(q)$, and let $q_2$ be the largest.\footnote{Clearly, given the shape of $F$, $F(q) > L(q)$ in the interval $(q_1, q_2)$.} By the Implicit Function Theorem, since the function $\beta(\ell) \, q + y - F(q)$ is continuously differentiable in all variables, I can write $q_1$ and $q_2$ as continuous functions of $\ell$. Note that because of this definition, $q_1$ and $q_2$ are both well-defined (and satisfy $q_1 = q_2$) if $\beta(\ell) \, q + y$ is tangent to $F$, as well as for all smaller values of $\ell$. I now address two-intersection DTUs by focusing on the cases $q_1(\ell) > \ell$ and $q_1(\ell) \leq \ell$.
	
	If $q_1(\ell) > \ell$, then $G^{\beta(\ell)}_y(q) < F(q)$ for $q \in (0, \ell) \cup (q_1, q_2)$, but $G^\ell_y(q) > F(q)$ for $q \in [\ell, q_1) \cup (q_2, 1)$ (there is equality at $q \in \left\{0, q_1, q_2, 1\right\}$). Therefore if
	\begin{equation}
		\label{eq:intCond}
		v(q_1(\ell), \ell) = \int_0^{q_1(\ell)} F(q) \, dq - \int_0^{q_1(\ell)} G^\ell_y(q) \, dq \geq 0
	\end{equation}
	then $v(q) \geq 0 \hspace{0.5em} \forall \hspace{0.5em} q \in [0, 1]$ and Bayes-plausubility is satisfied. Given the increasing and decreasing behavior of $v(x, \ell)$, it is clear that 
	\begin{equation*}
		v(q_1(\ell), \ell) = \min_{q \in (0, 1)} v(q).
	\end{equation*}
	Therefore if a DTU violates Bayes-plausibility, it must be because $v(x, \ell) < 0$ for some $x \in (0, 1)$, which in turn implies that $v(q_1(\ell), \ell) < 0$. Thus when $q_1(\ell) > \ell$, Equation (\ref{eq:intCond}) is a necessary and sufficient condition for a DTU to be Bayes-plausible. Furthermore, if the inequality is strict for some $\ell$, then because $v(q_1(\ell), \ell)$ is continuous in $\ell$, it is also strict for $\ell - \varepsilon$.
	
	To close out the case $q_1(\ell) < \ell$, I show that either $q_1 < q_2 < 1$ or $\beta(\ell^\text{min}_y) \, q + y$ does not intersect $F$ twice. To see why, note that if $q_2(\ell) = 1$ then either $y = 1 - 2\pi$ and $\ell = 0$, or $y \in (1 - 2\pi, 1)$ and $\ell = \ell^\text{min}_y$. In the former case, any the DTU is actually a UTU, and any UTU intersects $F$ twice: otherwise it lies weakly above $F$ on the interval $[0, 1]$ and strictly above $F$ on some measurable subset of $[0, 1]$, and could not have the same mean as $F$, contradicting the construction of UTUs. Thus $q_1(\ell) < 1$, $v(x, \ell)$ is strictly increasing in $(q_1, q_2)$ and is negative at $x = q_1$, and $G^{\beta(\ell)}_y$ is not Bayes-plausible. In the latter case, if $\beta(\ell^\text{min}_y) \, q + y$ intersects $F$ twice, then the same argument applies and $G^\ell_y$ is not Bayes-plausible.\\
	
	If $q_1(\ell) \leq \ell$, then $G^{\beta(\ell)}_y$ satisfies Bayes-plausibility. It must be that $G_\ell(q) < F(q) \hspace{0.5em} \forall \hspace{0.5em} q \in (0, q_1) \cup (q_1, q_2)$, with equality at $q_1$ only if $q_1(\ell) = \ell$. Then $v(x, \ell) > 0$ on $(0, q_2)$, and since $v(x, \ell)$ is strictly decreasing on $(q_2, 1)$ with $v(1, \ell) = 0$, it must be that $v(q) > 0 \hspace{0.5em} \forall \hspace{0.5em} q \in (0, 1)$. However, I now prove that if $y \leq 1 - 2\pi$, then $G^{\beta(\ell)}_y$ cannot have minimal slope among all Bayes-plausible DTUs with intercept $y$. Towards proving this claim, I first show that as $\ell \rightarrow 0$, it cannot be that $q_1(\ell) \leq \ell$. Assume that for some $\ell_i$, $L$ intersects $F$ twice (so that $q_1$ and $q_2$ are distinct and well-defined) and $q_1(\ell_i) \leq \ell_i$. Then, for $\ell \in [0, \ell_i]$, the function $\beta(\ell) \, q + y$ will intersect $F$ twice. If $y > 0$, then because $F(0) = 0$ there is $\varepsilon > 0$ so that $\beta(\ell) \, q + y$ lies strictly above $F$ in $[0, \varepsilon)$ for any valid choice of $\ell$; thus $q_1(\ell) > \varepsilon$. If instead $y = 0$, then because $f(0) < 1 - 2\pi$, it must be that for any $\ell$, there is $\varepsilon > 0$ small enough that $F(\varepsilon) < (1 - 2\pi) \, \varepsilon \leq \beta(\ell) \, \varepsilon$ by convexity of $F$. Thus it is again true that $q_1(\ell) > \varepsilon$. In either case, taking $\ell < \varepsilon$\footnote{Of course, this choice may not be valid for $y > 1 - 2\pi$, since the lower bound on the set of valid $\ell$ is strictly above $\ell = 0$; if so, I cannot rule out that $q_1(\ell) \leq \ell$ for the minimum permissible $\ell$.} ensures that $\ell < q_1(\ell)$. To complete the proof, note that $\beta(\ell) \, \ell + y \leq  F(\ell)$ is a necessary condition for $q_1(\ell) \leq \ell$. Since $\beta(\ell) \, \ell + y$ is continuous in $\ell$, I can use the result above about $\ell \rightarrow 0$ to apply the Intermediate Value Theorem and find a value of $\ell \in (0, \pi]$ where $\beta(\ell) \, \ell + y = F(\ell)$ but $\ell - \varepsilon < q_1(\ell - \varepsilon)$ for any $\varepsilon > 0$ sufficiently small. Furthermore, $G^{\beta(\ell - \varepsilon)}_y$ is Bayes-plausible for $\varepsilon$ sufficiently small. When $q_1(\ell) = \ell$, it must be that $v(q_1(\ell), \ell) > 0$ since $G^\ell_y(q) < F(q) \hspace{0.5em} \forall \hspace{0.5em} q \in (0, \ell)$. By continuity of $v(q_1(\ell), \ell)$ in $\ell$, it must be that $v(q_1(\ell - \varepsilon), \ell - \varepsilon) > 0$ if $\varepsilon$ is sufficiently small. Since $\ell - \varepsilon < q_1(\ell - \varepsilon)$, Equation (\ref{eq:intCond}) is a necessary and sufficient condition for Bayes-plausibility of $G^{\beta(\ell - \varepsilon)}_y$, and therefore $G^{\beta(\ell - \varepsilon)}_y$ is Bayes-plausible and has a smaller slope than $G^\ell_y$.
	
	Having established sufficient conditions for when Bayes-plausibility is satisfied, I can now use them to obtain the desired characterization of the slope-minimizing lower truncation length $\ell^{sm}_y$. I begin with the case $y \in [0, 1 - 2\pi]$ and show that $\ell^{sm}_y$ satisfies $v(q_1(\ell^{sm}_y), \ell^{sm}_y) = 0$. When $y \in [0, 1 - 2\pi]$, the lowest permissible slope for a DTU is $(1 - y)^2/(2\pi)$, the slope of the UTU with intercept $y$. Therefore the line $L(q) = q \, (1 - y)^2/(2\pi) + y$ must intersect $F$ twice in $(0, 1]$. Furthermore, the line $L(q) = q \, (1 - y)/\pi + y$ corresponds to the maximum permissible slope for a DTU, and thus must lie above $F$ for the mean of that DTU to equal the mean of $F$. Therefore by continuity of $\beta(\ell)$ in $\ell$ and continuity of $f$, there exists a value $\ell_t \in (0, \pi)$ where the line $L(q) = \beta(\ell) \, q + y$ is tangent to $F$. The point of tangency must be interior, as $\beta(\ell_t) \cdot 1 + y = 1$ only if $\ell_t = 0$, in which case the line $\beta(\ell_t) \, q + y$ forms part of a UTU and (as argued above) cannot be tangent to $F$. Therefore, for $\varepsilon > 0$ sufficiently small the line $\beta(\ell_t - \varepsilon) \, q + y$ intersects $F$ twice, and both intersections are bounded strictly below 1. As argued when showing that $q_1(\ell) \leq \ell$ implies Bayes-plausibility of $G^{\beta(\ell)}_y$, the constraint in Equation (\ref{eq:intCond}) does not bind for $G^{\ell_t}_y$, so it does not bind for $G^{\ell_t - \varepsilon}_y$, and the latter DTU is therefore Bayes-plausible. Thus the $y$-optimal DTU $G^{sm}_y$ cannot be tangent to $F$ and must intersect $F$ twice in $(0, 1]$. Since $y \in [0, 1 - 2\pi]$, as shown for the case $q_1(\ell) \leq \ell$ it cannot be that $q_1(\ell^{sm}_y) \leq \ell^{sm}_y$. Therefore $q_1(\ell) > \ell$ and the necessary and sufficient condition for Bayes-plausibility in Equation (\ref{eq:intCond}) applies. To show that it holds with equality, consider the UTU corresponding to $\ell = 0$. It is not Bayes-plausible\footnote{Any UTU with $y = 0$ has an atom at 0 while $F$ does not. If $y = 0$, the restriction that $f(0) < 1/(2\pi)$ ensures that the UTU is not Bayes-plausible, since there is $\varepsilon > 0$ such that the UTU places more mass in the interval $[0, \varepsilon]$ than does $F$.} and intersects $F$ twice, so it must be that $v(q_1(0), 0) < 0$. Because $v(q_1(\ell), \ell)$ is a continuous function of $\ell$ that takes both positive and negative values for $\ell \in [0, \pi]$, the Intermediate Value Theorem implies that there is a well-defined minimum value of $\ell$, which I call $\ell_m$, for which $v(q_1(\ell_m), \ell_m) = 0$. Since $v(q_1(\ell), \ell) < 0$ for any $\ell < \ell_m$, and I have shown that $v(q_1(\ell^{sm}_y), \ell^{sm}_y) \geq 0$, it must therefore be that $\ell^{sm}_y = \ell_m$.
	
	To complete the proof of the lemma, I show that if $y \in (1 - 2\pi, 1)$, then either $\ell^{sm}_y = \ell^\text{min}_y$ or $v(q_1(\ell^{sm}_y), \ell^{sm}_y) = 0$. Assume that $\beta(\ell^\text{min}_y) \, q + y$ intersects $F$ twice; otherwise clearly $G^{\ell^\text{min}_y}_y$ is Bayes-plausible and $\ell^{sm}_y = \ell^\text{min}_y$. Assume also that the smallest $\ell$ for which $v(q_1(\ell), \ell) = 0$, which I label $\ell^0_y$, satisfies $\ell_y^0 > \ell^\text{min}_y$; otherwise clearly $G^{\ell^0_y}_y$ is both Bayes-plausible and slope-minimizing, so again $\ell^{sm}_y = \ell^\text{min}_y$ (if no $\ell$ satisfying $v(q_1(\ell), \ell) = 0$ exists, I let $\ell^0_y = \pi$, and the argument still holds). If $\ell^{sm}_y \in (\ell^\text{min}_y, \ell^0_y)$, then it must be that $\beta(\ell^{sm}_y) \, q + y$ intersects $F$ twice, because $\beta(\ell^0_y) \, q + y$ does. By the definition of $\ell^0_y$, $v(q_1(\ell^*_y), \ell^{sm}_y) \neq 0$. Clearly that expression cannot be strictly positive, or by continuity there would be $\varepsilon > 0$ small enough so that $\ell^{sm}_y - \varepsilon$ is both a valid choice of $\ell$ (i.e., greater than $\ell^\text{min}_y$) and generates a Bayes-plausible DTU. It must therefore be strictly negative, which means that $q_1(\ell^{sm}_y) \leq \ell^{sm}_y$; otherwise $G^{sm}_y$ would not be Bayes-plausible. But then the proof that $q_1(\ell) \leq \ell$ cannot occur for small $\ell$ implies that there is $\varepsilon > 0$ small enough so that $\ell^{sm}_y - \varepsilon > \ell^\text{min}_y$ and $G^{\ell^{sm}_y - \varepsilon}_y$ is Bayes-plausible, which contradicts the slope-minimizing property of $\ell^{sm}_y$ (the caveat for $y > 1 - 2\pi$ does not apply, since we have already covered and ruled out the case $\ell^{sm}_y = \ell^{\text{min}}_y$). Thus it cannot be true that $\ell^{sm}_y \in (\ell^\text{min}_y, \ell^0_y)$, so it must be that either $\ell^{sm}_y = \ell^\text{min}_y$ or $\ell^{sm}_y = \ell^0_y$; the latter implies the desired condition $v(q_1(\ell^{sm}_y), \ell^{sm}_y) = 0$.
\end{proof}

\subsection*{B4: Overall-Optimal DTUs}
\label{subsec:b4}
The simplified integral constraint in Lemma \ref{lm:contEll} can be used as a key step in deriving the continuity of $\ell^*_y$ in $y$, and thus in providing sufficient conditions for the existence of an overall-optimal DTU in Lemma \ref{lm:existOpt}. As an immediate corollary, though, it allows a characterization of the overall-optimal DTU when $r^*$ is small:
\begin{corollary}
	\label{cor:optLowR}
	Let $\beta(0)$ be the slope of the $0$-optimal double-truncated uniform distribution $G^{\beta(0)}_0$, and let $q_i(\beta(0), 0)$ be the smallest $q \in (0, 1)$ that satisfies $\beta(0) \, q = F(q)$.\footnote{The existence of $q_i(\beta(0), 0)$ is guaranteed by the proof of Lemma \ref{lm:contEll}.} If $r^* \leq q_i(\beta(0), 0)$, then $G^{\beta(0)}_0$ is uniquely optimal among all double-truncated uniform distributions.
\end{corollary}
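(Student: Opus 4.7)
I plan to argue by contradiction: suppose $G_y^\ell$ is a DTU distinct from $G^*_0$ yielding Sender at least as much utility as $G^*_0$, and derive a violation of Bayes-plausibility at $q_1 := q_1(\ell^*_0, 0)$. As preliminaries I would assemble four ingredients. By Lemma~\ref{lm:minSlope}, $G^*_0$ is the unique slope-minimizing Bayes-plausible DTU with intercept $0$ and has slope $\beta^*_0 := \beta(\ell^*_0, 0)$; its concavification is $\bar{G}^*_0(q) = \min(\beta^*_0 q, 1)$. Since $F(q_1) = \beta^*_0 q_1 \leq 1$ forces $q_1 \leq 1/\beta^*_0$, and $r^* \leq q_1$ by hypothesis, Lemma~\ref{lm:cav} identifies Sender's utility from $G^*_0$ as $1 - \bar{G}^*_0(r^*) = 1 - \beta^*_0 r^*$. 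By Lemma~\ref{lm:contEll} applied to $G^*_0$, the integral constraint binds at $q_1$: $\int_0^{q_1} G^*_0 = \int_0^{q_1} F$. Finally, $q_1 > \ell^*_0$ implies $G^*_0 = \bar{G}^*_0$ on $[q_1, 1]$. The hypothesis then translates to $\bar{G}_y^\ell(r^*) \leq \beta^*_0 r^*$.

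The central step is a concavity-through-the-origin bound. Since $\bar{G}_y^\ell$ is concave with $\bar{G}_y^\ell(0) = 0$, chord slopes $\bar{G}_y^\ell(q)/q$ are non-increasing, so $\bar{G}_y^\ell(q) \leq (q/r^*)\bar{G}_y^\ell(r^*) \leq \beta^*_0 q$ for all $q \geq r^*$. Combined with $\bar{G}_y^\ell \leq 1$, this yields the pointwise bound $\bar{G}_y^\ell \leq \bar{G}^*_0$ on $[r^*, 1]$. Integrating over $[q_1, 1] \subseteq [r^*, 1]$ and using $G_y^\ell \leq \bar{G}_y^\ell$ together with $G^*_0 = \bar{G}^*_0$ on $[q_1, 1]$, I obtain $\int_{q_1}^1 G_y^\ell \leq \int_{q_1}^1 G^*_0$. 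Mean-preservation (both distributions integrate to $1 - \pi$) flips this to $\int_0^{q_1} G_y^\ell \geq \int_0^{q_1} G^*_0 = \int_0^{q_1} F$, and Bayes-plausibility of $G_y^\ell$ forces equality throughout the chain.

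To upgrade this to uniqueness, I would trace back each equality: they force $G_y^\ell = \bar{G}_y^\ell = \bar{G}^*_0$ on $[q_1, 1]$, so on the subinterval $[q_1, 1/\beta^*_0]$ the piecewise linear $\bar{G}_y^\ell$ must coincide with the line $\beta^*_0 q$. The three linear segments of $\bar{G}_y^\ell$ are $[0, \ell]$ with slope $\beta + y/\ell$, $[\ell, (1-y)/\beta]$ with slope $\beta$, and $[(1-y)/\beta, 1]$ constant at $1$; the only way the line $\beta^*_0 q$ through the origin can agree with $\bar{G}_y^\ell$ on a subinterval is via the middle segment, forcing $\beta q + y = \beta^*_0 q$ identically on an interval and hence $\beta = \beta^*_0$ and $y = 0$. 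The mean constraint then pins down $\ell = \ell^*_0$, so $G_y^\ell = G^*_0$, contradicting the hypothesis. The main obstacle is this segment-matching step, in particular ruling out that the first segment alone covers $[q_1, 1/\beta^*_0]$—which would require $\ell \geq 1/\beta^*_0$—by exploiting $\ell \leq \pi$ together with $1/\beta^*_0 \geq \pi$ (since $\beta^*_0 \leq \beta(\pi, 0) = 1/\pi$). A degenerate edge case is $q_1 = 1$ (equivalently $\beta^*_0 = 1$), which requires a short separate argument: any concave $\bar{G}_y^\ell$ with $\bar{G}_y^\ell(0) = 0$ and $\bar{G}_y^\ell(1) = 1$ lies weakly above its chord, giving $\bar{G}_y^\ell(r^*) \geq r^*$ directly.
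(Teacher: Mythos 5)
Your proof takes the same overall route as the paper's: assume a competing DTU does weakly better, show it must lie weakly below $G^*_0$ on $[q_1, 1]$, integrate, flip using the common mean, and invoke the binding of the integral constraint at $q_1$ (Lemma~\ref{lm:contEll}) to contradict Bayes-plausibility. That is exactly the paper's argument for Corollary~\ref{cor:optLowR}.

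Where you diverge is in two places, one for the better and one that introduces a gap. First, your ``concavity-through-the-origin'' derivation of $\bar{G}_y^\ell(q) \leq \beta^*_0\,q$ on $[r^*,1]$ is actually cleaner than the paper's phrasing. The paper asserts that ``$G^*_0$ upper-bounds $G$ after $r^*$,'' which is slightly overstated: on $[\max(r^*,\ell), \ell^*_0)$ one can have $G(q) > 0 = G^*_0(q)$. What the paper (and you) really need is the bound only on $[q_1, 1]$, and your chord-slope argument delivers $G_y^\ell \leq \bar G_y^\ell \leq \bar G^*_0 = G^*_0$ there directly, sidestepping the paper's imprecision.

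Second, and this is where a real gap appears, the uniqueness step. The paper observes that any competing DTU $G \neq G^*_0$ must have \emph{strictly} smaller slope $\beta < \beta^*_0$ (a short case check on the concavification at $r^*$), so the inequality $\int_{q_1}^1 G < \int_{q_1}^1 G^*_0$ is already strict on a positive-measure subinterval of $(q_1, \min\{1/\beta^*_0, (1-y)/\beta\})$, and the contradiction is immediate — no back-tracing required. You instead keep all inequalities weak and then trace equalities forward to force $\bar G_y^\ell(q) = \beta^*_0\,q$ on $[q_1, 1/\beta^*_0]$, which you resolve via segment matching. That matching argument has an unresolved edge case: to rule out the first segment alone covering $[q_1, 1/\beta^*_0]$ you invoke $\ell \leq \pi$ and $1/\beta^*_0 \geq \pi$, but these are \emph{weak} inequalities and are compatible with $\ell = \pi = 1/\beta^*_0$. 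That configuration corresponds to $\ell^*_0 = \pi$, i.e., $G^*_0$ is the Dirac at $\pi$, in which case the competitor is also the Dirac at $\pi$ (only the $(y,\beta)$ parameterization is non-unique there), so the conclusion survives — but your stated bound does not by itself exclude the case, and you would need to say this explicitly. The paper's route via strict slope comparison avoids this entirely, and I would recommend adopting it: once you note $\beta < \beta^*_0$ strictly, the strict integral inequality is immediate and the segment-matching machinery (and the $q_1 = 1$ case, which Lemma~\ref{lm:contEll} already rules out for $y = 0$) becomes unnecessary.
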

\begin{proof}
	The proof is by contradiction, and resembles the geometric proof of Proposition \ref{prop:bin} for the binary-state setting. Fix $r^*$ and assume some other DTU $G$ does weakly better than $G^{\beta(0)}_0$ for Sender. It must therefore have a smaller slope than $G^*_0$: the intercept of $G$ is larger than that of $G^{\beta(0)}_0$, and $G$ must intersect the horizontal line $y = 1$ at a larger value of $q$ than $G^{\beta(0)}_0$ or the concavification of $G$ would be everywhere above that of $G^{\beta(0)}_0$. Because of its larger slope, $G^{\beta(0)}_0$ upper-bounds $G$ after $r^*$ (where $G$ lies weakly below $G^{\beta(0)}_0$) and thus 
	\begin{equation*}
		\begin{split}
			& \int_{r^*}^1 G(q) \, dq < \int_{r^*}^1 G^{\beta{0}}_0(q) \, dq 
			\\& \Rightarrow 
			\int_{q_i(\beta(0), 0)}^1 G(q) \, dq < \int_{q_i(\beta(0), 0)}^1 G^{\beta(0)}_0(q) \, dq
			\\& \Rightarrow \int_0^{q_i(\beta(0), 0)} G(q) \, dq > \int_0^{q_i(\beta(0), 0)} G^{\beta(0)}_0(q) \, dq = \int_0^{q_i(\beta(0), 0)} F(q) \, dq.
		\end{split}
	\end{equation*}
	The inequality in the first line is strict because $F(q_1(\beta(0), 0)) < 1$, so $r^*$ is not in the upper-truncated region of $G$ and there is some strict difference between $G^{\beta(0)}_0$ and $G$ captured in the integral. The first implication follows from the bound on $r^*$. The inequality in the third line is because all DTUs have equal means, so 
	\begin{equation*}
		\begin{split}
			1 - \pi & = \int_0^{1} G(q) \, dq = \int_0^{q_i(\beta(0),0)} G(q) \, dq + \int_{q_i(\beta(0),0)}^1 G(q) \, dq 
			\\& = \int_0^1 G^{\beta(0)}_0(q) \, dq = \int_0^{q_i(\beta(0),0)} G^{\beta(0)}_0(q) \, dq + \int_{q_i(\beta(0),0)}^1 G^{\beta(0)}_0(q) \, dq.
		\end{split}
	\end{equation*}
	The equality in the third line is by Lemma \ref{lm:contEll}, since by Lemma \ref{lm:minSlope} the DTU $G^{\beta(0)}_0$ has minimal slope among Bayes-plausible DTUs with intercept $0$.
\end{proof}

Using the characterization of Lemma \ref{lm:contEll}, I now prove a sufficient condition on $F$ for $U_{r^*}$, the set of utilities attained by $y$-optimal DTUs, to be compact, and thus for Sender to have a well-defined overall-optimal DTU:
\begin{lemma}
	\label{lm:existOpt}
	Let $r^* \in [\pi, 1]$ and $f(1) > 0$. Then Sender's maximum utility over all double-truncated uniform distributions is well-defined, and is attained by a double-truncated uniform distribution $G^*$.
\end{lemma}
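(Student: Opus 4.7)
The plan is to reduce the optimization to a one-parameter family of DTUs indexed by intercept $y$, establish continuity of Sender's utility on $[0, 1)$ and its limiting behavior as $y \to 1^-$, and then invoke the Extreme Value Theorem on the compactified interval $[0, 1]$. By Lemma~\ref{lm:minSlope}, because $r^* \in [\pi, 1]$, each $y$-optimal DTU coincides with the slope-minimizing DTU $G^{sm}_y$; writing $\ell^*_y$ and $\beta^*_y = \beta(\ell^*_y, y)$ for its lower truncation and slope, and noting $\ell^*_y \leq \pi \leq r^*$, Sender's utility reduces to $u^*(y) = \max\{0, 1 - y - \beta^*_y r^*\}$. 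It therefore suffices to show that $u^*$ attains its maximum on $[0, 1)$.

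For continuity of $u^*$ on $[0, 1)$, I would appeal to Lemma~\ref{lm:contEll}: $\ell^*_y$ equals either the lower bound $\ell^\text{min}_y$ (giving $\beta^*_y = 1 - y$) or the smallest root of $v(q_1(\ell, y), y) = 0$. The defining equation for $\ell^\text{min}_y$ is a smooth quadratic in $\ell$ with $y$ as parameter, yielding continuous dependence in the boundary regime. In the interior regime, the pair $(\ell^*_y, q_1(\ell^*_y, y))$ is jointly determined by $v = 0$ together with the intersection equation $y + \beta(\ell, y) q = F(q)$; the Implicit Function Theorem, combined with the hypothesis $f(1) > 0$ to guarantee transversality by ruling out tangencies between the line $L_y$ and $F$ arbitrarily near $q = 1$, yields continuous $\ell^*_y$. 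Continuity across the regime switch follows because both characterizations agree at their common boundary, so $u^*$ is continuous on all of $[0, 1)$.

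For the limit as $y \to 1^-$, the quadratic for $\ell^\text{min}_y$ forces $\ell^\text{min}_y \to \pi$, and $\ell^*_y \in [\ell^\text{min}_y, \pi]$ forces $\ell^*_y \to \pi$. Combined with the a priori bound $\beta^*_y \leq (1-y)/\pi \to 0$, the DTU $G^{sm}_y$ concentrates toward $\delta_\pi$, and with $r^* \geq \pi$ under tie-breaking against Sender, $G^{sm}_y(r^*) = y + \beta^*_y r^* \to 1$, giving $u^*(y) \to 0$. Extending $u^*$ continuously to $[0, 1]$ by $u^*(1) := 0$ produces a continuous function on a compact interval, and the Extreme Value Theorem yields a maximizer $y^\dagger$. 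If the maximum is strictly positive it lies in $[0, 1)$ and $G^{sm}_{y^\dagger}$ is the desired optimal DTU; if the maximum equals zero, $u^*$ vanishes on $[0, 1)$ and the degenerate DTU $\delta_\pi$ (always Bayes-plausible, since $\int_0^x \delta_\pi(q)\,dq \leq \int_0^x F(q)\,dq$ for all $x$, and giving utility zero for $r^* \geq \pi$ under tie-breaking against Sender) attains the supremum.

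The main obstacle is the first step: establishing continuous variation of $\ell^*_y$ across the regime transitions of Lemma~\ref{lm:contEll} and verifying the transversality condition needed for the IFT at interior intersections. This is precisely where the hypothesis $f(1) > 0$ enters, preventing the second intersection $q_2(\ell, y)$ between $L_y$ and $F$ from escaping to $1$ and ensuring the Jacobian of the system defining $(\ell^*_y, q_1)$ is nonsingular at the relevant points; without such a condition one might only obtain upper semicontinuity of $u^*$, which would still suffice for existence but complicates the characterization.
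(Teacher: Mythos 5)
Your proof takes essentially the same route through the first two stages (reduction to $y$-optimal DTUs via Lemma~\ref{lm:minSlope}, continuity of $\ell^*_y$ via Lemma~\ref{lm:contEll} and the Implicit Function Theorem), but the final step is genuinely different. The paper uses $f(1) > 0$ to exhibit a threshold $\bar{y} < 1$ such that for $y \geq \bar{y}$ the minimal-slope DTU with intercept $y$ is already Bayes-plausible and dominated by the one at $\bar{y}$, so the optimal intercept lies in the compact set $[0, \bar{y}]$. You instead compactify $[0, 1)$ at the other end: you show directly that $\ell^{\text{min}}_y \to \pi$, hence $\ell^*_y \to \pi$ and $\beta^*_y \to 0$, forcing $u^*(y) \to 0$ as $y \to 1^-$, and then apply the Extreme Value Theorem on $[0, 1]$ with $u^*(1) := 0$. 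Your limiting computation is correct (conjugating the quadratic for $\beta$ gives $1 - y = 2(\pi - \ell^{\text{min}}_y) + o(1)$, forcing $\ell^{\text{min}}_y \to \pi$), and once continuity on $[0, 1)$ is granted the compactification argument is sound. The noteworthy consequence is that your version of the final step never invokes $f(1) > 0$, whereas the paper's does.

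That observation exposes a mismatch in how you account for the hypothesis $f(1) > 0$. You attribute it to a transversality requirement for the IFT---ruling out tangencies between $L_y$ and $F$ near $q = 1$---but the paper uses $f(1) > 0$ only in the compactness step and does not invoke it when applying the IFT, so your attribution does not track the paper's logic. In fact both proofs share the same unstated assumption: the IFT application that makes $\ell^*_y$ continuous in $y$ implicitly presumes nondegeneracy of the Jacobian of $v(q_1(\ell, y), \ell, y) = 0$ in $\ell$, which neither you nor the paper verifies; your suggestion that $f(1) > 0$ delivers this is plausible-sounding but not argued. Since your compactification sidesteps the paper's use of $f(1) > 0$ entirely, you should either establish that the hypothesis is genuinely needed for continuity (which would be a stronger claim than the paper makes), or flag that your argument appears to prove the lemma under a weaker hypothesis---neither of which you do cleanly. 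Finally, a minor point: in the degenerate case $u^* \equiv 0$ you invoke $\delta_\pi$ as the maximizing DTU, but it is cleaner to simply note that any $y$-optimal DTU already attains utility zero, since $\delta_\pi$ sits at the boundary of the paper's DTU parametrization (zero-width uniform portion) and need not be separately justified.
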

\begin{proof}
	I first show that the optimal lower truncation length $\ell^*_y$ is continuous in $y$ at any $y \in [0, 1)$. Given the restriction on $r^*$, Sender's utility from a $y$-optimal DTU $G^{\beta(y)}_y$ is given by $1 - G^{\beta(y)}_y = 1 - (\beta(y), y) \, r^* + y)$. Thus continuity of $\ell^*_y$ in $y$ ensures that Sender's maximum utility over DTUs with intercept $y$ is continuous in $y$. I can then provide sufficient conditions for the intercept of a potential overall-optimal DTU to lie in a compact set. The continuity condition implies that $U_{r^*}$ is compact, so that it contains its closure. Therefore there is some DTU that attains Sender's supremum utility over all DTUs.
	
	To show continuity, I first work with $y \in [0, 1 - 2\pi)$, where the argument is most straightforward. Since in that range $v(q_1(\ell^*_y, y), \ell^*_y) = 0$ by Lemma \ref{lm:contEll}, and the proof of that lemma shows that $\ell^*_y$ is the minimal $\ell$ where the property holds, I can apply the Implicit Function Theorem to write $\ell^*_y$ as a continuous function of $y$.
	
	When $y \in (1 - 2\pi, 1)$, then Lemma \ref{lm:contEll} implies that either $v(q_1(\ell^*_y, y), \ell^*_y) = 0$ or $\ell^*_y = \ell^\text{min}_y$. In particular, $\ell^*_y$ is either the minimum permissible $\ell$ or, if that choice does not deliver a Bayes-plausible DTU, the minimum $\ell$ satisfying $v(q_1(\ell, y), \ell) = 0$. Because both $\ell^\text{min}_y$ and the minimal $\ell$ satisfying $v(q_1(\ell, y), \ell) = 0$ are continuous in $y$, the minimum over those two choices is also continuous in $y$. Thus $\ell^*_y$ is continuous in $y$ for $y \in (1 - 2\pi, 1)$. 
	
	All that remains is to show that $\ell^*_y$ is continuous in $y$ at $y = 1 - 2\pi$. The continuity of $\ell^\text{min}_y$ in $y$ ensures that the function
	\begin{equation*}
		u(y) = \beta(\ell^\text{min}_y) \, \ell^\text{min}_y + y - F(\ell^\text{min}_y)
	\end{equation*}
	is also continuous in $y$. Because $u(y) > 0$ for any $y \in [0, 1 - 2\pi]$, as shown in the proof of why $q_1(\ell) > \ell$ for small enough $\ell$, it must be that for $\delta > 0$ sufficiently small, $u(y') > 0$ for any $y' \in (1 - 2\pi, 1 - 2\pi + \delta)$. Since the line $\beta(0, 1 - 2\pi) \, q + (1 - 2\pi)$ intersects $F$ twice in $(0, 1]$, it must therefore be that for $\delta > 0$ sufficiently small and $y' \in (1 - 2\pi, 1 - 2\pi + \delta)$, so do the lines $\beta(\ell^\text{min}_{y'}, 1 - 2\pi) \, q + (1 - 2\pi)$, $\beta(\ell^\text{min}_{y'}, y') \, q + (1 - 2\pi)$, and $\beta(\ell^\text{min}_{y'}, y') \, q + y'$. Because the last intersects $F$ twice in $(0, 1]$, and both intersections occur at values $q > \ell^\text{min}_{y'}$, the proof of Lemma \ref{lm:contEll} shows that $v(q_1(\ell^*_{y'}, 0), \ell^*_{y'}) = 0$ and $\ell^*_{y'}$ is the minimal value of $\ell$ such that this property holds. Therefore, by the continuity of the minimal value of $\ell$ satisfying this equation, $\ell^*$ is continuous in $y$ at $y = 1 - 2\pi$.
	
	Having shown continuity of $\ell^*_y$ in $y$, I use the second part of the lemma statement to show that the set of possibly overall-optimal DTU intercepts is compact. Given that $f(1) > 0$, there must be $\bar{y} \in (0, 1)$ such that $1 - \bar{y} > f(1)$. Then for any intercept $y \geq \bar{y}$, the DTU with minimal permissible slope lies above $F$ on $(0, 1)$, and is therefore Bayes-plausible. Since any DTU with intercept $y > \bar{y}$ surely lies above the slope-minimal DTU with intercept $\bar{y}$ for all $q \in [\pi, 1]$, no DTU with intercept in $(\bar{y}, 1)$ can be optimal among all DTUs. Thus the intercept of the overall-optimal DTU lies in the compact set $[0, \bar{y}]$.
\end{proof}

Note that only the last step of the proof relies on $f(1) > 0$; if this condition is violated, then (as in the statement of Proposition \ref{prop:contLarge} in the text) it may be that no DTU attains Sender's supremum utility, but there exists a limiting sequence of DTUs converging to that value so no distribution delivers strictly higher utility than all DTUs.

\subsection*{B5: Optimal Posterior Distributions}
\label{subsec:b5}
Having established properties of overall-optimal DTUs, I can now jointly prove the optimal distribution portions of Propositions \ref{prop:contSmall} and \ref{prop:contLarge}:
\begin{proof}
	Let $H$ be a candidate optimal distribution of posterior means. I approximate $\bar{H}$, the concavification of $H$, by a tangent at $r^*$, which I call $L(q)$; let $L(0) = y_L \in [0, 1)$ be its intercept. Consider the $y_L$-optimal DTU $G^{\beta(y_L)}_{y_L}$. In order for $H$ to do at least as well for Sender as $G^{\beta(y)}_{y_L}$, by Lemma \ref{lm:cav} it must be that
	\begin{equation*}
		1 - H(r^*) \geq 1- \bar{H}(r^*) \geq 1 - G^{\beta(y_L)}_{y_L}(r^*).
	\end{equation*}
	Thus $L$ must have a weakly smaller slope than $G^{\beta(y_L)}_{y_L}$; otherwise $L(r^*) > G^\beta(y_L)_{y_L}(r^*)$ and the above inequality is violated. 
	
	If $y_L \in [0, 1 - 2\pi]$, then for any slope $\beta \in ((1 - y_L)^2/(2\pi), \beta(y_L)]$ there is a DTU with that slope and intercept $y_L$. If instead $y_L \in (1 - 2\pi, 1)$, then for any slope $\beta \in [1 - y_L, \beta(y_L)]$ there is a DTU with that slope and intercept $y_L$. In the first case, the slope of $L$ cannot lie below that interval or it would have a weakly smaller slope than the UTU with intercept $y_L$; then the argument of Proposition \ref{prop:bin} applies and $H$ is not Bayes-plausible. In the second case, $L$ must have a slope weakly greater than the lowest-slope DTU with intercept $y_L$, or it would fail to pass through $(1, 1)$, and therefore so would $\bar{H}$ and $H$. Thus there is a DTU $G^L$ with the same slope as $L$.
	
	Let $r^* \in (0, q_i(\beta(0), 0)]$. By Corollary \ref{cor:optLowR}, if $G^L \neq G^{\beta(0)}_0$, then because $G^L(r^*) \leq G^{\beta(0)}_0(r^*)$, $G_L$ is not Bayes-plausible. If instead $r^* \in [\pi, 1)$ and $G^L$ has a strictly smaller slope than $G^{\beta(y_L)}_{y_L}$, then by Lemma \ref{lm:minSlope}, $G^L$ is not Bayes-plausible. 
	
	In either case, given that $G^L$ violates Bayes-plausibility, $H$ must violate it as well. Because $G^L$ upper-bounds $\bar{H}$ beyond $\ell$, it must be that
	\begin{equation*}
		\int_q^1 H(t) dt \leq \int_q^1 \bar{H}(t) dt \leq \int_q^1 G^L(t) dt
	\end{equation*}
	for any $q \in [\ell, 1]$. Since $G^L$ violates Bayes-plausibility, there is some $q_v \in [0, 1]$ where
	$$\int_0^{q_v} G^L(t) dt > \int_0^{q_v} F(t) dt,$$
	and since the left-hand side equals 0 for any $q \in [0, \ell)$, it must be that $q_v \in [\ell, 1]$. Then because $H$ and $G^L$ have the same mean,
	\begin{equation*}
		\begin{split}
			& \int_0^1 H(t) dt = \int_0^1 G^L(t) dt = 1 - \pi
			\\& \Rightarrow \int_0^{q_v} H(t) dt + \int_{q_v}^1 H(t) dt = \int_0^{q_v} H(t) dt + \int_{q_v}^1 H(t) dt
			\\& \Rightarrow \int_0^{q_v} H(t) dt \geq \int_0^{q_v} G^L(t) dt > \int_0^{q_v} F(t) dt,
		\end{split}
	\end{equation*}
	where the third line follows from the earlier upper bound on the integral of $H$. Therefore $H$ violates Bayes-plausibility and is not a valid distribution.
	
	If $G^L$ has the same slope as $G^{\beta(y_L)}_{y_L}$, then by construction $H$ gives Sender the same utility as $G^{\beta(y_L)}_{y_L}$. Thus if there is a DTU that delivers Sender a strictly higher utility than $G^{\beta(y_L)}_{y_L}$, then clearly $H$ is not optimal overall. If there is no such DTU, then $G^{\beta(y_L)}_{y_L}$ is optimal among all DTUs and $H$ also attains Sender's maxmin utility.
\end{proof}

Finally, I prove the unique concavification portion of Proposition \ref{prop:contSmall}:
\begin{proof}
	By the proof for optimal distributions above, the slope of $\bar{H}$ at $r^*$ equals that of $G^{\beta(0)}_0$. Because $\bar{G}^{\beta(0)}_0$ does not have a kink at $\ell^*_0$, it upper-bounds $H$ on the whole interval $[0, 1]$ instead of just on $[\ell, 1]$ as in that proof. If $H < G^{\beta(0)}_0$ on any measurable subset of $[r^*, 1]$ the proof of Corollary \ref{cor:optLowR} shows that $H$ violates the Bayes-plausibility integral constraint at $q_i(\beta(0), 0)$. 
	
	If $\ell^*_0 \leq r^*$, it is therefore true that $\bar{H} = G^{\beta(0)}_0 = \bar{G}^{\beta(0)}_0$ on $[r^*, 1]$. Furthermore, $\bar{G}^{\beta(0)}_0$ upper-bounds $\bar{H}$ on $[0, r^*]$ and $\bar{G}^{\beta(0)}_0(0) = \bar{H}(0) = 0$. Because $\bar{G}^{\beta(0)}_0$ is linear on $[0, r^*]$ (i.e., it has no kink at $\ell^*_0$) there is no smaller concave function that takes the same values at $q = 0$ and $q = r^*$; thus $\bar{H} = \bar{G}^{\beta(0)}_0$ on $[0, r^*]$ as well.
	
	If instead $\ell^*_0 > r^*$, then it is now the case that $\bar{H} = G^{\beta(0)}_0 = \bar{G}^{\beta(0)}_0$ on $[\ell^*_0, 0]$, since that is the range where the latter equality holds. However, the upper-bounding relationship still holds on $[0, \ell^*_0]$, and thus the argument above still applies and $\bar{H} = \bar{G}^{\beta(0)}_0$ on $[0, \ell^*_0]$.
\end{proof}

\subsection*{B6: The Finite-State Case}
\label{subsec:b6}
Note that the proofs and results of Lemmas \ref{lm:intOpt} and \ref{lm:minSlope} go through unchanged. Thus I can in fact easily prove an analogue for Proposition \ref{prop:contLarge} by following the proof in Appendix \hyperref[subsec:b5]{B5}. In particular, a candidate optimal distribution can be approximated by a DTU $G^L$. If $G^L$ has a strictly smaller slope than the $y$-optimal DTU with the same intercept $y$, then by assuming $r^* \in [\pi, 1]$ and following Lemma \ref{lm:minSlope}, it cannot be Bayes-plausible. Given that $G^L$ is not Bayes-plausible, neither is the candidate optimal distribution. Since there is a $y$-optimal DTU for any possible $y$, no distribution can give Sender strictly higher utility than all DTUs.

To obtain a tighter characterization of which DTU is optimal in this setting, I can prove an analogue of Lemma \ref{lm:contEll}, showing where the integral constraint binds for the $y$-optimal DTU:
\begin{lemma}
	\label{lm:yOptimalFiniteN}
	Let $y = 0$ or $r^* \in [\pi, 1]$ so that the slope-minimizing Bayes-plausible DTU is also Sender's optimal DTU. Fix the value of $y$, and let $q_i$ be the location of the $i$th interior atom of the prior $F$. Then either the optimal lower truncation length $\ell^*_y$ is equal to the minimum lower truncation length $\ell^\text{min}_y$, or for at least one $i \in \left\{1,...,N - 2\right\}$ it is true that
	\begin{equation*}
		\lim_{x \rightarrow q_i^-} \int_0^x F(q) \, dq - \int_0^x G_y^\beta = 0.
	\end{equation*}
\end{lemma}
\begin{proof}
	 For completeness, define $q_0 = 0$.
	
	The proof is algorithmic; the algorithm for finding the optimal DTU is as follows.
	\begin{enumerate}
		\item Initialize $\beta$ as the minimal feasible slope for a DTU with intecept $y$.
		\item For $i \in 1,...,N-2$:
		\begin{enumerate}
			\item Check whether the integral constraint is satisfied at the left limit of $q_i$. That is, whether
			\begin{equation*}
				\lim_{x \rightarrow q_i^-} \int_0^x F(q) \, dq - \int_0^x G_y^\beta \geq 0.
			\end{equation*}
			\item If the constraint is satisfied, exit.
			\item Else, increase $\beta$ until the constraint binds exactly.\footnote{By the proof of Lemma \ref{lm:intOpt}, the difference of integrals is continuous and monotonically increasing in $\beta$, so there will be exactly one value where the constraint binds.}
		\end{enumerate}
		\item Return $\beta = \beta(y)$, the $y$-optimal slope.
	\end{enumerate}
	In step (b), if the integral constraint is satisfied at the left limit of $q_i$, it must be satisfied everywhere in $[q_{i - 1}, q_i)$, since in that interval $F$ is constant but $G_y^\beta$ is weakly decreasing. The algorithm first finds the minimal value $\beta$ where $G_y^\beta$ satisfies the integral constraint in $[0, q_1)$, then proceeds across subsequent intervals, increasing $\beta$ if necessary to ensure the integral constraint is satisfied. Finally, the integral constraint is automatically satisfied in $[q_{N - 2}, 1]$ because $G_y^\beta$ has the appropriate mean.
	
	Thus either the initial value of $\beta$ satisfies the integral constraint for all $i$, in which case that minimal feasible slope is $y$-optimal, or the constraint is satisfied for at least one $i$, giving the result in the lemma.
\end{proof}
As in the continuous-state case, the integral constraint binds at a finite and possibly empty set of points for each $y$-optimal DTU. In the continuous-state case, this set was guaranteed to be nonempty for all $y \in [0, 1 - 2\pi]$; however if $F(0) > 0$ in this finite-support setting, the set may be empty for even $G_y^\beta(0)$, the $0$-optimal DTU. In the case where there does exist a minimal $q_i$\textemdash call it $q_\text{min}$\textemdash where the integral constraint binds for $G_0^{\beta(0)}$ (a fact which depends on the specification of $F$), there is a natural analogue of Corollary \ref{cor:optLowR}: if $r^* \in [0, q_\text{min}]$, then $G_0^{\beta(0)}$ is Sender's overall-optimal DTU. The proof exactly parallels that of the original corollary. Indeed, the analogue of Proposition \ref{prop:contSmall} also follows, since the proof of optimality in Appendix \hyperref[subsec:b5]{B5} then goes through in the same way.

\newpage

\section*{Appendix C: A Numerical Approach to the Continuous-State Setting}
\label{sec:c}
\subsection*{C1: Summary of Numerical Results}
\label{subsec:c1}
Propositions \ref{prop:contSmall} and \ref{prop:contLarge} leave open the optimal distribution of posterior means when the mean Receiver type $r^*$ lies in $\pi \in (q_i(\beta(0), 0), \pi)$. While the value $q_i(\beta(0), 0)$ is well-defined for a given prior distribution, a closed-form solution may not exist. However, fixing a prior distribution, I can use a two-step solution algorithm to numerically compute $q_i(\beta(0), 0)$ and show qualitatively how the size of the intermediate interval changes with various properties of the prior distribution.

Informally, given a prior $F$ with mean $\pi$, the first step is to find the $0$-optimal slope $\beta(0)$. To check Bayes-plausibility, I use the simplified integral constraint from Lemma \ref{lm:contEll}. Starting with the minimum $\beta$, I increase $\beta$ only if the constraint is violated and stop when it binds. The second step checks for intersections between the $0$-optimal DTU and $F$; by definition the smallest interior intersection is $q_i(\beta(0), 0)$. A full formal description is in Appendix \hyperref[subsec:c2]{C2}. 

I briefly discuss some intuition for the results of the numerical computation below. A reader interested in further detail may refer to Appendix \hyperref[subsec:c3]{C3} for thorough figures showing the output of the algorithm at various parameter values, or to Appendix \hyperref[subsec:c4]{C4} for a detailed exposition of those figures. Throughout this section $\mu$ and $\sigma$ refer to the mean and standard deviation of the generating normal distribution $N(\mu, \sigma^2)$ while $\pi$ refers to the mean of the prior $F$, i.e., $N(\mu, \sigma^2)$ truncated in $[0, 1]$. 

For fixed $\mu$, the $0$-optimal slope $\beta(0)$ and the $0$-optimal lower truncation length $\ell(0)$ are decreasing in $\sigma$. A smaller slope is better for Sender, but may be ruled out by the integral constraint; increasing $\sigma$ means the prior cdf increases less steeply, so the integral constraint allows Sender's chosen distribution to increase less steeply as well. 

For $\mu < 1/2$ (the midpoint of the truncation interval), the key value $q_i(\beta(0), 0)$ has an inverse-U-shaped relationship with $\sigma$, while for $\mu > 1/2$ it is increasing in $\sigma$. Increasing $\sigma$ lowers the slope of the $0$-optimal DTU, which would decrease $q_i(\beta(0), 0)$ if the shape of the prior were unchanged. However, holding fixed a DTU's slope, increasing $\sigma$ spreads out the prior mass and increases $q_i(\beta(0), 0)$. For small $\mu$, either of these effects can dominate. For large $\mu$, most of the prior mass is far enough away from the origin that the second effect dominates.

The prior mean $\pi$ always lies strictly below $q_i(\beta(0), 0)$ for $\sigma$ small enough, and increases with $\sigma$ when $\mu < 1/2$ but decreases with $\sigma$ when $\mu > 1/2$. This behavior is a known property of the truncated normal distribution; in my setting, it implies that there is no gap between Proposition \ref{prop:contSmall} and Proposition \ref{prop:contLarge} whenever $\mu > 1/2$. When $\mu < 1/2$, it implies that eventually $q_i(\beta(0), 0) < \pi$, producing a gap between the results.

\subsection*{C2: A Detailed Algorithm for Computing $q_i(\beta(0), 0)$}
\label{subsec:c2}
In this section I describe in detail the algorithm for computing $q_i(\beta(0), 0)$, as well as some notes on its key steps and the details of implementation.

\begin{enumerate}[label={\arabic*.}]
	\item Compute the $0$-optimal slope $\beta(0)$:
	\begin{enumerate}
		\item Following Appendix \hyperref[subsec:b1]{B1}, the minimal DTU slope when the intercept is $y = 0$ is $\beta_\text{min} = \max \left\{1/(2\pi), 1\right\}$. Initialize the slope $\beta$ at $\beta_\text{min}$.
		\item Define the function 
		\begin{equation*}
			\ell(\beta) = \frac{\sqrt{2\pi \beta - 1}}{\beta},
		\end{equation*}
		the lower truncation length of a DTU with slope $\beta$ and intercept $y = 0$ (as described in Appendix \hyperref[subsec:b1]{B1}).
		\item Given $\beta$, define the function
		\begin{equation*}
			v(q) = \int_0^q F(x) \, dx - \int_0^x G^\beta_{\ell(\beta)} (x) \, dx
		\end{equation*}
		\item Compute $v^* = \min_{q \in (0, 1)} v(q)$ using standard numerical optimization. 
		\item If $v^* < 0$, increase $\beta$ and return to 2(b). Else, return $\beta(0) = \beta$.
	\end{enumerate}
	 \item Find the smallest interior intersection $q_i(\beta(0), 0)$:
	 \begin{enumerate}
	 	\item Given $\beta(0)$, solve $\beta(0) \, q = F(q)$ using a standard numerical solver.
	 	\item Discard the solution $q = 0$; return the smallest remaining solution as the value of $q_i(\beta(0), 0)$. 
	 \end{enumerate}
\end{enumerate}
Step 1(e) is guaranteed to terminate because for $\beta = \beta_\text{max} = 1/\pi$ the DTU will lie weakly above $F$ everywhere after the lower truncation region $[0, \ell(\beta)]$; following the proof of Lemma \ref{lm:contEll}, if this relationship holds then $G^\beta_{\ell(\beta)}$ is Bayes-plausible. Step 2(a) is guaranteed to find an non-zero solution, a result of the same proof.

In practice, step 2 was computed by defining the minimum and maximum values of $\beta$ given $\pi$ (as described in Appendix \hyperref[subsec:b1]{B1}) and considering a 1,000-point grid over that interval. Then, since an analytic expression for the truncated normal does not exist, the integral in step 2(c) was computed numerically over a 10,000-point grid covering $[0, 1]$. The procedure was repeated for prior distributions constructed using $\mu = (0.01,0.02,...0.99)$ and $\sigma = (0.01,0.015,...0.50)$ for the generating normal distribution. Larger values of $\sigma^2$ produced errors; even the maximum precision of the numerical integral was not high enough to accurately complete steps 1(d) and 1(e).

\subsection*{C3: Plotting $0$-Optimal DTUs}
\label{subsec:c3}
Below I include detailed plots of the 0-optimal slope $\beta(0)$, the 0-optimal lower truncation length $\ell(0)$, and the value $q_i(\beta(0), 0)$ used in Proposition \ref{prop:contSmall}. In each figure, the prior is a truncation to $[0, 1]$ of a $N(\mu, \sigma^2)$ distribution. Each panel fixes a value $\mu \in \left\{0.1, 0.25, 0.4, 0.55, 0.7, 0.85\right\}$ and shows the relevant values as functions of the standard deviation $\sigma$.

\clearpage
\begin{figure}[t!]
	\centering
	\includegraphics[width=\hsize]{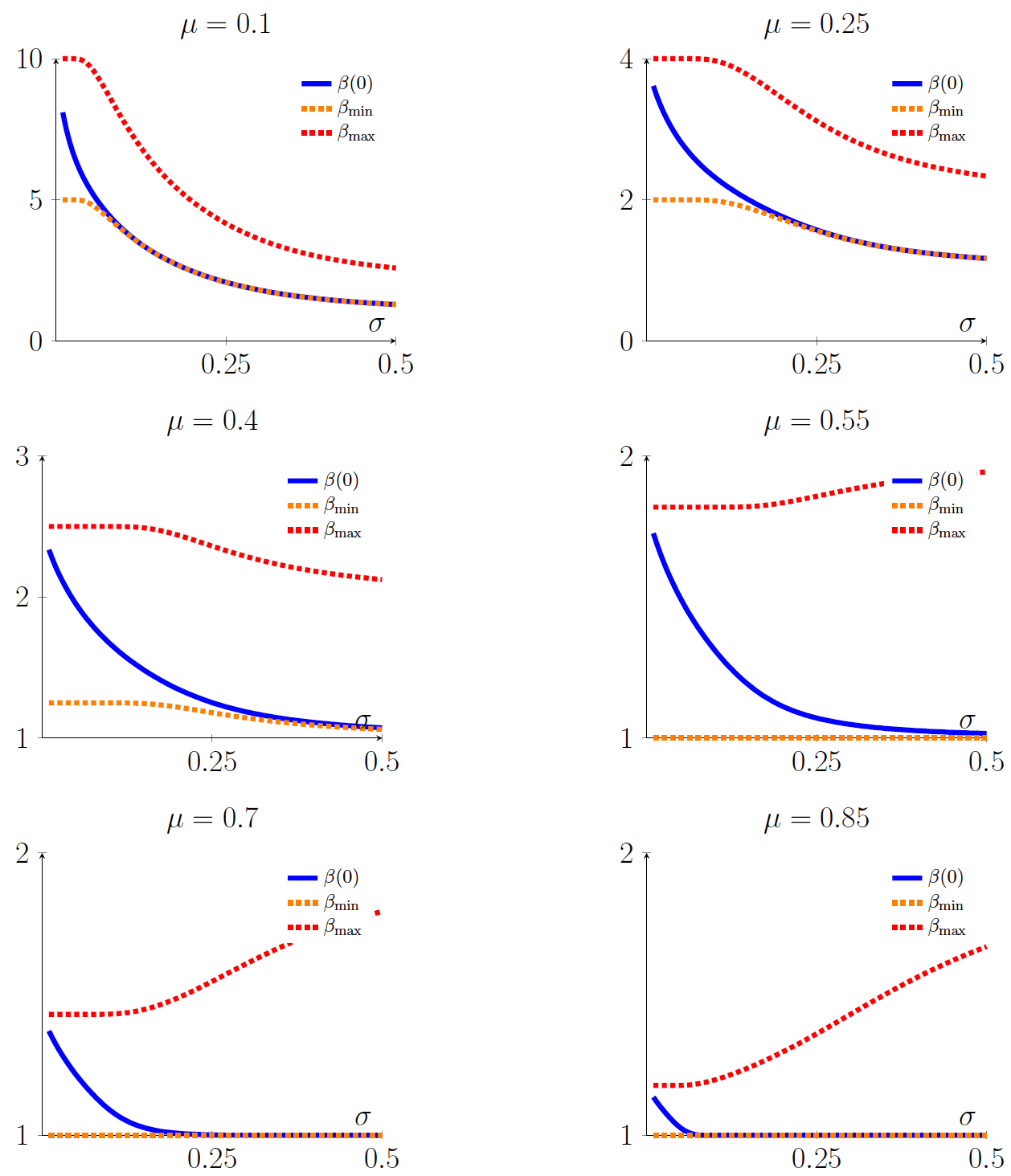}
	\caption{For fixed $\mu$, values of $\beta(0)$ and the upper and lower bounds on $\beta$ as functions of $\sigma$.}
	\label{fig:betaVals}
\end{figure}

\clearpage
\begin{figure}[t!]
	\centering
	\includegraphics[width=\hsize]{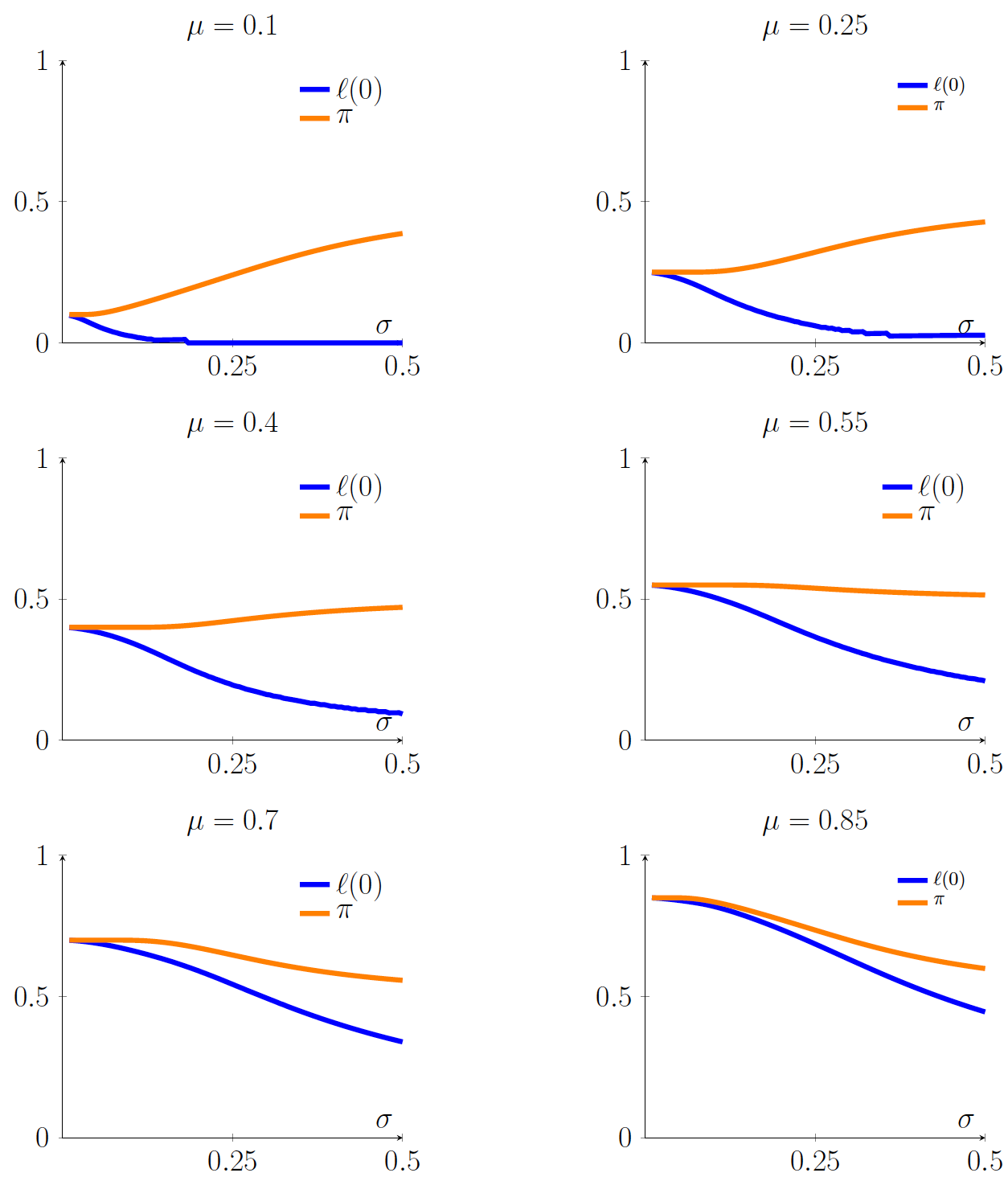}
	\caption{For fixed $\mu$, values of $\ell(0)$ and the upper bound $\ell_\text{max} = \pi$ as functions of $\sigma$.}
	\label{fig:ellVals}
\end{figure}

\clearpage
\begin{figure}[t!]
	\centering
	\includegraphics[width=\hsize]{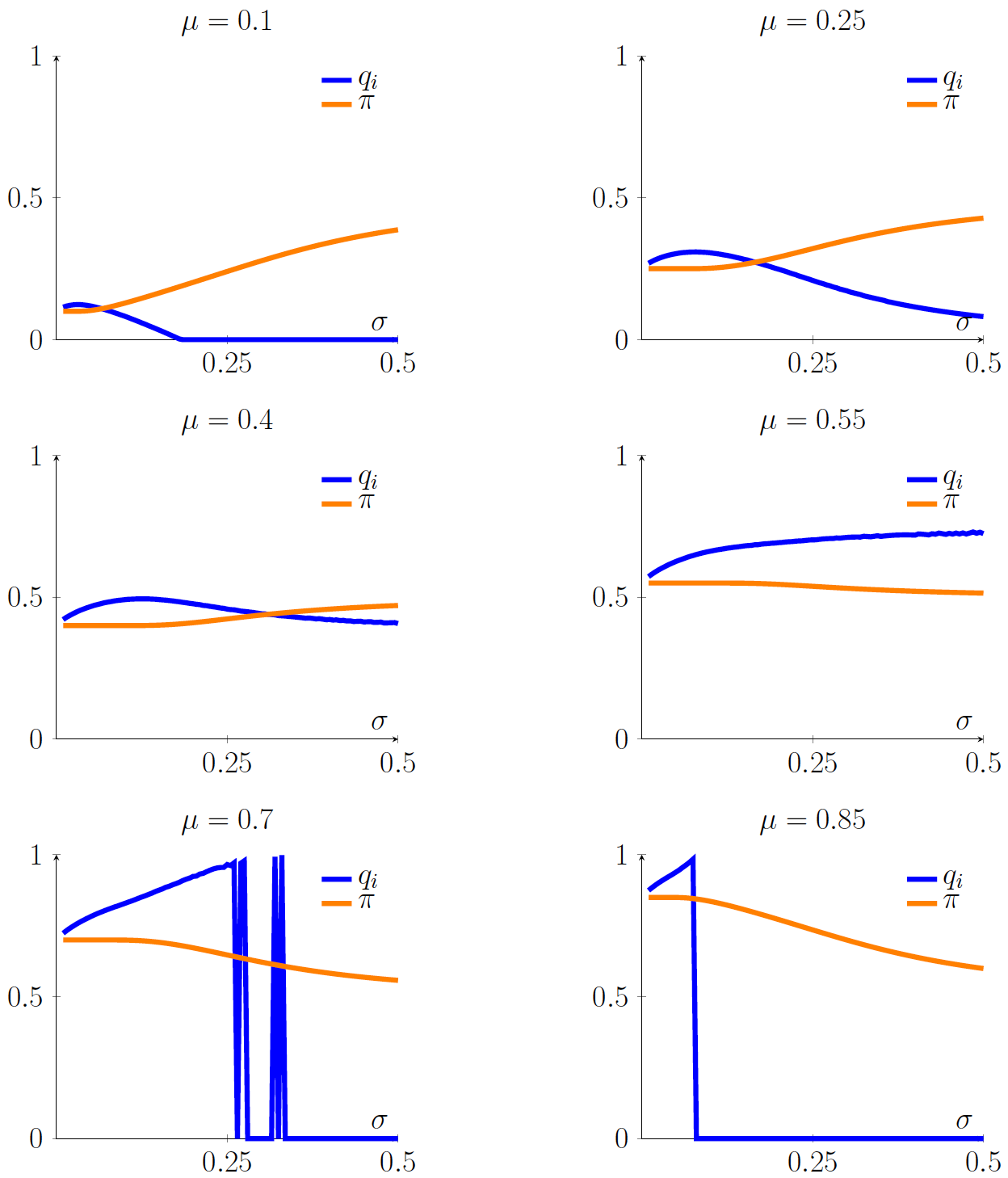}
	\caption{For fixed $\mu$, values of $q_i(\beta(0), 0)$ and $\pi$ as functions of $\sigma$.}
	\label{fig:qVals}
\end{figure}

\clearpage
\subsection*{C4: Discussion of Numerical Results}
\label{subsec:c4}
Figure \ref{fig:betaVals} shows clear patterns in the $0$-optimal choice of $\beta$, $\beta(0)$, across prior parameter values. Fixing $\mu$, $\beta(0)$ is monotonically decreasing in $\sigma$ with a roughly exponential shape. As $\mu$ increases, the range of $\beta(0)$ decreases. Note that for $\sigma$ large, the assumption $f'(0) < 1 - 2\pi$ used in Section \ref{sec:cont} to rule out the binary-state solution may be violated. In this case, Proposition \ref{prop:contSmall} does not apply as $\beta(0) = \beta_\text{min}$ may be a feasible solution. This case occurs in panels 1 and 2 of the figure. A similar phenomenon occurs when $\mu$ is large, so that the prior $F$ is concave (as described in Section \ref{subsec:simpCont}), and appears in panels 5 and 6 of the figure.

The $0$-optimal lower truncation length $\ell(0)$ behaves as expected given the results for $\beta(0)$, and is shown in Figure \ref{fig:ellVals}. For fixed $\mu$, $\ell(0)$ is monotonically decreasing and appears to have a reverse-S shape (concave and then convex). Increasing $\mu$ shifts the curve up and flattens it. As with $\beta(0)$, the results in the first two panels show some instability at high values of $\sigma$ resulting from violations of the assumption $f'(0) < 1 - 2\pi$. In panels 5 and 6, where $\beta(0) = \beta_\text{min}$ at high values of $\sigma$, the lower truncation length tracks directly with the prior mean $\pi$ in order to maintain Bayes-plausibility of the distribution. Figure \ref{fig:ellVals} also shows the prior mean $\pi$ of the truncated normal prior. The effects of truncating the normal distribution are well-known; I only note that $\pi$ increases with $\sigma$ for $\mu < 1/2$ and decreases with $\sigma$ for $\mu > 1/2$ because $\mu = 1/2$ is the midpoint of the truncation interval.

The key value of interest is the smallest interior point of intersection $q_i(\beta(0), 0)$ between the $0$-optimal DTU and the prior distribution; I abbreviate this value to $q_i$ for the remainder of this discussion. Figure \ref{fig:qVals} illustrates some of the complex interactions between $q_i$ and the shape of the prior, as well as some of the difficulties faced in the numerical approach. For small $\mu$, $q_i$ is inverse-U-shaped as a function of $\sigma$. As discussed earlier, $\sigma$ large enough means that a lower truncation region may not be necessary. In this case I set $q_i = 0$ to preserve continuity of $q_i$ in $\sigma$ and to reflect that the Bayes-plausibility constraint does not bind at any interior intersection (it trivially binds at a posterior mean of 0, and binds at $q_i$ when a lower truncation is necessary). This case appears in panel 1. As $\mu$ increases, the curve flattens and moves up, as seen in panels 2 and 3. For $\mu > 1/2$, $q_i$ becomes monotonically increasing in $\sigma$, as shown in panel 4. In panels 5 and 6, because $F$ is convex on $[\varepsilon, 1]$ for small $\varepsilon \geq 0$, all interior intersections between the 0-optimal DTU and $F$ are either close to 0 or close to 1. The numerical algorithm thus becomes unstable and alternates between these two regions (as seen in panel 5) or chooses the default solution of $q_i = 0$ (as seen in panel 6). Nevertheless, the shape of the curve for small $\sigma$ suggests that the trend of monotonically increasing $q_i$ when $\mu > 1/2$ is preserved. 

With respect to the relationship between $q_i$ and $\pi$, which determines the gap between Propositions \ref{prop:contSmall} and \ref{prop:contLarge}, the numerical results show that the cutoff of $\mu = 1/2$ is key. For any $\mu < 1/2$, there exists $\bar{\sigma}$ large enough so that $\pi > q_i$ for all $\sigma > \bar{\sigma}$. However, $\bar{\sigma}$ may be so large that the assumption $f'(0) < 1 - 2\pi$ is violated, in which case $\pi < q_i$ for all valid choices of $\sigma$. For $\mu > 1/2$, $\pi < q_i$ for all $\sigma$; to verify this numerically, I spot-checked values $\sigma \in \left\{1, 10, 100\right\}$ and manually debugged the numerical integration. Thus there is only a gap between Propositions \ref{prop:contSmall} and \ref{prop:contLarge} when $\mu < 1/2$ and $\sigma$ is large enough to exceed $\bar{\sigma}$ but not so large as to violate $f'(0) < 1 - 2\pi$. For example, in panel 1 we can see that for $\sigma \in (0.06, 0.165)$, there is a gap between the two propositions. For $\sigma < 0.06$, $q_i > \pi$ so there is no gap, and for $\sigma > 0.165$, the assumption $f'(0) < 1 - 2\pi$ is violated and the propositions do not apply.

\end{document}